\newtheorem{lemma}{Lemma}
\newtheorem{prop}{Proposition}
\newcommand\given{{\, | \,}}
\newcommand\giventh{{\,;\,}}
\newcommand\col{{\hspace{0.6mm}:\hspace{0.6mm}}}
\newcommand{\1}{\mathbb 1}
\renewcommand{\P}{\mathcal P}
\renewcommand{\S}{\mathcal S}
\newcommand{\T}{\mathcal T}
\newcommand{\R}{\mathcal R}
\newcommand{\sr}[1]{\,{#1}\,} 
\definecolor{green}{rgb}{0,0.5,0}
\title{Markov chain Monte Carlo algorithms with sequential proposals}
\author{Joonha Park\thanks{Email: \texttt{joonhap@bu.edu}}, Yves Atchad\'e}
\affil{Boston University}
\date{}
\begin{document}

\maketitle

\abstract{
  We explore a general framework in Markov chain Monte Carlo (MCMC) sampling where sequential proposals are tried as a candidate for the next state of the Markov chain.
  This sequential-proposal framework can be applied to various existing MCMC methods, including Metropolis-Hastings algorithms using random proposals and methods that use deterministic proposals such as Hamiltonian Monte Carlo (HMC) or the bouncy particle sampler.
  Sequential-proposal MCMC methods construct the same Markov chains as those constructed by the delayed rejection method under certain circumstances.
  In the context of HMC, the sequential-proposal approach has been proposed as extra chance generalized hybrid Monte Carlo (XCGHMC).
  We develop two novel methods in which the trajectories leading to proposals in HMC are automatically tuned to avoid doubling back, as in the No-U-Turn sampler (NUTS).
  The numerical efficiency of these new methods compare favorably to the NUTS.
  We additionally show that the sequential-proposal bouncy particle sampler enables the constructed Markov chain to pass through regions of low target density and thus facilitates better mixing of the chain when the target density is multimodal.
}

\section{Introduction}
Markov chain Monte Carlo (MCMC) methods are widely used to sample from distributions with analytically tractable unnormalized densities.
In this paper, we explore a MCMC framework in which proposals for the next state of the Markov chain are drawn sequentially.
We consider the objective of obtaining samples from a target distribution on a measurable space $(\mathbb X, \mathcal X)$ with density
\[ \bar{\pi}(x) := \frac{\pi(x)}{Z} \]
with respect to a reference measure denoted by $dx$, where $\pi(x)$ denotes an unnormalized density, and $Z$ denotes the corresponding normalizing constant.
MCMC methods construct Markov chains such that, given the current state of the Markov chain $X^{(i)}$, the next state $X^{(i+1)}$ is drawn from a kernel which has the target distribution $\bar\pi$ as its invariant distribution.
The widely used Metropolis-Hastings (MH) strategy constructs a kernel with a specified invariant distribution in the following two steps \citep{metropolis1953equation, hastings1970monte}.
First, a proposal $Y$ is drawn from a proposal kernel, and second, the proposal is accepted as $X^{(i+1)}$ with a certain probability.
When the proposal is not accepted, the next state of the chain is set equal to the current state $X^{(i)}$.
The acceptance probability depends on the target density and the proposal kernel density at $X^{(i)}$ and $Y$ in a way that ensures that $\bar\pi$ is a stationary density of the constructed Markov chain.

The typical size of proposal increments and the mean acceptance probability affect the rate of mixing of the constructed Markov chain and thus the numerical efficiency of the algorithm.
There is often a balance to be made between the size of proposal increments and the mean acceptance probability.
Theoretical studies on this trade-off have been carried out for several widely used algorithms, such as random walk Metropolis \citep{roberts1997weak}, Metropolis adjusted Langevin algorithm (MALA) \citep{roberts1998optimal}, or Hamiltonian Monte Carlo (HMC) \citep{beskos2013optimal}, in an asymptotic scenario where the target density is given by the product of $d$ identical copies of a one dimensional density and where $d$ tends to infinity.
These results suggest that the optimal balance can be made by aiming at a certain value of the mean acceptance probability which depends on the algorithm but not on the target density, provided that the marginal density satisfies some regularity conditions.

Alternative methods to the basic Metropolis-Hastings strategy have been proposed to improve the numerical efficiency beyond the optimal balance between the proposal increment size and the acceptance probability.
The multiple-try Metropolis method by \citet{liu2000multiple} makes multiple proposals given the current state of the Markov chain and select one of them as a candidate for the next state of the Markov chain.
\citet{calderhead2014general} proposed a different algorithm that makes multiple proposals and allows more than one of them to be taken as the samples in the Markov chain.
Since multiple proposals can be made independently in these methods, parallelization can increase computational efficiency.
These methods make a preset number of proposals conditional on the current state in the Markov chain at each iteration.

Developments in various other directions have been made to improve the numerical efficiency of MCMC sampling.
Adaptive MCMC methods use transition kernels that adapt over time using the information about the target distribution provided by the past history of the constructed chain \citep{haario2001adaptive, andrieu2008tutorial}.
The update scheme for the transition kernel is designed to induce a sequence of transition kernels that converges to one that is efficient for the target distribution.
The convergence of the law of the constructed chain and the rate of convergence have been studied under certain sets of conditions \citep{haario2001adaptive, atchade2005adaptive, andrieu2006ergodicity, andrieu2007efficiency, roberts2007coupling, atchade2010limit, atchade2012limit}.
Note however that the performance of an adaptive MCMC algorithm is limited by the efficiencies of the candidate transition kernels.
In a different approach, \citet{goodman2010ensemble} proposed using ensemble samplers that construct Markov chains that are equally efficient for all target distributions that are affine transformations of each other.
These methods draw information about the shape of the target distribution from parallel chains which jointly target the product distribution given by identical copies of the target density.

There also exist a class of methods that address difficulties in sampling from multimodal distributions using local proposals.
Methods in this class include parallel tempering \citep{geyer1991markov, hukushima1996exchange}, simulated tempering \citep{marinari1992simulated}, and the equi-energy sampler \citep{kou2006equi}.
In these methods, the mixing of the constructed Markov chain is aided by a set of other Markov chains that target alternative distributions for which the moves between separated modes happen more frequently.
The equi-energy sampler bears a similarity with the approach of slice sampling, where a new sample is obtained within a randomly chosen level set of the target density \citep{roberts1999convergence, mira2001perfect, neal2003slice}.

In this paper, we explore a novel approach where proposals are drawn sequentially conditional on the previous proposal in each iteration.
The proposal draws continue until a desired number of ``acceptable'' proposals are made, so the total number of proposals is variable.
A key element in this approach is that the decision of acceptance or rejection of proposals are coupled via a single uniform$(0,1)$ random variable drawn at the start of each iteration.
This feature makes a straightforward generalization of the Metropolis-Hastings acceptance or rejection strategy.
The approach is applicable to a wide range of commonly used MCMC algorithms, including ones that use proposal kernels with well defined densities and others that use deterministic proposal maps, such as Hamiltonian Monte Carlo \citep{duane1987hybrid} or the recently proposed bouncy particle sampler \citep{peters2012rejection, bouchard2018bouncy}.
We will demonstrate that the sequential-proposal approach is flexible; it is possible to make various modifications in order to develop methods that possess specific strengths.

The advantage of the sequential-proposal approach can be explained using Peskun-Tierney ordering \citep{peskun1973optimum,tierney1998note,andrieu2019peskun}.
Suppose two transition kernels $P_1$ and $P_2$ defined on $(\mathbb X, \mathcal X)$ are reversible with respect to $\bar\pi$:
\[
\int \mathbb 1_{A\times B}(x,y) P_j (x,dy)\pi(x) dx = \int \1_{B\times A}(x,y) P_j(x,dy)\pi(x)dx, \qquad \forall A,B\sr\in \mathcal X, ~ j\sr=1,2.
\]
The transition kernel $P_1$ is said to dominate $P_2$ off the diagonal if
\[
P_1(x, A\sr\setminus\{x\}) \geq P_2(x, A\sr\setminus\{x\}), \qquad \forall x \sr\in \mathbb X,~ \forall A\sr\in \mathcal X.
\]
For a $\mathcal X$-measurable function $f$ such that $\int f^2(x) \bar\pi(x) dx < \infty$, consider an estimator of $\int f(x)\bar\pi(x)dx$ given by $\hat I(f) := \frac{1}{M}\sum_{i=1}^M f(X^{(i)})$.
Define a scaled asymptotic variance of the estimator with respect to the kernels $P_j$, $j\sr=1,2$ by
\[
v(f,P_j) := \lim_{M\to\infty} M \text{Var}_{P_j}\left( \hat I(f) \right), \qquad j\sr=1,2,
\]
where $X^{(0)}$ is assumed to be drawn from the stationary density $\bar\pi$ and $X^{(i)}$ is drawn from $P_j(X^{(i-1)},\cdot)$, $i\sr\geq 1$.
Provided that $P_1$ dominates $P_2$ off the diagonal, we have $v(f,P_1) \leq v(f,P_2)$ \citep[Theorem 4]{tierney1998note}.
Suppose now we denote by $P_1(x,A\sr\setminus\{x\})$ the conditional probability that $X^{(i)}$ is in $A\sr\setminus\{x\}$ given $X^{(i-1)}\sr= x$ when a sequential-proposal method is used and by $P_2(x,A\sr\setminus\{x\})$ the same conditional probability when a standard MH method is used.
Then $P_1$ dominates $P_2$ off the diagonal, because the sequential-proposal method tries additional proposals when the first proposal is rejected.
Due to Peskun-Tierney ordering, the asymptotic variance of the estimate $\hat I(f)$ from the sequential-proposal method is always smaller than or equal to that from the standard MH.
In fact, a similar argument also motivated the development of the delayed rejection method \citep{tierney1999some, green2001delayed}, which we will show to be equivalent to the sequential-proposal approach under certain circumstances in terms of the law of the constructed Markov chains.

After the initial \textsf{arXiv} posting of this manuscript, we became aware of the extra chance generalized hybrid Monte Carlo (XCGHMC) of \citet{campos2015extra} that develops a method equivalent to the sequential-proposal approach in the context of HMC.
Our work departs from \citet{campos2015extra} in two important ways.
First, the current paper develops sequential-proposal MCMC more broadly, and as a general recipe for improving on MCMC algorithms.
Second, in applying the idea to HMC our emphasis differs from that in \citet{campos2015extra}.
One of the main challenges in using HMC in practice is the tuning of the number of leapfrog jumps.
Motivated by this issue, we use the sequential-proposal MCMC framework to develop two novel HMC methods that automatically tune the lengths of the leapfrog trajectories in such a way that the well-known double-backing issue in HMC is avoided, in the same spirit as the No-U-Turn sampler (NUTS) of \citet{hoffman2014no}.
Our numerical results on a multivariate normal distribution and a real data example show that the efficiencies of these new methods compare favorably to that of the NUTS.

Our paper is organized as follows.
In Section~\ref{sec:spMH} we explain the sequential-proposal approach in the case where proposal kernels have well defined densities.
We call the resulting methods as sequential-proposal Metropolis-Hastings algorithms.
An equivalence between between our approach and the delayed rejection method of \citet{mira2001metropolis} under certain settings is discussed in this section.
In Section~\ref{sec:sppdDescrip}, we explain how the sequential-proposal approach can be applied to a class of MCMC algorithms that use deterministic proposal maps.
Based on this formulation, we develop two variants of the NUTS algorithm in Section~\ref{sec:HMC}.
Section~\ref{sec:conclusion} gives a summarizing conclusion.
Proofs of most theoretical results are given in Appendices~\ref{sec:proof_spMH}--\ref{sec:proof_spNUTS}.
In Appendix~\ref{sec:BPS} we propose a novel discrete time bouncy particle sampler method based on the sequential-proposal approach and demonstrate some desirable numerical properties that enable faster mixing of the Markov chain for multimodal target distributions.

\section{Sequential-proposal Metropolis-Hastings algorithms}\label{sec:spMH}
\subsection{Sequential-proposal Metropolis algorithm}
We will first explain the sequential-proposal approach when the proposal kernel has well defined density with respect to the reference measure of the target density $\bar\pi$.
For a simpler presentation, we will first describe a sequential-proposal Metropolis algorithm, which uses a proposal kernel with symmetric density.
Various generalizations will be introduced in Section~\ref{sec:spMHgen}.
In standard Metropolis algorithms, given the current state $X^{(i)}\sr=x$ at the $i$-th iteration of the algorithm, the proposal $Y$ is drawn from a probability kernel with conditional density $q(y\given x)$ that is symmetric in the sense that $q(y\given x) = q(x\given y)$ for all $x,y \sr\in \mathbb X$.
The proposal $Y\sr=y$ is accepted with the probability
\[
\min\left(1, \frac{\pi(y)}{\pi(x)}\right).
\]
This is often implemented by drawing a uniform random variable $\Lambda\sim \text{unif}(0,1)$ and accepting the proposal by setting $X^{(i+1)} \gets Y$ if and only if $\Lambda < \frac{\pi(y)}{\pi(x)}.$
If $Y$ is not accepted, the algorithm sets $X^{(i+1)} \gets X^{(i)}$.

We will call $Y_1$ the first proposal drawn from $q(\cdot \given X^{(i)})$.
The proposal $Y_1$ is rejected if and only if a uniform random number $\Lambda \sim \text{unif}(0,1)$ is greater than or equal to $\pi(Y_1)/\pi(X^{(i)})$.
If rejected, a second proposal $Y_2$ is drawn from $q(\cdot \given Y_1)$.
The second proposal is accepted if and only if $\Lambda < \pi(Y_2)/\pi(X^{(i)})$ using the same value of $\Lambda$ used previously.
If accepted, the algorithm sets $X^{(i+1)} \gets Y_2$.
In the case where $Y_2$ is rejected, a third proposal is drawn from $q(\cdot \given Y_2)$ and checked for acceptability using the same type of criterion, $\Lambda < \pi(Y_3)/\pi(X^{(i)})$.
This procedure is repeated until an acceptable proposal is found or until a preset number $N$ of proposals are all rejected, whichever is reached sooner.
In the case where all $N$ proposals are rejected, the algorithm sets $X^{(i+1)}\gets X^{(i)}$.
A pseudocode for a sequential-proposal Metropolis algorithm is given in Algorithm~\ref{alg:spM}.
The algorithm reduces to a standard Metropolis algorithm if we set $N\sr=1$.

\begin{figure}[t]
  \centering
  \scalebox{.89}{\begin{minipage}{\textwidth}
\begin{algorithm}[H]
  \SetKwInOut{Input}{Input}\SetKwInOut{Output}{Output}
  \Input{Maximum number of proposals, $N$\\
  Symmetric proposal kernel, $q(y \given x)$\\
  Number of iterations, $M$}
  \vspace{1ex}
  \Output{A draw of Markov chain, $\left(X^{(i)}\right)_{i\in 1:M}$}
  \vspace{1ex}
  \textbf{Initialize:} Set $X^{(0)}$ arbitrarily
  
  \For {$i\gets 0\col M{-}1$}{
    Draw $\Lambda\sim \text{unif}(0,1)$\\
    Set $X^{(i+1)} \gets X^{(i)}$\\
    Set $Y_0 \gets X^{(i)}$\\
    \For {$n \gets 1\col N$} {
      Draw $Y_n \sim q(\cdot \given Y_{n-1})$\\
      \If{$\Lambda < \frac{\pi(Y_n)} {\pi(Y_0)}$}
      { Set $X^{(i+1)} \gets Y_n$\\
        \texttt{break}
      }
    }
  }
  \caption{A sequential-proposal Metropolis algorithm}
\label{alg:spM}
\end{algorithm}
  \end{minipage}}
  \end{figure}

We will now show that the sequential-proposal Metropolis algorithm just described constructs a reversible Markov chain with respect to the target distribution with density $\bar\pi$.
Throughout this paper, for two integers $n$ and $m$, we will denote by $n\col m$ the sequence $(n, n\sr+1, \dots, m)$ if $n \leq m$ and the sequence $(n, n\sr-1, \dots, m)$ if $n > m$.
Also, given a sequence $(a_n)_{n\geq 0} = (a_0, a_1, a_2, \dots)$, we will denote by $a_{n:m}$ the subsequence $(a_j)_{n\leq j \leq m}$.

\begin{prop}\label{prop:detailedbalance_spM}
  Algorithm~\ref{alg:spM} constructs a reversible Markov chain $\left(X^{(i)}\right)$ with respect to the target density $\bar\pi$. 
\end{prop}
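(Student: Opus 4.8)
The plan is to write down the transition kernel $P(x,\cdot)$ produced by one iteration of Algorithm~\ref{alg:spM} explicitly, decompose it as a ``diagonal'' atom at $x$ (corresponding to the event that all $N$ proposals are rejected) plus an ``off-diagonal'' part that is absolutely continuous with density $p(x,y)$ with respect to $dy$, and then verify the detailed balance relation $\pi(x)\,p(x,y) = \pi(y)\,p(y,x)$ for $x\neq y$. Since the diagonal atom $\rho(x)\delta_x(dy)$ contributes the same quantity $\int_{A\cap B}\rho(x)\pi(x)\,dx$ to each side of the reversibility identity $\int \1_{A\times B}(x,y)P(x,dy)\pi(x)dx = \int \1_{B\times A}(x,y)P(x,dy)\pi(x)dx$ on its own, it suffices to establish detailed balance off the diagonal; reversibility then implies that $\bar\pi$ is invariant.

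To obtain the density, I would condition on $\Lambda$ and on the full proposal path $x = Y_0, Y_1, \dots, Y_{n-1}, Y_n = y$ whose first accepted proposal occurs at step $n$. Writing $a_j := \pi(y_j)/\pi(x)\wedge 1$ for the standard Metropolis ratio at the $j$-th point and using that $\Lambda<1$ (so that $\Lambda<\pi(y_n)/\pi(x)$ is equivalent to $\Lambda<a_n$, and $\Lambda\ge\pi(y_j)/\pi(x)$ to $\Lambda\ge a_j$), the event ``reject at steps $1,\dots,n-1$ and accept at step $n$ along this path'' is exactly $\Lambda \in [\max_{1\le j\le n-1}a_j,\ a_n)$, which has probability $\big(\pi(y)/\pi(x)\wedge 1 - \max_{1\le j\le n-1}(\pi(y_j)/\pi(x)\wedge 1)\big)_+$. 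Integrating out the intermediate path gives
\[
p(x,y) \;=\; \sum_{n=1}^N \int \prod_{j=1}^n q(y_j\given y_{j-1})\,\Big(\tfrac{\pi(y)}{\pi(x)}\wedge 1 \;-\; \max_{1\le j\le n-1}\tfrac{\pi(y_j)}{\pi(x)}\wedge 1\Big)_+ \, dy_{1:n-1},
\]
with $y_0 = x$, $y_n = y$, and the $n=1$ summand read with an empty (hence zero) max and no intermediate integration variables.

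The core of the argument is then the pointwise identity: for positive reals $a = \pi(x)$, $b=\pi(y)$ and $M = \max_{1\le j\le n-1}\pi(y_j)$,
\[
a\,\Big(\tfrac{b}{a}\wedge 1 - \tfrac{M}{a}\wedge 1\Big)_+ \;=\; \big(\min(a,b) - \min(M,a)\big)_+ \;=\; \big(\min(a,b) - \min(M,b)\big)_+ \;=\; b\,\Big(\tfrac{a}{b}\wedge 1 - \tfrac{M}{b}\wedge 1\Big)_+,
\]
where the outer equalities are immediate and the middle one follows from a short case analysis on the ordering of $a$, $b$, $M$ (the two sides are both $\min(a,b)-M$ when $M\le\min(a,b)$ and both $0$ otherwise). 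Multiplying the displayed formula for $p(x,y)$ by $\pi(x)$, applying this identity inside the integral, invoking the symmetry $q(y_j\given y_{j-1}) = q(y_{j-1}\given y_j)$ to reverse the product, and performing the change of variables $z_j = y_{n-j}$ (a measure-preserving bijection on the intermediate path that leaves both $\prod_{j} q$ and $M$ unchanged) shows that the $n$-th summand of $\pi(x)p(x,y)$ equals the $n$-th summand of $\pi(y)p(y,x)$ for every $n$; summing over $n=1,\dots,N$ yields detailed balance.

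The step I expect to require the most care is the bookkeeping in deriving the formula for $p(x,y)$: making rigorous that, under the single shared $\Lambda$, ``first acceptance at step $n$ along a path through $y_{1:n-1}$'' collapses to the clean interval event above, correctly handling the degenerate $n=1$ term, and restricting attention to $\{\pi>0\}$ so the ratios are well defined. Once the kernel is written in the form above, the algebraic identity and the reversal change of variables are elementary.
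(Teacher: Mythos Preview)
Your proposal is correct and follows essentially the same route as the paper: both condition on the index $n$ of the first accepted proposal, integrate out the shared $\Lambda$ to obtain the interval length $\min\{\pi(y_0),\pi(y_n)\}-\min\{\pi(y_0),\pi(y_n),\max_{1\le j\le n-1}\pi(y_j)\}$ (your $(\min(a,b)-\min(M,a))_+$ is an equivalent way to write this), observe that this quantity is symmetric in the endpoints, reverse the proposal path using the symmetry of $q$, and treat the all-rejected case separately as the diagonal atom. The only difference is packaging: you phrase the argument via a transition density $p(x,y)$ and pointwise detailed balance, whereas the paper works directly with the joint probabilities $\mathcal P[X^{(i)}\in A,\,X^{(i+1)}\in B,\,n\text{-th proposal accepted}]$.
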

\begin{proof}
  We will show that the detailed balance equation
  \[
  \mathcal P[X^{(i)}\sr\in A,\, X^{(i+1)}\sr\in B] = \mathcal P[X^{(i)}\sr\in B,\, X^{(i+1)}\sr\in A]
  \]
  holds for every pair of measurable subsets $A$ and $B$ of $\mathbb X$, provided that $X^{(i)}$ is distributed according to $\bar\pi$.
  We will write $Y_0 := X^{(i)}$, and the subsequent proposals as $Y_1, Y_2, \dots, Y_N$.
  The case where the $n$-th proposal $Y_n$ is taken for $X^{(i+1)}$ will be considered; then the claim of detailed balance will follow by combining the cases for $n$ in $1\col N$ and the case where all proposals are rejected.
  Under the assumption that $X^{(i)}$ is distributed according to $\bar\pi$, the probability that $X^{(i)}$ is in $A$ and the $n$-th proposal is in $B$ and taken as $X^{(i+1)}$ is given by
  \begin{equation}\begin{split}
    &\mathcal P[X^{(i)}\sr\in A,\, X^{(i+1)}\sr\in B,\, \text{the }n\text{-th proposal is taken as }X^{(i+1)}]\\
    &\qquad= \int \1_A(y_0) \1_B(y_n) \bar\pi(y_0) q(y_1\given y_0) \cdots q(y_n \given y_{n-1}) \\
    &\qquad\qquad\cdot \1\left[\Lambda \geq \frac{\pi(y_1)}{\pi(y_0)}\right] \cdots \1\left[\Lambda \geq \frac{\pi(y_{n-1})}{\pi(y_0)} \right] \1\left[\Lambda < \frac{\pi(y_n)}{\pi(y_0)}\right] \1[0<\Lambda<1] d\Lambda\, dy_0\, dy_1 \dots dy_n,
    \end{split}
    \label{eqn:prob_AB_nth_accep}
  \end{equation}
  where $\1_A$ denotes the indicator function for the set $A$ and $\1[\cdot]$ denotes the indicator function of the event specified between the brackets.
  The quantity
  \begin{equation}
  \1\left[\Lambda \geq \frac{\pi(y_1)}{\pi(y_0)}\right] \cdots \1\left[\Lambda \geq \frac{\pi(y_{n-1})}{\pi(y_0)} \right] \1\left[\Lambda < \frac{\pi(y_n)}{\pi(y_0)}\right] \1[0<\Lambda<1]
  \label{eqn:spM_indicator}
  \end{equation}
  is equal to unity if and only if
  \begin{equation}
    \Lambda \geq \max_{k\in 1:n-1} \frac{\pi(y_k)}{\pi(y_0)} \quad \text{and} \quad \Lambda < \min\left(1, \frac{\pi(y_n)}{\pi(y_0)}\right).
    \label{eqn:spM_lambda_cond}
  \end{equation}
  It can be readily observed that for real numbers $x$, $a$, and $b$, the conditions $x \geq a$ and $x < b$ are satisfied if and only if $x \in [ \min\{a,b\},\, b)$, where the interval length is given by $b\sr-\min(a,b)$.
  Thus the interval length corresponding to the conditions \eqref{eqn:spM_lambda_cond} is given by
  \[
  \min\left(1, \frac{\pi(y_n)}{\pi(y_0)}\right) - \min\left( 1, \frac{\pi(y_n)}{\pi(y_0)}, \max_{k\in 1:n-1} \frac{\pi(y_k)}{\pi(y_0)} \right),
  \]
  which gives the integral of \eqref{eqn:spM_indicator} over $\Lambda$.
  It follows that \eqref{eqn:prob_AB_nth_accep} is equal to
  \begin{equation}\begin{split}
      &\int \1_A(y_0) \1_B(y_n) \bar\pi(y_0) \prod_{k=1}^n q(y_k\given y_{k-1}) \cdot \left[ \min\left(1, \frac{\pi(y_n)}{\pi(y_0)}\right) - \min\left(1, \frac{\pi(y_n)}{\pi(y_0)}, \max_{k\in 1:n-1}\frac{\pi(y_k)}{\pi(y_0)} \right) \right] dy_{0:n}\\
      &= \frac{1}{Z} \int \1_A(y_0) \1_B(y_n) \prod_{k=1}^n q(y_k \given y_{k-1}) \cdot \left[ \min\{\pi(y_0), \pi(y_n)\} - \min\{\pi(y_0), \pi(y_n), \max_{k\in 1:n-1} \pi(y_k)\} \right] dy_{0:n}.
      \end{split}
    \label{eqn:prob_AB_integral}
  \end{equation}
  If we change the notation of the dummy variables by writing $y_0 \gets y_n$, $y_1 \gets y_{n-1}$, $\dots$, $y_n \gets y_0$, then \eqref{eqn:prob_AB_integral} is given by
  \begin{equation}
    \frac{1}{Z} \int \1_A(y_n)\1_B(y_0) \prod_{k=1}^n q(y_k\given y_{k-1}) \left[ \min \{ \pi(y_n), \pi(y_0) \} - \min \{ \pi(y_n), \pi(y_0), \max_{k\in 1:n-1} \pi(y_k) \} \right] dy_{0:n},
    \label{eqn:spM_reversed}
  \end{equation}
  where we have used the fact that the kernel density $q$ is symmetric; that is, $q(y_{k-1}\given y_k) = q(y_k \given y_{k-1})$, for $k\in 1\col n$.
  It is now obvious that \eqref{eqn:spM_reversed} is equal to the quantity obtained by swapping the positions of $A$ and $B$ in \eqref{eqn:prob_AB_nth_accep}.
  Thus we see that
  \begin{multline}
  \mathcal P[X^{(i)} \sr\in A, \, X^{(i+1)} \sr\in B, \text{the }n\text{-th proposal is taken as }X^{(i+1)}]\\
  = \mathcal P[X^{(i)} \sr\in B, \, X^{(i+1)} \sr\in A, \text{the }n\text{-th proposal is taken as }X^{(i+1)}].
  \label{eqn:spM_nth_accep}
  \end{multline}
  In the case where all $N$ proposals are rejected, the algorithms sets $X^{(i+1)} \gets X^{(i)}$.
  Thus,
  \begin{multline}
    \mathcal P[X^{(i)} \sr\in A, \, X^{(i+1)} \sr\in B, \text{all }N\text{ proposals are rejected}]\\
    = \mathcal P[X^{(i)} \sr\in A, \, X^{(i)} \sr\in B, \text{all }N\text{ proposals are rejected}],
    \label{eqn:spM_all_rejected}
  \end{multline}
  which is obviously unchanged under the swap of $A$ and $B$.
  Thus summing \eqref{eqn:spM_nth_accep} over all $n\sr\in 1\col N$ and adding \eqref{eqn:spM_all_rejected} gives
  \[
  \mathcal P[X^{(i)}\sr\in A,\, X^{(i+1)}\sr\in B] = \mathcal P[X^{(i)} \sr\in B, \, X^{(i+1)} \sr\in A].
  \]
\end{proof}

\subsection{Algorithm generalizations}\label{sec:spMHgen}
\begin{figure}[t]
  \centering
  \scalebox{.89}{\begin{minipage}{\textwidth}
\begin{algorithm}[H]
  \SetKwInOut{Input}{Input}\SetKwInOut{Output}{Output}
  \Input{Distribution for the maximum number of proposals and the number of accepted proposals, $\nu(N,L)$\\
  Possibly asymmetric proposal kernel, $q(y_n\given y_{n-1})$\\
  Number of iterations, $M$}
  \vspace{1ex}
  \Output{A draw of Markov chain, $\left(X^{(i)}\right)_{i\in 1:M}$}
  \vspace{1ex}
  \textbf{Initialize:} Set $X^{(0)}$ arbitrarily
  
  \For {$i\gets 0\col M{-}1$}{
    Draw $(N,L) \sim \nu(\cdot, \cdot)$\\
    Draw $\Lambda\sim \text{unif}(0,1)$\\
    Set $X^{(i+1)} \gets X^{(i)}$\\
    Set $Y_0 \gets X^{(i)}$ and $n_a \gets 0$\\
    \For {$n \gets 1\col N$} {
      Draw $Y_n \sim q(\cdot \given Y_{n-1})$\\
      \textbf{if} {$\Lambda < \frac{\pi(Y_n) \prod_{j=1}^n q(Y_{j-1} \given Y_j) } { \pi(Y_0) \prod_{j=1}^n q(Y_j \given Y_{j-1}) }$} \textbf{then} $n_a \gets n_a + 1$\\
      \If {$n_a = L$} {
        Set $X^{(i+1)} \gets Y_n$\\
        \texttt{break}
      }
    }
  }
  \caption{A sequential-proposal Metropolis-Hasting algorithm}
\label{alg:spMH}
\end{algorithm}
  \end{minipage}}
  \end{figure}

The sequential-proposal Metropolis algorithm described in the previous subsection can be generalized in various ways.
Firstly, the algorithm may use proposal kernels with asymmetric density.
The $n$-th proposal $Y_n$ is drawn from a probability kernel with density $q$ which may not satisfy $q(y\given x) = q(x \given y)$, $\forall x, y \sr \in \mathbb X$.
A proposed value $Y_n$ is deemed acceptable if and only if
\begin{equation}
  \Lambda < \frac{\pi(Y_n) \prod_{j=1}^n q(Y_{j-1} \given Y_j) } { \pi(Y_0) \prod_{j=1}^n q_j(Y_j \given Y_{j-1}) }.
  \label{eqn:accep_n_asym}
\end{equation}
Here $Y_0$ denotes the current state of the Markov chain.
Clearly, \eqref{eqn:accep_n_asym} reduces to, if the proposal density $q$ is symmetric, the acceptance probability $\pi(Y_n)/\pi(Y_0)$ in Algorithm~\ref{alg:spM}.
We call a sequential-proposal MCMC algorithm that uses a proposal kernel that has possibly asymmetric density a sequential-proposal Metropolis-Hastings algorithm.

A sequential-proposal Metropolis-Hastings algorithm can be further generalized by taking the $L$-th acceptable proposal instead as the next state of the Markov chain for general $L\geq 1$.
The algorithms previously described correspond to the case where $L\,{=}\,1$.
A pseudocode for this generalized Metropolis-Hastings algorithm is given in Algorithm~\ref{alg:spMH}.
The algorithmic parameters $N$ and $L$ may be randomly selected at each iteration, provided that they are independent of the proposals $\{Y_n \giventh n\sr\geq 1\}$ and $\Lambda$.
If there are less than $L$ acceptable proposals in the first $N$ proposals, the Markov chain stays at its current position.
The proof that Algorithm~\ref{alg:spMH} constructs a reversible Markov chain with respect to the target density $\bar\pi$ is given in Appendix~\ref{sec:proof_spMH}.

A sequential-proposal Metropolis-Hastings algorithm can also employ proposal kernels that depend on the sequence of previous proposals.
Suppose that proposals are sequentially drawn in such a way that the $k$-th candidate $Y_k$ is drawn from a proposal kernel with density $q_k(\cdot\given Y_{k-1}, \dots, Y_0)$, where $Y_{k-1}$, $\dots$, $Y_1$ denote the previous proposals and $Y_0$ denotes the current state $X^{(i)}$ in the Markov chain at the $i$-th iteration.
The candidate $Y_k$ is deemed acceptable if
\begin{equation}
  \Lambda < \frac{\pi(Y_k) \prod_{j=1}^k q_j(Y_{k-j} \given Y_{k-j+1:k})}{\pi(Y_0) \prod_{j=1}^k q_j(Y_j \given Y_{j-1:0}) }.
  \label{eqn:accep_n_path_dep}
\end{equation}
Proposals are sequentially drawn until $L$ acceptable proposals are found.
If there are less than $L$ acceptable proposals among the first $N$ proposals, the next state in the Markov chain is set to the current state, $X^{(i+1)} \sr\gets X^{(i)}$.
Suppose now that the $L$-th acceptable state is obtained by the $n$-th proposal $Y_n$ for some $n \sr \leq N$.
In the case where the proposal kernel depends on the sequence of previous proposals, in order to take $Y_n$ as the next state of the Markov chain, an additional condition needs to be checked, namely that there are exactly $L\sr-1$ numbers $k\sr\in 1\col n{-}1$ that satisfy
\begin{equation}
\Lambda < \frac{\pi(Y_k) \prod_{j=1}^{n-k} q_j(Y_{k+j} \given Y_{k+j-1:k}) \prod_{j=n-k+1}^n q_j(Y_{n-j} \given Y_{n-j+1:n})} { \pi(Y_0) \prod_{j=1}^n q_j(Y_j \given Y_{j-1:0}) }.
\label{eqn:spMH_dual_condition}
\end{equation}
If this additional condition is satisfied, $Y_n$ is taken as the next state of the Markov chain, that is $X^{(i+1)} \sr\gets Y_n$.
Otherwise, the next state is set to the current state in the Markov chain, $X^{(i+1)}\sr\gets X^{(i)}$.
A pseudocode for sequential-proposal Metropolis-Hastings algorithms that employ kernels dependent on the sequence of previous proposals are given in Appendix~\ref{sec:spMH_path_depen_kernel}.
The role of the additional condition \eqref{eqn:spMH_dual_condition} is to establish detailed balance between $X^{(i)}$ and $X^{(i+1)}$ by creating a symmetry between the sequence of proposals $Y_0\sr\to Y_1 \sr\to \cdots \sr\to Y_n$ and the reversed sequence $Y_n \sr\to Y_{n-1} \sr\to \cdots \sr\to Y_0$.
To see this, we note that the candidate $Y_n$ can be taken as the next state of the Markov chain only when there are exactly $L\sr- 1$ acceptable proposals among $Y_1$, $\dots$, $Y_{n-1}$.
The additional symmetry condition accounts for a mirror case where there are $L-1$ acceptable proposals among $Y_{n-1}$, $\dots$, $Y_1$, assuming that these proposals are sequentially drawn in the reverse order starting from $Y_n$.
A proof of detailed balance for this algorithm is also given in Appendix~\ref{sec:spMH_path_depen_kernel}.
This algorithm reduces to Algorithm~\ref{alg:spMH} in the case where the proposal kernel is dependent only on the most recent proposal.

We note that sequential-proposal Metropolis-Hastings algorithms in the case where $L\,{=}\,1$ construct the same Markov chains as those constructed by delayed rejection methods \citep{tierney1999some,mira2001metropolis,green2001delayed} when the proposal kernel depends only on the most recent proposal.
A brief description of the delayed rejection method, following \citet{mira2001metropolis}, is given as follows.
Given the current state of the Markov chain $y_0$, the first candidate value $y_1$ is drawn from $q(\cdot \given y_0)$ and accepted with probability
$$\alpha_1(y_0,y_1) = 1\land \frac{\pi(y_1)q(y_0\given y_1)}{\pi(y_0) q(y_1\given y_0)},$$
where $a\land b := \min(a,b)$.
If $y_1$ is rejected, a next candidate value $y_2$ is drawn from $q(\cdot \given y_1)$.
The acceptance probability for $y_2$ is given by
$$\alpha_2(y_0, y_1, y_2) = 1 \land \frac{\pi(y_2)q(y_1\given y_2) q(y_0 \given y_1)\{1- \alpha_1(y_2, y_1)\}}{\pi(y_0) q(y_1\given y_0) q(y_2 \given y_1)\{1- \alpha_1(y_0, y_1)\}}.$$
If $y_1,\dots, y_{n-1}$ are rejected, $y_n$ is drawn from $q(\cdot \given y_{n-1})$ and accepted with probability
$$\alpha_n(y_{0:n}) = 1\land \frac{\pi(y_n) \prod_{j=1}^n q(y_{j-1}\given y_j) \prod_{j=1}^{n-1}\{1-\alpha_j(y_{n:n-j})\}} {\pi(y_0) \prod_{j=1}^n q(y_j\given y_{j-1})  \prod_{j=1}^{n-1}\{1-\alpha_j(y_{0:j})\}}.$$
If all proposals are rejected up to a certain number $N$, the next state of the Markov chain is set to the current state $y_0$.
We show in Appendix~\ref{sec:equiv_delayedrej} the equivalence between the delayed rejection method and the sequential-proposal Metropolis-Hastings algorithm with $L\,{=}\,1$ under the case where each proposal is made depending only on the most recent proposal.
The delayed rejection method can also use proposal kernels dependent on the sequence of previous proposals to construct a reversible Markov chain with respect to the target distribution.
In this case, however, the law of the constructed Markov chain will be different from that by a sequential-proposal Metropolis-Hastings algorithm.

In our view, there are several advantages sequential-proposal Metropolis-Hastings algorithms have over the delayed rejection method:
\begin{enumerate}
\item Sequential-proposal Metropolis-Hastings algorithms are more straightforward to implement than the delayed rejection method.
  The evaluation of $\alpha_n(y_{0:n})$ in delayed rejection involves the evaluation of a sequence of reversed acceptance probabilities $\{\alpha_j(y_{n:n-j}) \giventh j \in 1\col n{-}1 \}$.
  This involves computation of a total of $\mathcal O(n^2)$ acceptance probabilities.
  In comparison, sequential-proposal Metropolis-Hastings algorithms only compare the ratio in \eqref{eqn:accep_n_asym} to a uniform random number $\Lambda$ for the same task of checking the acceptability of $y_n$.
  The algorithmic simplicity of sequential-proposal Metropolis-Hastings facilitates the use of a large number of proposals in each iteration.
  Moreover, one may choose to take the $L$-th acceptable proposal for the next state of the Markov chain for a large $L\sr>1$.
\item The sequential-proposal MCMC framework can be readily applied to MCMC algorithms using deterministic maps for proposals, as explained in Section~\ref{sec:sppdDescrip}.
  In particular, the sequential-proposal MCMC framework applies to Hamiltonian Monte Carlo and the bouncy particle sampler methods, leading to improved the numerical efficiency.
  Applications to these algorithms are discussed in Section~\ref{sec:HMC} and Appendix~\ref{sec:BPS}.
  We note that the delayed rejection method has been generalized to algorithms using deterministic maps in \citet{green2001delayed}, although only the case for the second proposal was discussed.
\item The conceptual simplicity of the sequential-proposal MCMC framework allows for various generalizations and modifications.
  For example, in Section~\ref{sec:spNUTS}, we develop sequential-proposal No-U-Turn sampler algorithms (Algorithms~\ref{alg:spNUTS1} and \ref{alg:spNUTS2}) that automatically adjust the lengths of trajectories leading to proposals in HMC, similarly to the No-U-Turn sampler algorithm proposed by \citet{hoffman2014no}.
  The proofs of detailed balance for these algorithms can be obtained by making minor modifications to the proof for the sequential-proposal Metropolis-Hastings algorithms.
\end{enumerate}

\section{Sequential-proposal MCMC algorithms using deterministic kernels}\label{sec:sppdDescrip}
The sequential-proposal MCMC framework can be applied to algorithms that use deterministic proposal kernels.
MCMC algorithms that employ deterministic proposal kernels often target a distribution on an extended space $\mathbb{X} \sr \times \mathbb{V}$ whose the marginal distribution on $\mathbb{X}$ is equal to the original target distribution $\bar\pi$.
An additional variable $V$ drawn from a distribution on $\mathbb V$ serves as a parameter for the deterministic proposal kernel.
In this section, we will explain a general class of MCMC algorithms using deterministic proposal kernels and show how the sequential-proposal scheme can be applied to these algorithms.
Applications to specific algorithms, such as HMC or the bouncy particle sampler (BPS), are discussed in subsequent sections (Section~\ref{sec:HMC} and Appendix~\ref{sec:BPS}).

\begin{figure}[t]
  \centering
  \scalebox{.89}{\begin{minipage}{\textwidth}
\begin{algorithm}[H]
  \SetKwInOut{Input}{Input}\SetKwInOut{Output}{Output}
  \Input{
    Distribution of the maximum number of proposals and the number of accepted proposals, $\nu(N,L)$\\
    Time step length distribution, $\mu(d\tau)$\\
    Velocity distribution density, $\psi(v \giventh x)$\\
    Time evolution operators, $\{\S_\tau\}$\\
    Velocity reflection operators, $\{\R_x\}$\\
    Velocity refreshment probability, $p^{\text{ref}}(x)$\\
    Number of iterations, $M$}
  \vspace{1ex}
  \Output{A draw of Markov chain, $\left(X^{(i)}\right)_{i\in 1:M}$}
  \vspace{1ex}
  \textbf{Initialize:} Set $X^{(0)}$ arbitrarily and draw $V^{(0)}\sim \psi(\,\cdot\giventh X^{(0)}).$
  
  \For {$i\gets 0\col M{-}1$}{
    Draw $N,L\sim \nu(\cdot, \cdot)$\\
    Draw $\tau \sim \mu(\cdot)$\\
    Draw $\Lambda\sim \text{unif}(0,1)$\\
    Set $X^{(i+1)} \gets X^{(i)}$ and $V^{(i+1)} \gets \R_{X^{(i)}}V^{(i)}$\\
    Set $n_a \gets 0$\\
    Set $(Y_0, W_0)\gets (X^{(i)}, V^{(i)})$\\
    \For {$n \gets 1\col N$} {
      Set $(Y_n, W_n) \gets \S_\tau(Y_{n-1}, W_{n-1})$\\
      \textbf{if} {$\displaystyle \Lambda < \frac{\pi(Y_n)\psi(W_n\giventh Y_n)}{\pi(Y_0)\psi(W_0\giventh Y_0)} \left| \det D\S_\tau^n(Y_0,W_0) \right|$} \textbf{then} $n_a \gets n_a + 1$\\
      \If {$n_a = L$} {
        Set $(X^{(i+1)}, V^{(i+1)}) \gets (Y_n, W_n)$\\
        \texttt{break}
      }
    }
    With probability $p^{\text{ref}}(X^{(i+1)})$, refresh $V^{(i+1)}\sim \psi(\,\cdot \giventh X^{(i+1)})$ 
  }
  \caption{A sequential-proposal MCMC using a deterministic kernel}
  \label{alg:sppd}
\end{algorithm}
  \end{minipage}}
\end{figure}

We suppose that the extended target distribution on $\mathbb X \times \mathbb V$ has density $\Pi(x,v)$ with respect to a reference measure denoted by $dx\, dv$.
We further assume that the original target density $\bar\pi$ equals the marginal density of $\Pi$, such that $\Pi(x,v) = \bar\pi(x) \psi(v\giventh x)$ for some $\psi(v\giventh x)$, the conditional density of $v$ given $x$.
We define a collection of deterministic maps $\S_\tau : \mathbb{X}\times\mathbb{V} \to \mathbb{X} \times \mathbb{V}$ for possibly various values of $\tau$.
In HMC and the BPS, $\S_\tau$ has an analogy with the evolution of a particle in a physical system for a time duration $\tau$.
In this analogy, the variable $x \in \mathbb X$ is considered as the position of a particle in the system and the variable $v \in \mathbb{V}$ as the velocity of the particle.
The point $\S_\tau(x,v)$ then represents the final position-velocity pair of a particle that moves with initial position $x$ and initial velocity $v$ for time $\tau$.
We suppose that the map $\S_\tau$ for each $\tau$ satisfies the following condition: 

\newcommand{\id}{\mathcal I}
\vspace{2ex}

\noindent\parbox{\linewidth}{
  \noindent \textbf{Reversibility condition.}
  \itshape
There exists a velocity reflection operator $\R_x: \mathbb{V} \to \mathbb{V}$ defined for every point $x \in \mathbb{X}$ such that
\begin{equation}
  \R_x \circ \R_x = \id, \label{eqn:RR}
\end{equation}
holds for every $x \sr\in \mathbb X$ and
\begin{equation}
  \frac{\psi(\R_x v \giventh x)}{\psi(v \giventh x)} \left| \frac{\partial \R_x v}{\partial v} \right| = 1
  \label{eqn:psiR}
\end{equation}
holds for almost every $(x,v)\sr\in\mathbb X\times \mathbb V$ with respect to the reference measure $dx\, dv$.
Furthermore, if we define a map $\T:\mathbb{X}\times \mathbb{V} \to \mathbb{X} \times \mathbb{V}$ as $\T(x,v) := (x, \R_x v)$, we have 
\begin{equation}
  \T\circ \S_\tau \circ \T\circ \S_\tau = \id. \label{eqn:TSTS}
\end{equation}
}

\noindent
Similar sets of conditions appear routinely in the literature on MCMC \citep{fang2014compressible, vanetti2017piecewise} and on Hamiltonian dynamics \citep[Section~4.3]{leimkuhler2004simulating}.
In \eqref{eqn:RR} and \eqref{eqn:TSTS}, $\id$ denotes the identity map in the corresponding space $\mathbb{V}$ or $\mathbb{X}\times\mathbb{V}$, and the symbol $\circ$ denotes function composition.
In \eqref{eqn:psiR}, $\left| \frac{\partial \R_x v}{\partial v} \right|$ denotes the absolute value of the Jacobian determinant of the map $\R_x$ at $v$.
The condition \eqref{eqn:psiR} is equivalent to the condition that
\begin{equation}
  \int_A \Pi(x,v) dv = \int_{\R_x(A)} \Pi(x,v) dv
\end{equation}
for every measurable subset $A$ of $\mathbb V$ and for almost every $x\sr\in\mathbb X$, due to the change of variable formula.
The condition \eqref{eqn:TSTS} can be understood as an abstraction of a property in Hamiltonian dynamics that if we reverse the velocity of a particle and advance in time, the particle traces back its past trajectory.

Given $X^{(i)}\sr =x$ and $V^{(i)}\sr=v$ at the start of the $i$-th iteration, a MCMC algorithm can make a deterministic proposal $\S_\tau(x,v)$, which is accepted with probability
\[
\min\left( 1, \frac{\Pi(\S_\tau(x,v))}{\Pi(x,v)} \left| \det D\S_\tau (x,v) \right| \right),
\]
where $D$ denotes the differential operator (i.e., $D\S_\tau(x,v) = \frac{\partial \S_\tau(x,v)}{\partial (x,v)}$).
In algorithms such as HMC or the BPS, the extended target density $\Pi(x,v)$ is often taken as a product of independent densities, $\bar\pi(x) \psi(v)$, where a common choice for $\psi(v)$ is a multivariate normal density.
The map $\S_\tau$ is often taken to preserve the reference measure, such that it has unit Jacobian determinant (i.e., $\left| \det D\S_\tau(x,v) \right| = 1$, for all $(x,v)$).

The sequential-proposal framework can be used to generalize MCMC algorithms using deterministic kernels in a similar way that it is applied to Metropolis-Hastings algorithms.
A pseudocode of a sequential-proposal MCMC algorithm using a deterministic kernel is shown in Algorithm~\ref{alg:sppd}.
Proposals are obtained sequentially as $(Y_n, W_n) \gets \S_\tau(Y_{n-1},W_{n-1})$, where we write $(Y_0,W_0) := (X^{(i)}, V^{(i)})$.
The pair $(Y_n,W_n)$ is deemed acceptable if
\[
\Lambda < \frac{\Pi(Y_n,W_n)}{\Pi(Y_0,W_0)} \left| \det  D\S_\tau^n(Y_0,W_0) \right|,
\]
where $\S_\tau^n = \S_\tau \circ \cdots \circ \S_\tau$ denotes a map obtained by composing $\S_\tau$ $n$ times.
If there are less than $L$ acceptable proposals in the sequence of $L$ proposals, the next state of the Markov chain is set to $(X^{(i+1)}, V^{(i+1)}) \gets (X^{(i)}, \R_{X^{(i)}}V^{(i)}).$
The velocity $V^{(i+1)}$ may be refreshed at the end of the iteration by drawing from $\psi(\,\cdot\giventh X^{(i+1)})$ with a certain probability $p^{\text{ref}}(X^{(i+1)})$ that may depend on $X^{(i+1)}$.
The parameter $\tau$ for the evolution map $\S_\tau$ can be drawn randomly.
The pseudocode in Algorithm~\ref{alg:sppd} shows the case where $\tau$ is drawn once per iteration and the same value is used for all $n\sr\in 1\col N$, but $\tau$ can also be drawn separately for each $n$, provided that the draws are independent of each other and of all other random draws in the algorithm.

We state the following result for Algorithm~\ref{alg:sppd}.
The proof is given in Appendix~\ref{sec:proof_SPPD_invariance}.
\begin{prop}\label{prop:SPPD_invariance}
  The extended target distribution with density $\Pi(x,v)$ is a stationary distribution for the Markov chain $\left( X^{(i)}, V^{(i)} \right)_{i\in1:M}$ constructed by Algorithm~\ref{alg:sppd}.
  Furthermore, the Markov chain $\left(X^{(i)}\right)_{i\in1:M}$ constructed by Algorithm~\ref{alg:sppd}, marginally for the $x$-component, is reversible with respect to the target distribution $\bar\pi(x)$.
\end{prop}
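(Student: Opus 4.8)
The plan is to factor one iteration of Algorithm~\ref{alg:sppd} as ``apply the deterministic-proposal kernel $K$, then perform the velocity refreshment'', to prove that $K$ satisfies a skew-detailed-balance identity with respect to $\Pi$ and the flip $\T(x,v)=(x,\R_xv)$, and finally to check that the refreshment step does not damage the conclusion. Here $K$ is the Markov kernel on $\mathbb X\times\mathbb V$ that, from $z_0:=(X^{(i)},V^{(i)})$, draws $N,L\sim\nu$, $\tau\sim\mu$ and $\Lambda\sim\mathrm{unif}(0,1)$ (independently of each other), forms the orbit $z_m:=\S_\tau^m(z_0)$, and outputs $z_n$ if the $L$-th acceptable proposal occurs at some step $n\le N$, and $\T z_0=(X^{(i)},\R_{X^{(i)}}V^{(i)})$ otherwise. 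The target identity is
\[
\int_A K(z,B)\,\Pi(dz)=\int_{\T(B)}K(z,\T(A))\,\Pi(dz)\qquad\text{for all measurable }A,B\subseteq\mathbb X\times\mathbb V .
\]
Granting it, taking $A=\mathbb X\times\mathbb V$ and using that $\T$ is $\Pi$-measure-preserving---which is precisely what \eqref{eqn:psiR} expresses, since $D\T$ is block-triangular with $v$-block $D\R_x$---gives $\Pi K=\Pi$; and taking $A=A_X\times\mathbb V$, $B=B_X\times\mathbb V$, which are $\T$-invariant because every $\R_x$ is a bijection of $\mathbb V$, turns the identity into ordinary $\bar\pi$-detailed balance for the $x$-coordinate. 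The refreshment step redraws $V^{(i+1)}\sim\psi(\cdot\giventh X^{(i+1)})$ without changing $X^{(i+1)}$; it therefore preserves $\Pi$ and acts as the identity on the $x$-coordinate, so composing it after $K$ preserves both the $\Pi$-stationarity of $(X^{(i)},V^{(i)})$ and the $\bar\pi$-reversibility of $(X^{(i)})$, which is exactly Proposition~\ref{prop:SPPD_invariance}.

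To establish the identity I would mirror the proof of Proposition~\ref{prop:detailedbalance_spM}: decompose the left-hand side according to the step $n$ at which the $L$-th acceptable proposal occurs, together with the ``no-move'' event, and integrate out $\Lambda$ first. Along the orbit set $r_m:=\frac{\Pi(z_m)}{\Pi(z_0)}\,|\det D\S_\tau^m(z_0)|$, the ratio compared with $\Lambda$ in the algorithm. The event ``the $L$-th acceptable proposal occurs at step $n$'' is $\{\,r_{(L)}\le\Lambda<\min(r_{(L-1)},r_n,1)\,\}$, where $r_{(j)}$ denotes the $j$-th largest among $r_1,\dots,r_{n-1}$, with conventions $r_{(0)}:=\infty$ and $r_{(j)}:=0$ for $j\ge n$. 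Integrating over $\Lambda\in(0,1)$ yields the interval length $\ell_n:=\min(r_{(L-1)},r_n,1)-\min(r_{(L)},r_n,1)$, which for $L=1$ coincides with the expression appearing in the proof of Proposition~\ref{prop:detailedbalance_spM}. Hence the step-$n$ contribution to the left-hand side is $\iint\1_A(z_0)\,\1_B(z_n)\,\Pi(z_0)\,\ell_n\,d\tau\,dz_0$, weighted by $\nu(N,L)$; the external draws $(N,L)$ only attach weights and cause no difficulty.

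The crux is a change of variables powered by the reversibility condition. From \eqref{eqn:TSTS} one obtains $\S_\tau^{-1}=\T\S_\tau\T$, hence $\S_\tau^{\,k}\circ\T=\T\circ\S_\tau^{-k}$, and from \eqref{eqn:psiR} one obtains $\Pi(\T z)\,|\det D\T(z)|=\Pi(z)$. Substituting $z_0=\T w_n$ with $w_m:=\S_\tau^m(w_0)$, so that $z_m=\T w_{n-m}$ and in particular $z_n=\T w_0$, a chain-rule bookkeeping of the Jacobians gives $\Pi(z_0)\,dz_0=\Pi(w_0)\,\rho_n\,dw_0$ and $r_m=\rho_{n-m}/\rho_n$, where $\rho_m:=\frac{\Pi(w_m)}{\Pi(w_0)}\,|\det D\S_\tau^m(w_0)|$. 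Dividing all arguments of $\ell_n$ by the positive scalar $\rho_n$ commutes with the order statistics and with the truncation at $1$, so $\ell_n=\rho_n^{-1}\,\ell_n(\rho_1,\dots,\rho_n)$ and the factor $\rho_n$ cancels; the step-$n$ contribution becomes $\iint\1_A(\T w_n)\,\1_B(\T w_0)\,\Pi(w_0)\,\ell_n(\rho_1,\dots,\rho_n)\,d\tau\,dw_0$, which after relabelling the dummy orbit is exactly the step-$n$ contribution to the right-hand side. The ``no-move'' contributions agree trivially, since there the output is $\T z_0$ and the integrand depends on $z_0$ alone. Summing over $n$ and over $\nu$ gives the identity.

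The only real obstacle is the Jacobian bookkeeping: since $\S_\tau$ is not assumed to be volume-preserving, the factors $|\det D\S_\tau^n|$ and $|\det D\T|$ must be tracked through every step, and the two substitution identities $\Pi(z_0)\,dz_0=\Pi(w_0)\rho_n\,dw_0$ and $r_m=\rho_{n-m}/\rho_n$ must be derived cleanly from $\S_\tau^{-1}=\T\S_\tau\T$ and $\Pi(\T z)|\det D\T(z)|=\Pi(z)$; once this is done, the ``self-duality up to the scalar $\rho_n$'' of $\ell_n$ makes the two sides collapse onto each other just as in Proposition~\ref{prop:detailedbalance_spM}. Equivalently, one may note that $\Phi_\tau:=\T\circ\S_\tau$ is an involution with $\Pi(\Phi_\tau z)\,|\det D\Phi_\tau(z)|=\Pi(\S_\tau z)\,|\det D\S_\tau(z)|$, which is what makes the forward orbit $z_0,\S_\tau z_0,\S_\tau^2 z_0,\dots$ the reversal of the orbit started from $\T\S_\tau^n z_0$; this is the structural fact underlying the substitution above.
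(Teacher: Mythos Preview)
Your argument is correct and relies on the same core ingredients as the paper's proof---the involution property of $\T\circ\S_\tau^n$ (which the paper isolates as Lemma~\ref{lem:TSnTSn}), the change-of-variables identity with careful Jacobian tracking, and the symmetry of the expression $\min(r_{(L-1)},r_n,1)-\min(r_{(L)},r_n,1)$ under orbit reversal---but the two proofs are organized differently. The paper factors one iteration as the velocity flip $\T$ followed by a ``second operation'' that maps $(Y_0,\R_{Y_0}W_0)$ to $\S_\tau^n\circ\T(Y_0,\R_{Y_0}W_0)$, and shows that \emph{each} factor satisfies ordinary $\Pi$-detailed balance; stationarity of $\Pi$ and $x$-marginal reversibility then follow by composition. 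You instead keep the full kernel $K$ intact and prove the single skew-detailed-balance identity $\int_A K(z,B)\,\Pi(dz)=\int_{\T(B)}K(z,\T(A))\,\Pi(dz)$, from which both conclusions drop out at once. The two framings are equivalent: your $K$ equals the paper's ``second operation'' precomposed with $\T$, and ordinary reversibility of the second operation is exactly your skew identity after the substitution $w=\T z$. Your packaging is a bit more streamlined (one identity, no need to argue separately that $\T$ is reversible), while the paper's decomposition makes the role of the involution $\S_\tau^n\circ\T$ more visible and matches the structure of the earlier Proposition~\ref{prop:detailedbalanceMH} proof more closely.
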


\section{Connection to Hamiltonian Monte Carlo methods}\label{sec:HMC}
\subsection{Sequential-proposal Hamiltonian Monte Carlo} \label{sec:spHMC}
In this section, we consider applications of the sequential-proposal approach described in Section~\ref{sec:sppdDescrip} to Hamiltonian Monte Carlo algorithms and discuss the numerical efficiency.
We first briefly summarize basic features of HMC algorithms.
A function on $\mathbb X \times \mathbb V$, called the Hamiltonian, is defined as the negative log density of the extended target density:
\begin{equation}
  H(x,v) := -\log \Pi(x,v) = -\log \bar\pi(x) - \log \psi(v\giventh x).\label{eqn:HamSys}
\end{equation}
We assume both $\mathbb X$ and $\mathbb V$ are equal to the $d$ dimensional Euclidean space $\mathbb R^d$.
The velocity distribution $\psi(v\giventh x)$ is often taken as a multivariate normal density independent of $x$,
\begin{equation*}
\psi(v\giventh x) \equiv \psi_C(v) := \frac{1}{\sqrt{(2\pi)^d \left| \det C \right|}} \exp\left\{ - \frac{v^T C^{-1} v}{2}\right\}.
\label{eqn:mvndensity}
\end{equation*}
An analogy with a physical Hamiltonian system is drawn by interpreting the first term $-\log \bar\pi(x)$ as the static potential energy of a particle and the second term $-\log \psi(v)$ as the kinetic energy.
In this analogy, the covariance matrix $C$ can be interpreted as the inverse of the mass of the particle.
Hamiltonian dynamics is defined as a solution to the Hamiltonian equation of motion (HEM):
\begin{equation}\begin{split}
  \frac{dx}{dt} &= C\frac{\partial H}{\partial v}\\
  \frac{dv}{dt} &= -C\frac{\partial H}{\partial x}.
  \end{split}\label{eqn:HEM}\end{equation}
If we denote the solution to the HEM as $(x(t), v(t))$, the exact Hamiltonian flow $S_\tau^*$ defined by $\S_\tau^*(x(0),v(0)) := (x(\tau),v(\tau))$ satisfies the reversibility condition \eqref{eqn:psiR} and \eqref{eqn:TSTS} when the velocity reflection operator is given by $\R_x(v) \sr= {-}v$ for all $x \sr\in \mathbb X$ and $v\sr\in\mathbb V$.
The map $\S_\tau^*$ preserves the Hamiltonian, that is, $H(x,v) = H(\S_\tau^*(x,v))$ for all $x\sr\in\mathbb X$, $v\sr\in\mathbb V$, and $\tau\sr\geq 0$.
The map $\S_\tau^*$ also preserves the reference measure $dx \, dv$, that is, $\left| \det D\S_\tau^*(x,v) \right|=1$ for all $x \in \mathbb X$, $v\in\mathbb V$ and $\tau\geq 0$, which is known as Liouville's theorem \citep{liouville1838note}.
\begin{figure}[t]
\centering
\scalebox{.89}{\begin{minipage}{\textwidth}
\begin{algorithm}[H]
\SetKwInOut{Input}{Input}\SetKwInOut{Output}{Output}
\Input{Leapfrog step size, $\epsilon$\\
  Number of leapfrog jumps, $l$\\
  Covariance of the velocity distribution, $C$}
  \vspace{1ex}
  \Output{A draw of Markov chain, $\left(X^{(i)}\right)_{i\in 1:M}$}
  \vspace{1ex}
  Run Algorithm~\ref{alg:sppd} with $\Pi(x,v) = \bar\pi(x) \psi_C(v)$, $p^\text{ref}(x) = 1$, $\tau := (\epsilon, l)$, $\S_\tau(x,v) = \texttt{Leapfrog}(x,v,\epsilon,l,C)$, and $\R_x = -\id$.

\vspace{1ex}
\SetKwProg{Fn}{Function}{}{end}
\Fn{\emph{\texttt{Leapfrog}($x,v,\epsilon,l,C$)}} {
  $v \gets v + \frac{\epsilon}{2}\cdot C \cdot \nabla \log\pi(x)$\\
  $x \gets x + \epsilon v$\\
  Set $j\gets 1$\\
  \While {$j < l$} {
    $v \gets v + \epsilon \cdot C\cdot \nabla \log \pi(x)$\\
    $x \gets x + \epsilon v$\\
    Set $j \gets j+1$
  }
  $v \gets v + \frac{\epsilon}{2} \cdot C \cdot \nabla \log \pi(x)$
}
\caption{Sequential-proposal HMC and leapfrog jump function}
\label{alg:leapfrog}
\end{algorithm}
\end{minipage}}
\end{figure}
A commonly used numerical approximation method for solving the HEM is called the leapfrog method \citep{duane1987hybrid, leimkuhler2004simulating}.
One iteration of the leapfrog method approximates time evolution of a Hamiltonian system for duration $\epsilon$ by alternately updating the velocity and position $(x,v)$ as follows:
\begin{equation}\begin{split}
    v &\gets v + \frac{\epsilon}{2} \cdot C \cdot \nabla \log \pi(x)\\
    x &\gets x + \epsilon v\\
    v &\gets v + \frac{\epsilon}{2} \cdot C \cdot  \nabla \log \pi(x). \label{eqn:leapfrog}
\end{split}\end{equation}
We call the time increment $\epsilon$ the leapfrog step size.

A standard Hamiltonian Monte Carlo algorithm is a specific instance of MCMC algorithms using deterministic kernels described in Section~\ref{sec:sppdDescrip}, where the extended target density $\Pi(x,v)$ is given by $\bar\pi(x)\psi_C(v)$ and the proposal map $\S_\tau$ is given by $l$ leapfrog jumps with step size $\epsilon$, such that the time duration parameter $\tau$ can be understood as the pair $(\epsilon, l)$.
The reversibility condition \eqref{eqn:RR}--\eqref{eqn:TSTS} is satisfied by this $\S_\tau$ with $\R_x \sr= {-}\id$ for all $x \sr\in \mathbb X$.
Each step in the leapfrog method \eqref{eqn:leapfrog} preserves the reference measure $dx\,dv$, so we have $|\det D \S_\tau | \equiv 1$.
It is common to refresh the probability at every iteration (i.e., $p^\text{ref}(x) \equiv 1$).

Sequential-proposal HMC (Algorithm~\ref{alg:leapfrog}) is obtained as a specific case of sequential-proposal MCMC algorithms using deterministic kernels (Algorithm~\ref{alg:sppd}) under the same setting, $\Pi(x,v)=\bar\pi(x)\psi_C(v)$ and $\S_\tau=\S_{(\epsilon,l)}$.
In other words, a proposal $(Y_1,W_1)$ is made by making $l$ leapfrog jumps of size $\epsilon$ starting from $(Y_0,W_0)$, and if the proposal is rejected, a new proposal $(Y_2,W_2)$ is made by making $l$ leapfrog jumps from $(Y_1, W_1)$.
The procedure is repeated until $L$ acceptable proposals are found, or until $N$ proposals have been tried, whichever comes sooner.
The leapfrog jump size $\epsilon$ and the unit number of jumps $l$ may be re-drawn at every iteration or for every new proposal.
As mentioned earlier, \citet{campos2015extra} has proposed extra chance generalized hybrid Monte Carlo (XCGHMC), which is identical to the sequential-proposal approach, except possibly in the way the velocity is refreshed at the end of each iteration.
In generalized HMC \citep{horowitz1991generalized}, the velocity is partially refreshed by setting
\[
V^{(i+1)} = \sin \theta \,V + \cos \theta \,U,
\]
where $V$ is the velocity before refreshement, $U$ is an independent draw from $\mathcal N(0,C)$, and $\theta$ is an arbitrary real number.
It was shown in \citet{campos2015extra} that Markov chains constructed by XCGHMC have the same law as those constructed by Look Ahead Hamiltonian Monte Carlo (LAHMC) developed by \citet{sohl2014hamiltonian}.

A major advantage of HMC algorithms over random walk based algorithms such as random walk Metropolis or Metropolis adjusted Langevin algorithms is that HMC can make a global jump in one iteration \citep{neal2011mcmc}.
The leapfrog method is able to build long trajectories that are numerically stable, provided that the target distribution satisfies some regulatory conditions and the leapfrog step size is less than a certain upper bound \citep{leimkuhler2004simulating}.
Since the solution to the HEM preserves the Hamiltonian, proposals obtained by a numerical approximation to the solution can be accepted with reasonably high probabilities.
Given a fixed length of leapfrog trajectory, the number of leapfrog jumps is inversely proportional to the leapfrog jump size.
Thus an increase in the leapfrog step size leads to a reduced number of evaluations of the gradient of the target density.
On the other hand, decreasing the leapfrog step size tends to increase the mean acceptance probability.
As $\epsilon\to 0$, the average increment in the Hamiltonian at the end of the leapfrog trajectory scales as $\epsilon^4$ \citep{leimkuhler2004simulating}.
In an asymptotic scenario where the target distribution is given by a product of $d$ independent, identical low dimensional distributions and $d$ tends to infinity, the increment in the Hamiltonian converges in distribution to a normal distribution with mean $\mu \epsilon^4 d$ and variance $2 \mu \epsilon^4 d$ for some constant $\mu\sr>0$ dependent on the target density $\bar\pi$ \citep{gupta1990acceptance, neal2011mcmc}.
\citet{beskos2013optimal} showed under some mild regulatory conditions on the target density that as $\epsilon = \epsilon_0 d^{-1/4}$ and $d\to\infty$, the mean acceptance probability tends to $a(\epsilon_0) := 2\Phi\left( - \epsilon_0^2 \sqrt{\mu/2} \right)$ where $\Phi(\cdot)$ denotes the cdf of the standard normal distribution.
The computational cost for obtaining an accepted proposal that is fixed distance away from the current state in HMC is approximately given by
\[
\frac{1}{\epsilon_0 a(\epsilon_0)},
\]
which is minimized when $a(\epsilon_0) = 0.651$ to three decimal places \citep{beskos2013optimal, neal2011mcmc}.
Empirical results also support targeting the mean acceptance probability of around 0.65 \citep{sexton1992hamiltonian, neal1994improved}.
HMC using sequential proposals can improve on the numerical efficiency by increasing the probability that the constructed Markov chain makes a nonzero move at each iteration.
A numerical study in Section~\ref{sec:HMCnumerical} shows that HMC with sequential proposals leads to higher effective sample sizes per computation time compared to the standard HMC on a toy model.

\subsection{Sequential-proposal No-U-Turn sampler algorithms}\label{sec:spNUTS}
\begin{figure}[t!]
\centering
\scalebox{.89}{\begin{minipage}{\textwidth}
\begin{algorithm}[H]
\SetKwInOut{Input}{Input}\SetKwInOut{Output}{Output}
\Input{Leapfrog step size, $\epsilon$}
\vspace{1ex}
\Output{A draw of Markov chain, $\left(X^{(i)}\right)_{i\in 1:M}$}
\vspace{1ex}
\textbf{Initialize:} Set $X^{(0)}$ arbitrarily

\For {$i\gets 0\col M{-}1$}{
  Draw $\Lambda\sim \text{unif}(0,1)$ and $V\sim \mathcal N(0,I_d)$\\
  Start with an initial tree $T^0 :=\{(X^{(i)}, V)\}$ having a single leaf\\
  \For {$j \geq 1$}{
    Draw $\sigma_j \sim \text{unif}(\{-1,1\})$\\
    Make $2^{j-1}$ leapfrog jumps either forward or backward depending on $\sigma_j$, forming a new binary tree $T'$ of the same size as $T^{j-1}$.\\
    \uIf {every sub-binary trees of $T'$ is such that the two leaves on the opposite sides do not satisfy the U-turn condition \eqref{eqn:U_cond_NUTS}} {
      Set $T^j \gets T^{j-1} \cup T'$
    }
    \Else {
      \texttt{break}
    }
    \If {the two opposite leaves of $T^j$ satisfies the U-turn condition} {
      \texttt{break}
    }
  }
  Let $T^{j_0}$ be the final binary tree constructed\\
  \emph{Naive NUTS (Algorithm~2 in \citet{hoffman2014no})}: Take for $X^{(i+1)}$ one of the leaf nodes of $T^{j_0}$ that are acceptable, i.e.,  $\frac{\Pi(x,v)}{\Pi(X^{(i)},V)} > \Lambda$, uniformly at random\\
  \emph{Efficient NUTS (Algorithm~3 in \citet{hoffman2014no})}: Denote by $n_a(T)$ the number of acceptable leaf nodes in a binary tree $T$, and \\
  \For {$j \gets j_0\col 0$} {
    With probability $1 \land \frac{n_a(T^j\sr\setminus T^{j-1})}{n_a(T^{j-1})}$, take for $X^{(i+1)}$ one of the acceptable leaf nodes of $T^j\sr\setminus T^{j-1}$ uniformly at random, and break out from \textbf{for} loop
  }
}
\caption{The No-U-Turn samplers by \citet{hoffman2014no}}
\label{alg:NUTS}
\end{algorithm}
\end{minipage}}
\end{figure}

\begin{figure}
  \centering
  \includegraphics[width=.55\textwidth]{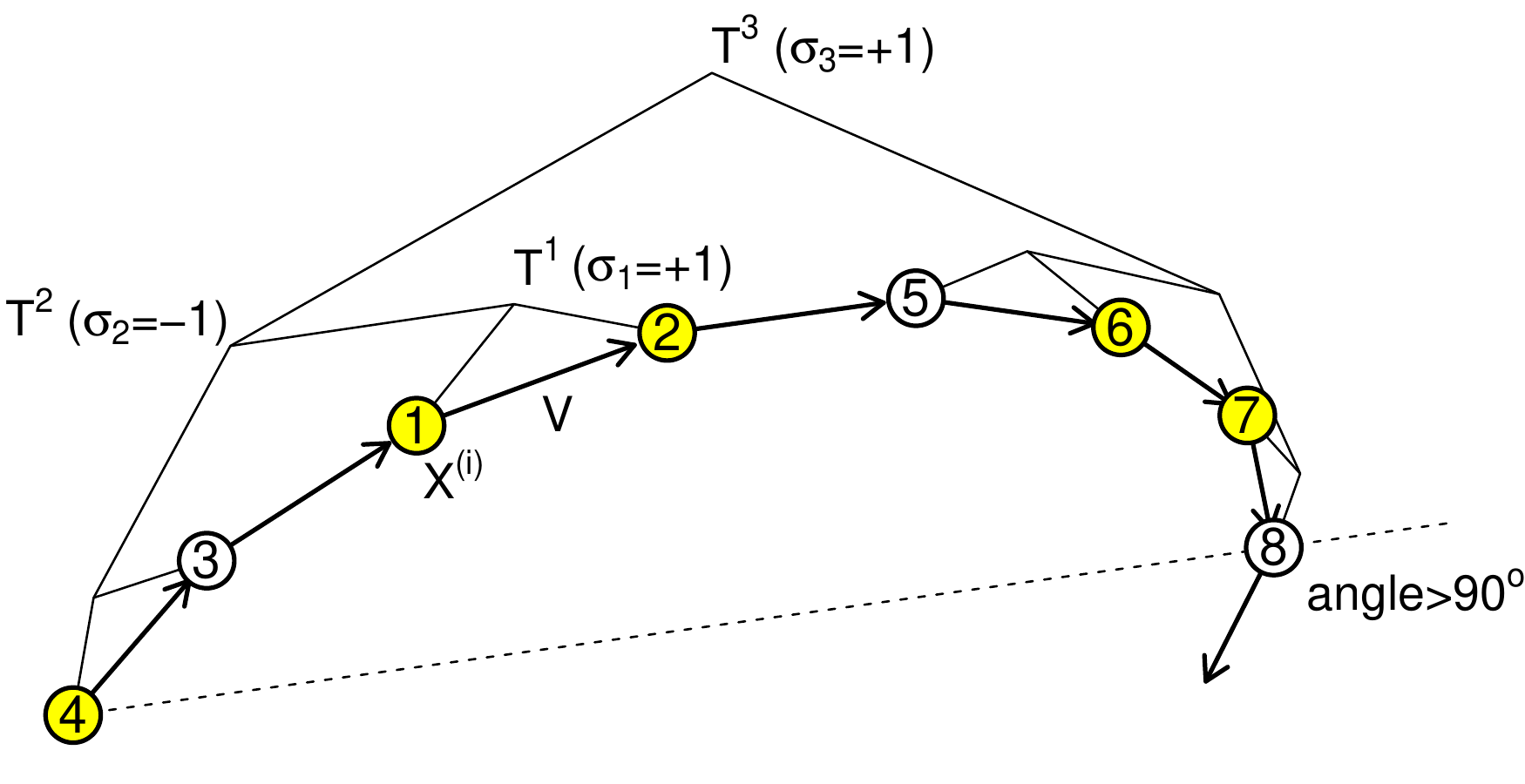}
  \caption{An example diagram of a final binary tree constructed in an iteration of the NUTS algorithm by \citet{hoffman2014no}. The numbered circles indicate the points along a leapfrog trajectory in the order they are added. The binary tree stops expanding at $T^3$ because there is a U-turn between leaf nodes 4 and 8. The next state of the Markov chain is selected randomly among the acceptable states, colored in yellow.}
  \label{fig:NUTSdiagram}
\end{figure}

As previously mentioned, a key advantage of HMC over random walk based methods comes from its ability to make long moves.
If the number of leapfrog jumps is too small, the Markov chain from HMC may essentially behave like a random walk because the velocity is randomly refreshed before a long leapfrog trajectory is built.
Conversely, if the number of leapfrog jumps is too large, the trajectory may double back on itself, since the solution to the Hamiltonian equation of motion is confined to a level set of the Hamiltonian.
However, simply stopping the leapfrog jumps when the trajectory starts doubling back on itself generally destroys the detailed balance of the Markov chain with respect to the target distribution.
In order to solve this issue, \citet{hoffman2014no} proposed the No-U-Turn sampler (NUTS).
In this section, we will briefly explain the NUTS algorithm and discuss the connection with sequential-proposal framework.
In addition, we will propose two new algorithms that address the same issue of trajectory doubling.

In the No-U-Turn sampler, leapfrog trajectories are repeatedly extended twice in size in either forward or backward direction in the form of binary trees, until a ``U-turn'' is observed (see Figure~\ref{fig:NUTSdiagram}).
The binary tree starts from the initial node $(X^{(i)},V)$, where $V$ is the velocity drawn from the standard multivariate normal distribution at the beginning of the $i$-th iteration.
The direction of binary tree expansion is determined by a sequence of $\text{unif}(\{-1,1\})$ variables denoted by $(\sigma_j)_{j\geq 0}$.
The expansion of the binary tree stops if a U-turn is observed between the two leaf nodes on opposite sides of any of the sub-binary trees of the current tree.
A position-velocity pair $(x,v)$ and another pair $(x',v')$ that is ahead of $(x,v)$ on a leapfrog trajectory are said to satisfy the U-turn condition if either
\begin{equation}
(x'\sr-x) \cdot v' \leq 0 \quad \text{or} \quad (x'\sr-x) \cdot v \leq 0,
\label{eqn:U_cond_NUTS}
\end{equation}
where $\cdot$ denotes the inner product in Euclidean spaces.
If there is a U-turn within the lastly added half of the current binary tree, the other half without a U-turn is taken as the final binary tree.
On the other hand, if a U-turn is only observed between the two opposite leaf nodes of the current binary tree but not within any of the sub-binary trees, the current binary tree is taken as the final binary tree.
The next state of the Markov chain $X^{(i+1)}$ is set to one of the acceptable leaf nodes in the final binary tree.
A leaf node $(x,v)$ is deemed acceptable if $\frac{\Pi(x,v)}{\Pi(X^{(i)},V)} > \Lambda$.
Here $\Pi(x,v) := \bar\pi(x) \psi_{I_d}(v)$, where $\psi_{I_d}$ denotes the density of the $d$ dimensional standard normal distribution.
\citet{hoffman2014no} gives two versions of the NUTS algorithm.
The naive version selects the next state of the Markov chain uniformly at random among the acceptable leaf nodes in the final binary tree.
The efficient version preferentially selects a random leaf node in sub-binary trees that are added later.
A pseudocode for these NUTS algorithms is given in Algorithm~\ref{alg:NUTS}.
By construction, for every leaf node in the final binary tree, it is possible to build the same final binary tree starting from that leaf node using a unique sequence of directions.
Since each direction is drawn from $\text{unif}(\{-1,1\})$, the probability of constructing the final binary tree is the same when started from any of its leaf nodes.
This symmetric relationship ensures that the constructed Markov chain is reversible with respect to the target distribution.

The NUTS algorithm shares with the sequential-proposal MCMC framework the key feature that the decisions of acceptance or rejection of proposals are mutually coupled via a single uniform$(0,1)$ random variable drawn at the start of each iteration.
Furthermore, the naive version of the algorithm (Algorithm~2 in \citet{hoffman2014no}) can be viewed as a specific case of the sequential-proposal MCMC algorithm as follows.
At each iteration a binary tree starting from $(X^{(i)},V)$ is expanded until a U-turn is observed, as described above.
Proposals are made sequentially by selecting one of the leaf nodes of the final binary tree uniformly at random.
The first proposal that is acceptable is taken as the next state of the Markov chain.
Since the next state of the Markov chain is then selected uniformly at random among the \emph{acceptable} leaf nodes in the final binary tree, this sequential-proposal approach is equivalent to the naive NUTS.

There are two features of the NUTS algorithm that may, unfortunately, compromise the numerical efficiency.
First, the point chosen for the next state of the Markov chain is generally not the farthest point on the leapfrog trajectory from the initial point.
The NUTS typically constructs a leapfrog trajectory that is longer than the distance between the initial point and the point selected for the next state of the Markov chain due to the requirement of detailed balance.
Second, the NUTS evaluates the log target density at every point on the constructed leapfrog trajectory to determine the acceptability.
This can result in a substantial overhead if the computational cost of evaluating the log target density is at least comparable to that of evaluating the gradient of the log target density.
We propose two alternative No-U-Turn sampling algorithms, which we call spNUTS1 and spNUTS2, addressing these two issues.

\newcommand\cosangle{\text{cosAngle}}
\begin{figure}[t!]
\centering
\scalebox{.89}{\begin{minipage}{\textwidth}
\begin{algorithm}[H]
\SetKwInOut{Input}{Input}\SetKwInOut{Output}{Output}
\Input{Leapfrog step size, $\epsilon$\\
  Unit number of leapfrog jumps, $l$\\
  Covariance of velocity distribution, $C$\\
  Scheduled checkpoints for a U-turn, $(b_j)_{j\in 1\col j_{\max}}$\\
  Distribution for the stopping value of cosine angle, $\zeta$ \\
  Maximum number of proposals tried, $N$}
\vspace{1ex}
\Output{A draw of Markov chain, $\left(X^{(i)}\right)_{i\in 1:M}$}
\vspace{1ex}
\textbf{Initialize:} Set $X^{(0)}$ arbitrarily

\For {$i\gets 0\col M{-}1$}{
  Set $Y_0 \gets X^{(i)}$ and draw $W_0 \sim \mathcal N(0,C)$\\
  Set $X^{(i+1)} \gets X^{(i)}$\\
  Draw $\Lambda\sim \text{unif}(0,1)$ and set $H_\text{max} \gets -\log\pi(Y_0) + \frac{1}{2}\Vert W_0\Vert_C^2 - \log \Lambda$\\
  \For {$n \gets 1\col N$} {
    Draw $c \sim \zeta(\cdot)$\\
    $(Y_n, W'_n) \gets \texttt{spNUTS1Kernel}(Y_{n-1}, W_{n-1}, c$)\\
    \If {$-\log\pi(Y_n) + \frac{1}{2}\Vert W'_n\Vert_C^2 < H_\text{max}$} {
      Set $X^{(i+1)} \gets Y_n$\\
      \texttt{break}
   }
    Draw $U\sim \mathcal N(0,C)$ and set $W_n \gets U \cdot \frac{\Vert W'_n \Vert_C}{\Vert U \Vert_C}$
  }
}

\vspace{1ex}
\SetKwProg{Fn}{Function}{}{end}
\Fn{\emph{\texttt{spNUTS1Kernel}($x_0, v_0, c$)}}{
  \For {$k \gets 1\col b_1$} {
    $(x_k,v_k) \gets \texttt{Leapfrog}(x_{k-1},v_{k-1},\epsilon,l,C)$
  }
  Set $j\gets 1$\\
  \While {$\cosangle(x_{b_j}{-}x_0, v_0\giventh C) > c$ and $\cosangle(x_{b_j}{-}x_0, v_{b_j}\giventh C) >c$ and $j<j_\text{max}$} {
    Set $j\gets j+1$\\
    \For {$k \gets b_{j-1}{+}1\col b_j$} {
      $(x_k,v_k) \gets \texttt{Leapfrog}(x_{k-1},v_{k-1},\epsilon,l,C)$
    }
  }
  \uIf {$\cosangle (x_{b_j}{-}x_{b_j-b_{j'}}, v_{b_j} \giventh C) >c$ and $\cosangle(x_{b_j}{-}x_{b_j-b_{j'}}, v_{b_j-b_{j'}} \giventh C) >c$ for all $j' \in 1\col j{-}1$} {
    \texttt{return} $(x_{b_j},v_{b_j})$
  }
  \Else {
    \texttt{return} $(x_0,v_0)$
  }
}
\caption{Sequential-proposal No-U-Turn sampler---Type 1 (spNUTS1)}
\label{alg:spNUTS1}
\end{algorithm}
\end{minipage}}
\end{figure}

In spNUTS1, leapfrog trajectories are extended in one direction according to a given length schedule until a U-turn is observed, and only the endpoint of the trajectory is checked for acceptability.
If the endpoint is not acceptable, a new trajectory is started from that point with a refreshed velocity.
A pseudocode of spNUTS1 is given in Algorithm~\ref{alg:spNUTS1}.
At the start of each iteration, a velocity vector is drawn from a multivariate normal distribution $\mathcal N(0,C)$ where $C$ is a $d\sr\times d$ positive definite matrix.
A leapfrog trajectory started from the current state of the Markov chain and the drawn velocity vector, denoted by $(x_0,v_0)$, is repeatedly extended in units of $l$ jumps.
The position-velocity pair after $lk$ leapfrog jumps is denoted by $(x_k,v_k)$.
We note that the leapfrog updates \eqref{eqn:leapfrog} should also use the same matrix $C$ as the covariance of the velocity distribution.
At preset checkpoints determined by a finite increasing sequence $(b_j)_{j\in 1:j_{\max}}$, the algorithm calculates the angles between the displacement $x_{b_j}{-}x_0$ and the velocities $v_0$ and $v_{b_j}$.
In order to take into account the given covariance structure $C$, we define a $C$-norm of a vector $x \in \mathbb R^d$ as
\[
\Vert x \Vert_C := \sqrt{x^T C^{-1} x},
\]
and the cosine of the angle between two vectors $x$ and $x'$ as
\begin{equation}
  \cosangle(x,x'\giventh C) := \frac{x^TC^{-1}x'}{\Vert x\Vert_C \cdot \Vert x'\Vert_C}.
  \label{eqn:cosangle}
\end{equation}
The leapfrog trajectory stops at $(x_{b_j},v_{b_j})$ if either of the following inequalities hold for a given $c$:
\begin{equation}
  \cosangle(x_{b_j}{-}x_0,v_0\giventh C) \leq c \quad \text{ or } \quad \cosangle(x_{b_j}{-}x_0, v_{b_j} \giventh C)\leq c.
  \label{eqn:stopping_cond_spNUTS1}
\end{equation}
The value of $c$ can be fixed at a constant value or randomly drawn for each trajectory.
Algorithm~\ref{alg:spNUTS1} describes a case where $c$ is randomly drawn from a distribution denoted by $\zeta$.
If the above stopping condition \eqref{eqn:stopping_cond_spNUTS1} is not satisfied until $j\,{=}\,j_{\max}$, the trajectory stops at $(x_{b_{j_{\max}}},v_{b_{j_{\max}}})$.
\begin{figure}
  \centering
  \includegraphics[width=.8\textwidth]{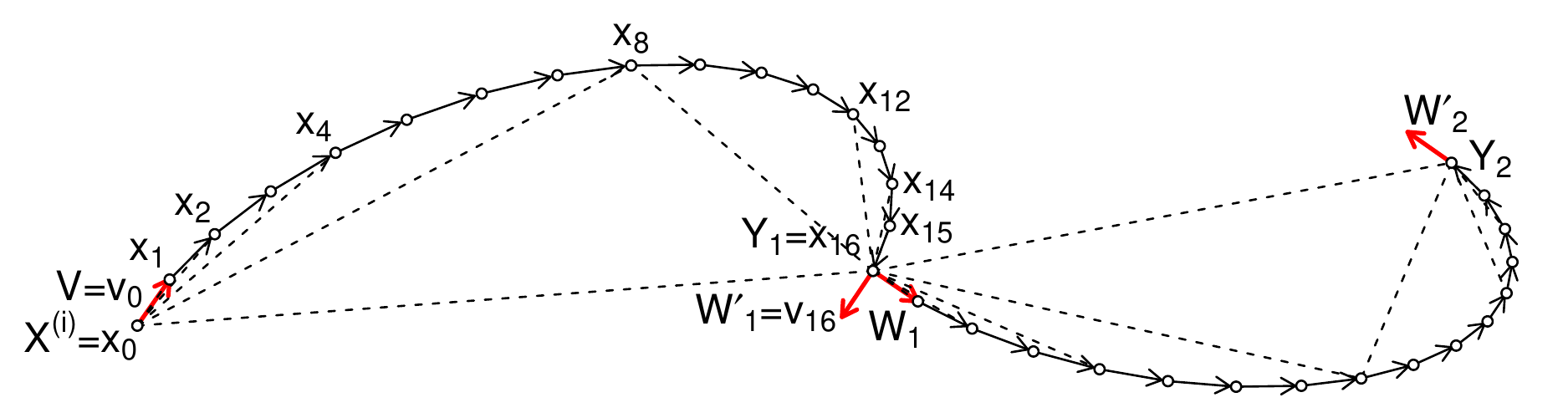}
  \caption{An example diagram for an iteration in spNUTS1 where $b_j\sr=2^{j-1}$. The first proposal $(Y_1,W'_1) \sr= (x_{16},v_{16})$ was rejected, and the second trajectory was started with a refreshed velocity $W_1$. The pairs of points for which the U-turn condition is checked are connected by dashed line segments.}
  \label{fig:spNUTS1diagram}
\end{figure}
The final state at the stopped trajectory $(x_{b_j},v_{b_j})$ makes the first proposal $(Y_1,W'_1)$.
It is taken as the next state in the Markov chain if the following two conditions are met.
First, the state $(x_{b_j},v_{b_j})$ has to be acceptable by satisfying
\begin{equation}
  \log\Lambda < \log\pi(x_{b_j})+\log\psi_C(v_{b_j})-\log\pi(x_0)-\log\psi_C(v_0).
  \label{eqn:acceptability_cond_spNUTS1}
\end{equation}
Since the Hamiltonian of the state $(x_{b_j},v_{b_j})$ is given by $-\bar\pi(x_{b_j}){-}\log\psi_C(v_{b_j})$, the acceptability criterion \eqref{eqn:acceptability_cond_spNUTS1} can be interpreted as that the increase in the Hamiltonian compared to the initial state $(x_0,v_0)$ is at most $-\log\Lambda$.
The second required condition is that 
\begin{equation}
  \cosangle(x_{b_j}\sr-x_{b_j-b_{j'}}, v_{b_j}\giventh C) > c ~ \text{ and } ~ \cosangle(x_{b_j}\sr-x_{b_j-b_{j'}}, v_{b_j-b_{j'}}\giventh C) > c ~~ \text{ for all }1\sr\leq j' \sr\leq j{-}1.
  \label{eqn:symmetry_cond_spNUTS1}
\end{equation}
Since the trajectory has been extended to $(x_{b_j}, v_{b_j})$, the stopping condition \eqref{eqn:stopping_cond_spNUTS1} was not satisfied between the initial state $(x_0,v_0)$ and any of the previously visited states $\{(x_{b_{j'}},v_{b_{j'}})\giventh 1 \sr\leq j' \sr< j\}$.
When the trajectory is viewed in the reverse order, an analogous situation is that the final state $(x_{b_j}, v_{b_j})$ and the intermediate states $\{(x_{b_j-b_{j'}}, v_{b_j-b_{j'}}) \giventh 1\sr\leq j' \sr< j\}$ satisfy \eqref{eqn:symmetry_cond_spNUTS1}.
This symmetry condition is necessary to establish detailed balance of the Markov chain.
If the symmetry condition is not satisfied, the next state of the Markov chain is set to the current state $x_0$.
In the case where both the acceptability condition \eqref{eqn:acceptability_cond_spNUTS1} and the symmetry condition \eqref{eqn:symmetry_cond_spNUTS1} are satisfied, $x_{b_j}$ is taken as the next state of the Markov chain.
If the symmetry condition is satisfied but the acceptability condition is not, the algorithm starts a new leapfrog trajectory from $x_{b_j}$ with a new velocity $W_1$.
The new velocity $W_1$ is obtained by drawing a random vector from the velocity distribution $\psi_C$ and rescaling it such that the $C$-norm is preserved: $\Vert W_1 \Vert_C = \Vert W'_1 \Vert_C$, or $\log\psi_C(W_1) = \log\psi_C(W'_1)$.
The new trajectory is extended until the stopping condition \eqref{eqn:stopping_cond_spNUTS1} with respect to the starting pair $(Y_1,W_1)$ is satisfied.
This procedure is repeated for subsequent proposals $(Y_n,W'_n)$, $n\sr\geq 2$.
If all $N$ proposals $\{(Y_n,W'_n) \giventh n \sr \in 1\col N\}$ are unacceptable, the next state of the Markov chain is set to the current state $X^{(i)}$.
The Markov chain also stays at its current state if the symmetry condition is not satisfied by at least one of the constructed leapfrog trajectories.

Algorithm~\ref{alg:spNUTS1} can be considered as a special case of sequential-proposal MCMC algorithms using deterministic kernels (Algorithm~\ref{alg:sppd}) where $L\sr=1$ and the proposal map $\S_\tau$ is given by a function that maps the starting state $(x_0,v_0)$ to the final state $(x_{b_j},v_{b_j})$ of the stopped trajectory, but with a differing feature that the direction of the velocity is randomly refreshed between proposals.
In practice, extending the trajectory length at an exponential rate by setting, for example, $b_j \sr= 2^{j-1}$ can lead to a high probability that the symmetry condition is satisfied.
\citet{hoffman2014no} also remarked (based on Figure~5 in their paper) that as the size of binary trees were repeatedly doubled, the U-turn condition was satisfied most of the time only by the two opposite leaf nodes of the final binary tree but not by the opposite leaf nodes of any of the sub-binary trees, for all examples they considered including ones arising from practical applications.
The choice $b_j\sr= 2^{j-1}$ also makes it easy to predict the checkpoints for the symmetry condition, $\{b_j\sr-b_{j'}\giventh j'\sr\leq j{-}1\}$.
We note that the numerical efficiency of Algorithm~\ref{alg:spNUTS1}, as well as that of the original NUTS algorithm, can be improved by tuning the covariance $C$ (see the numerical results in Section~\ref{sec:HMCnumerical}).

\renewcommand{\topfraction}{.85}
\begin{figure}[t!]
\centering
\scalebox{.89}{\begin{minipage}{\textwidth}
\begin{algorithm}[H]
\SetKwInOut{Input}{Input}\SetKwInOut{Output}{Output}
\Input{Leapfrog step size, $\epsilon$\\
  Unit number of leapfrog jumps, $l$\\
  Covariance of velocity distribution, $C$\\
  Maximum number of proposals, $N$\\
  Scheduled checkpoints for a U-turn, $(b_j)_{j\in 1\col j_{\max}}$\\
  Distribution for the stopping value of cosine angle, $\zeta$
}
\vspace{1ex}
\Output{A draw of Markov chain, $\left(X^{(i)}\right)_{i\in 1:M}$}
\vspace{1ex}
\textbf{Initialize:} Set $X^{(0)}$ arbitrarily

\For {$i\gets 0\col M{-}1$}{
  Draw $V \sim \mathcal N(0,C)$\\
  Draw $c \sim \zeta(\cdot)$\\
  Draw $\Lambda \sim \text{unif}(0,1)$ and set $\Delta \gets - \log \Lambda$\\
  Set $\left(X^{(i+1)}, V^{(i+1)}\right) \gets \texttt{spNUTS2Kernel}( X^{(i)}, V, \Delta, \epsilon, C, c)$\\
}

\vspace{1ex}
\SetKwProg{Fn}{Function}{}{end}
\Fn{\emph{\texttt{spNUTS2Kernel}($x_0, v_0, \Delta, \epsilon, C, c$)}} {
  Set $H_\text{max} \gets -\log \pi(x_0) + \frac{1}{2} \Vert v_0 \Vert_C^2 + \Delta$ \\
  \For {$k \gets 1\col b_1$} {
    $(x_k, v_k, \mathbf f) \gets \texttt{FindNextAcceptable}(x_{k-1}, v_{k-1}, \epsilon, H_\text{max}, C)$\\
    \textbf {if} $\mathbf f=0$ \textbf{then} \texttt{return} $(x_0,v_0)$  \qquad // the case where no acceptable states were found \\
  }
  Set $j\gets 1$\\
  \While {$\cosangle(x_{b_j}{-}x_0, v_0 \giventh C) > c$ and $\cosangle(x_{b_j}{-}x_0, v_{b_j} \giventh C) > c$ and $j< j_\text{max}$}{
    Set $j \gets j\,{+}\,1$\\
    \For {$k \gets b_{j-1}{+}1\col b_j$} {
      $(x_k, v_k, \mathbf f) \gets \texttt{FindNextAcceptable}(x_{k-1}, v_{k-1}, \epsilon, H_\text{max}, C)$\\
      \textbf {if} $\mathbf f=0$ \textbf{then} \texttt{return} $(x_0,v_0)$\\
    }
  }
  \uIf {$\cosangle(x_{b_j}{-}x_{b_j-b_{j'}}, v_{b_j}\giventh C) >c$ and $\cosangle(x_{b_j}{-}x_{b_j-b_{j'}}, v_{b_j-b_{j'}}\giventh C) > c$ for all $j' \in 1\col j{-}1$} {
    \texttt{return} $(x_{b_j},v_{b_j})$
  }
  \Else {
    \texttt{return} $(x_0, v_0)$
  }
}

\vspace{1ex}
\Fn{\emph{\texttt{FindNextAcceptable}($x, v, \epsilon, H_\text{max}, C$)}} {
  Set $(x_\text{try}, v_\text{try}) \gets (x,v)$\\
  \For {$n \gets 1\col N$} {
    $(x_\text{try},v_\text{try}) \gets \texttt{Leapfrog}(x_\text{try},v_\text{try},\epsilon, l,C)$\\
    \textbf{if} {$-\log \pi(x_\text{try}) + \frac{1}{2}\Vert v_\text{try}\Vert_C^2 < H_\text{max}$} \textbf{then} \texttt{return} $(x_\text{try},v_\text{try}, 1)$
  }
  \texttt{return} $(x, v, 0)$
}
\caption{Sequential-proposal No-U-Turn sampler---Type 2 (spNUTS2)}
\label{alg:spNUTS2}
\end{algorithm}
\end{minipage}}
\end{figure}

The computational efficiency of Algorithm~\ref{alg:spNUTS1} may be compared favorably to that of the NUTS algorithm.
Some numerical results are given in Section~\ref{sec:HMCnumerical}.
Suppose that the average computational cost of evaluating the log target density is denoted by $c_\pi$ and that of evaluating the gradient of the log target density by $c_\triangledown$.
Assume also that in the NUTS and spNUTS1 (Algorithm~\ref{alg:spNUTS1}), the computational cost of checking the U-turn condition such as \eqref{eqn:stopping_cond_spNUTS1} is denoted by $c_U$.
If a leapfrog trajectory stops after making $\xi$ sets of $l$ unit jumps on average, the NUTS algorithm evaluates the log target density $\xi$ times and checks the U-turn condition $\xi$ times on average.
Thus the average computational cost for one iteration of the NUTS algorithm is given by $(lc_\triangledown+c_\pi+c_U) \cdot \xi$.
In comparison, spNUTS1 evaluates the log target density once and checks the U-turn condition $2\log_2 \xi\sr+ 1$ times if $b_j \sr= 2^{j-1}$ for $j\sr\in 1\col j_{\max}$.
The average computational cost of obtaining a proposal in spNUTS1 is given by $lc_{\triangledown} \xi + c_\pi + c_U(2\log_2\xi \sr+1)$, and the average cost of finding a new state for the Markov chain different from the current state is roughly given by $\frac{1}{a \cdot \tilde a} \big(lc_{\triangledown} \xi \sr+ c_\pi \sr+ c_U(2\log_2\xi\sr+1)\big)$, where $a$ denotes the mean acceptance probability of a proposal and $\tilde a$ denotes the average probability that the symmetry condition is satisfied.
Both $a$ and $\tilde a$ can be made close to unity in practice, so there is a computational gain in using spNUTS1 over the original NUTS if $\xi$ is large and $c_\pi$ is at least comparable to $c_\triangledown$.
The number $l$ can be chosen to one unless there is an issue of numerical instability of leapfrog trajectories.
We note that the increase in the cost by a factor of $\frac{1}{a}$ can be partially negated, in terms of the overall numerical efficiency, due to the fact if a proposal is deemed unacceptable, the next proposal can be further away from the initial state $Y_0$.
The average distance between two consecutive states in the constructed Markov chain is a measure widely used to evaluate the numerical efficiency of a MCMC algorithm \citep{sherlock2010random}.

The proof of the following proposition is given in Appendix~\ref{sec:proof_spNUTS}.
\begin{prop}  \label{prop:detailedbalance_spNUTS1}
  The Markov chain $\left(X^{(i)}\right)_{i\in1:M}$ constructed by the sequential-proposal No-U-Turn sampler of type 1 (spNUTS1, Algorithm~\ref{alg:spNUTS1}) is reversible with respect to the target density $\bar\pi$.
\end{prop}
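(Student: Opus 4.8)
The plan is to establish detailed balance for the $x$-chain with respect to $\bar\pi$, following the template of the proof of Proposition~\ref{prop:detailedbalance_spM}. One decomposes $\mathcal P[X^{(i)}\in A,\,X^{(i+1)}\in B]$, with $X^{(i)}\sim\bar\pi$, according to which proposal (if any) becomes $X^{(i+1)}$, shows each piece is invariant under interchanging $A$ and $B$, and sums. The ``stay'' contribution — all $N$ proposals unacceptable, or some constructed leapfrog trajectory fails its symmetry condition \eqref{eqn:symmetry_cond_spNUTS1}, so that $X^{(i+1)}=X^{(i)}$ — is trivially symmetric in $A,B$, so the real work is the event that the $n$-th proposal is accepted as a genuine move, for $n\in 1\col N$.

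The structural fact I would isolate as a lemma is a reversibility property of the trajectory map. For a value $c$ of the stopping cosine, let $\Phi_c$ be the map $(x_0,v_0)\mapsto (x_{b_j},v_{b_j})$ produced by \texttt{spNUTS1Kernel} — with $j$ the stopping index and $(x_k,v_k)$ the iterates of the leapfrog map — and let $G_c$ be the set on which the symmetry condition \eqref{eqn:symmetry_cond_spNUTS1} holds, i.e.\ on which the kernel returns $\Phi_c(x_0,v_0)$ rather than $(x_0,v_0)$. With $\mathcal T(x,v):=(x,-v)$ (here $\mathcal R_x=-\id$), the lemma asserts $\mathcal T\Phi_c(G_c)\subseteq G_c$, $\mathcal T\circ\Phi_c\circ\mathcal T\circ\Phi_c=\id$ on $G_c$, and $|\det D\Phi_c|\equiv 1$ on $G_c$. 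The Jacobian claim is immediate: away from the negligible set where some cosine equals $c$, the stopping index $j$ is locally constant, so $\Phi_c$ is locally a fixed power of the leapfrog map, which preserves $dx\,dv$ by Liouville's theorem. The involution claim is the crux and the main obstacle: I would show that running \texttt{spNUTS1Kernel} from the reversed endpoint $(x_{b_j},-v_{b_j})$ reconstructs the reversed leapfrog trajectory (by \eqref{eqn:TSTS}), stops precisely at $(x_0,-v_0)$, and there satisfies its own symmetry condition. Using $\cosangle(-a,-b\giventh C)=\cosangle(a,b\giventh C)$ and $\cosangle(-a,b\giventh C)=-\cosangle(a,b\giventh C)$, one checks that (i) the reversed trajectory's stopping test at its final checkpoint $b_j$ coincides with the forward stopping test at $b_j$, which caused the forward stop (or $j=j_{\max}$ on both sides), so the reversed trajectory stops exactly at $x_0$; (ii) the reversed trajectory failing to stop at an intermediate checkpoint $b_{j'}$, $j'<j$, is equivalent to the forward symmetry condition at index $j'$, which holds since $(x_0,v_0)\in G_c$, so the reversed trajectory does not stop early; and (iii) the reversed trajectory's own symmetry condition is equivalent to the forward stopping test having failed at every $b_{j'}$, $j'<j$, which holds because the forward trajectory extended past those checkpoints. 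This interlocking of the length schedule $(b_j)$, the stopping condition \eqref{eqn:stopping_cond_spNUTS1}, and the symmetry condition \eqref{eqn:symmetry_cond_spNUTS1} is precisely what makes $\Phi_c$ an involution on $G_c$.

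Next I would dispose of the velocity refreshment $W_n\gets U\,\|W'_n\|_C/\|U\|_C$, $U\sim\mathcal N(0,C)$, performed between proposals. It preserves $\|v\|_C$ (the position being untouched), hence the value $-\log\pi(x)+\tfrac12\|v\|_C^2$, so every acceptance test in the iteration is against the single level $H_{\max}=-\log\pi(Y_0)+\tfrac12\|W_0\|_C^2-\log\Lambda$; the acceptance decisions are therefore coupled through $\Lambda$ exactly as in Proposition~\ref{prop:detailedbalance_spM}, and integrating out $\Lambda$ produces, up to a fixed constant, the expression
\[
  \min\{\pi(Y_0)\psi_C(W_0),\,\pi(Y_n)\psi_C(W'_n)\}-\min\Big\{\pi(Y_0)\psi_C(W_0),\,\pi(Y_n)\psi_C(W'_n),\,\max_{k\in 1:n-1}\pi(Y_k)\psi_C(W'_k)\Big\},
\]
which is symmetric under the reversal relabeling. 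Moreover, since $\psi_C(v)$ depends on $v$ only through $\|v\|_C$, the refreshment kernel $R\big((x,v),\cdot\big):=\mathrm{Law}\big(U\|v\|_C/\|U\|_C\big)$ is reversible with respect to $\psi_C$ — in polar $C$-coordinates $\psi_C(v)\,dv\,R((x,v),dw)$ factors as a radial measure times (uniform $\otimes$ uniform) on each $C$-sphere, which is symmetric in $v\leftrightarrow w$ — and is invariant under $v\mapsto -v$, so each forward refresh can be matched, with no stray Jacobian factor, by a refresh of the reversed chain.

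Finally, assembly: write $\mathcal P[X^{(i)}\in A,\,X^{(i+1)}\in B,\,\text{the $n$-th proposal is accepted as a genuine move}]$ as an integral over $(Y_0,W_0)$, the stopping cosines $c_1,\dots,c_n$, the intermediate refreshed velocities, and $\Lambda$, carrying indicators that each trajectory lies in the relevant $G_{c_k}$, that $(Y_k,W'_k)=\Phi_{c_k}(Y_{k-1},W_{k-1})$, and that the $\Lambda$-acceptance pattern holds. Applying the lemma step by step — $\mathcal T\Phi_{c_k}\mathcal T$ reverses the $k$-th trajectory and sends $G_{c_k}$-data to $G_{c_k}$-data with unit Jacobian — together with reversibility and negation-invariance of $R$, the symmetry of the displayed $\Lambda$-factor, and $\psi_C(-v)=\psi_C(v)$ with $\bar\pi$ appearing on both sides, shows this integral equals its image under $(Y_0,W_0)\leftrightarrow(Y_n,-W'_n)$, $c_k\leftrightarrow c_{n+1-k}$, reversal of the intermediate velocities, and $A\leftrightarrow B$. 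Summing over $n$ and adding the symmetric stay term gives the detailed balance identity, hence reversibility. (Equivalently one may phrase the whole argument as $\Pi$-reversibility of the kernel induced on $\mathbb X\times\mathbb V$ — the natural home for the refreshment-reversibility step — and then read off the $x$-marginal.)
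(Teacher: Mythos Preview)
Your proposal is correct and follows the same overall structure as the paper's proof: fix (or condition on) the stopping cosines, decompose by which proposal $n$ is accepted, integrate out $\Lambda$ to obtain the symmetric $\min$--$\max$ factor, exploit that the trajectory map is (up to $\mathcal T$) an involution on the set $G_c$ where the symmetry condition holds, and sum. You in fact supply a more detailed justification of the involution property than the paper, which simply asserts $\S(y_k,\bar w'_k)=(y_{k-1},\bar w_{k-1})$ ``due to the symmetric nature of the stopping condition'' without spelling out the interlocking of \eqref{eqn:stopping_cond_spNUTS1} and \eqref{eqn:symmetry_cond_spNUTS1} that you describe.

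The one substantive methodological difference is the treatment of the velocity refreshment between proposals. The paper introduces the explicit bijection $(w'_k,u_k)\mapsto(w_k,u'_k)$ with $u'_k:=w'_k\,\|u_k\|_C/\|w'_k\|_C$ and proves, via a direct matrix computation (Lemma~\ref{lem:spNUTS1_volelement}), that this map has unit Jacobian determinant; this yields $dy_0\,dw_0\prod_k du_k=dy_n\,dw'_n\prod_k du'_k$ and makes the change of variables entirely mechanical. You instead observe that the refreshment kernel is uniform on each $C$-sphere and hence $\psi_C$-reversible and negation-invariant, so forward and reversed refreshments match at the level of measures without any Jacobian bookkeeping. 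Both routes are valid; yours is shorter and more conceptual, while the paper's explicit bijection keeps the argument in Euclidean coordinates throughout and names precisely which auxiliary variable plays the role of $U_k$ in the reversed chain.
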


\begin{figure}
  \centering
  \includegraphics[width=0.6\textwidth]{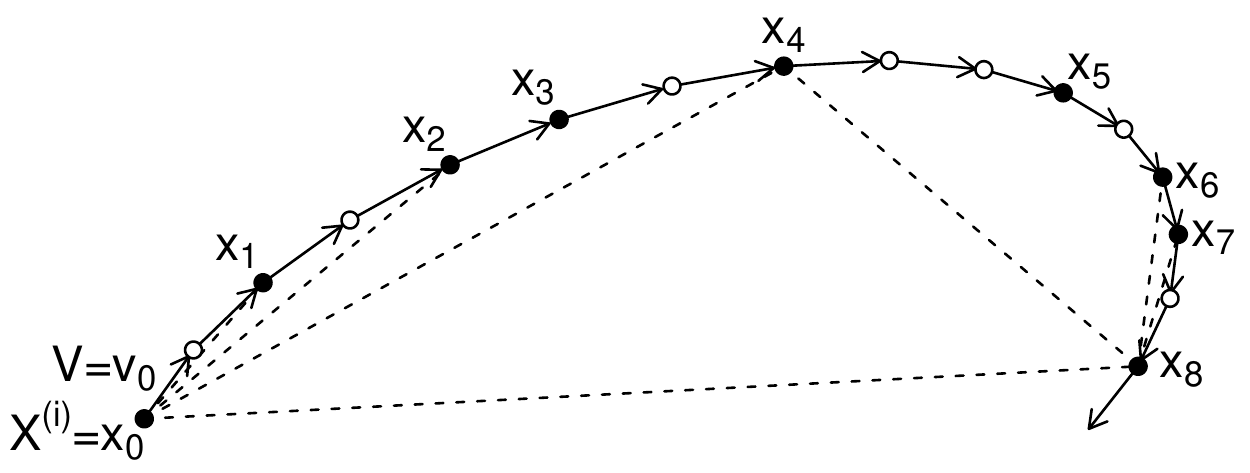}
  \caption{An example diagram for an iteration in spNUTS2 where $b_j \sr= 2^{j-1}$. Acceptable states are marked by filled circles and unacceptable ones by empty circles. The pairs of states for which the U-turn condition is checked are indicated by dashed line segments. The eighth acceptable state $x_8$ is taken as the next state of the Markov chain.}
  \label{fig:spNUTS2diagram}
\end{figure}
Another algorithm that automatically tunes the lengths of leapfrog trajectories, called spNUTS2, is given in Algorithm~\ref{alg:spNUTS2}.
Unlike spNUTS1, spNUTS2 applies the sequential-proposal scheme within one trajectory.
The spNUTS2 algorithm takes the endpoint of the constructed leapfrog trajectory as a candidate for the next state of the Markov chain, as in spNUTS1.
However, it evaluates the log target density at every point on the trajectory like the original NUTS.
Starting from the current state of the Markov chain $X^{(i)} \sr= x_0$ and a velocity vector $v_0$ randomly drawn from $\psi_C$, the algorithm extends a leapfrog trajectory in units of $l$ leapfrog jumps.
We will denote by $(x_1,v_1)$ the first \emph{acceptable} state along the trajectory that is a multiple of $l$ leapfrog jumps away from the initial state.
Here, $(x_1,v_1)$ is acceptable if 
\begin{equation*}
  \Lambda < \frac{\pi(x_1)\psi_C(v_1)}{\pi(x_0)\psi_C(v_0)}.
  \label{eqn:acceptability_cond_spNUTS2}
\end{equation*}
In order to avoid indefinitely extending the trajectory when the leapfrog approximation is numerically unstable, the algorithm ends the attempt to find the next acceptable state if $N$ consecutive states at intervals of $l$ leapfrog jumps are all unacceptable.
In this case, the next state of the Markov chain is set to $(x_0,v_0)$.
For $k\,{\geq}\, 2$, the state $(x_k,v_k)$ is likewise found as the first acceptable state along the leapfrog trajectory that is a multiple of $l$ jumps from $(x_{k-1},v_{k-1})$.
If for any $k\,{\geq}\,1$ the next acceptable state is not found in $N$ consecutive states visited after $(x_{k-1},v_{k-1})$, the next state in the Markov chain is also set to $(x_0,v_0)$.
In practice, however, this situation can be avoided by taking the leapfrog step size $\epsilon$ reasonably small to ensure numerical stability and $N$ large enough.
The algorithm takes a preset increasing sequence of integers $(b_j)_{j\in 1 \col j_{\max}}$ and checks if the angles between the displacement vector $x_{b_j}\sr-x_0$ and the initial and the last velocity vectors $v_0$ and $v_{b_j}$ are below a certain level $c$.
The trajectory is stopped at $(x_{b_j},v_{b_j})$ if either
\begin{equation}
\cosangle(x_{b_j}\sr-x_0,v_0\giventh C) \leq c \quad \text{or} \quad \cosangle(x_{b_j}\sr-x_0,v_{b_j}\giventh C) \leq c.
\label{eqn:stopping_cond_spNUTS2}
\end{equation}
Upon reaching $(x_{b_{j_{\max}}},v_{b_{j_{\max}}})$, however, the trajectory stops regardless of whether \eqref{eqn:stopping_cond_spNUTS2} is satisfied for $j\,{=}\,j_{\max}$.
As in spNUTS1, a symmetry condition is checked to ensure detailed balance.
That is, the state $(x_{b_j},v_{b_j})$ is taken as the next state in the Markov chain if and only if
\begin{equation}
  \cosangle(x_{b_j} \sr-x_{b_j-b_{j'}}, v_{b_j}\giventh C) > c ~ \text{ and } ~ \cosangle(x_{b_j}\sr- x_{b_j-b_{j'}}, v_{b_j-b_{j'}}\giventh C) > c
  ~\text{ for all } 1\,{\leq}\, j' \,{\leq}\, j{-}1.
  \label{eqn:symmetry_cond_spNUTS2}
\end{equation}
If the symmetry condition is not satisfied, the next state of the Markov chain is set to $(x_0,v_0)$.
As in spNUTS1, the choice of $b_j \,{=}\, 2^{j-1}$, $j\in 1\col j_{\max}$, allows the symmetry condition in spNUTS2 to be satisfied with high probability and makes the checkpoints for the symmetry condition, $\{b_j\sr-b_{j'}\giventh j'\sr\leq j{-}1\}$, readily predictable.

When $c_\pi$, $c_\triangledown$, and $c_U$ denote the same average computational costs as before and $b_j \sr= 2^{j-1}$, the average computational cost of finding a distinct sample point for the Markov chain using spNUTS2 is roughly given by $\frac{1}{\tilde a}\{(lc_\triangledown + c_\pi) \cdot \xi + c_U (2\log_2(a\xi)\sr+1)\}$, where $\xi$ denotes the average length of stopped trajectories in units of $l$ leapfrog jumps, $\tilde a$ the average probability that the symmetry condition is satisfied, and $a$ the mean acceptance probability.
Since $\tilde a$ can be close to unity and $c_U$ is often smaller than $c_\pi$ or $c_\triangledown$ in practice, the computational cost of spNUTS2 per distinct sample is comparable to that of the NUTS.
However, the overall numerical efficiency of spNUTS2 can be higher because the average distance between the current and the next state of the Markov chain can be larger.

The proof of the following proposition is also given in Appendix~\ref{sec:proof_spNUTS}.
\begin{prop}  \label{prop:detailedbalance_spNUTS2}
  The Markov chain $\left(X^{(i)}\right)_{i\in1:M}$ constructed by the sequential-proposal No-U-Turn sampler of type 2 (spNUTS2, Algorithm~\ref{alg:spNUTS2}) is reversible with respect to the target density $\bar\pi$.
\end{prop}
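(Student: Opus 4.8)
Since the velocity $V$ is redrawn at the start of every iteration and discarded at the end, $\left(X^{(i)}\right)$ is itself a Markov chain, so it suffices to verify the detailed balance relation $\mathcal P[X^{(i)}\in A,\,X^{(i+1)}\in B] = \mathcal P[X^{(i)}\in B,\,X^{(i+1)}\in A]$ for all measurable $A,B\subseteq\mathbb X$ when $X^{(i)}\sim\bar\pi$. The plan is to adapt the argument behind Proposition~\ref{prop:SPPD_invariance} (and the path-dependent-kernel Metropolis-Hastings variant of Section~\ref{sec:spMHgen}) to the two features special to spNUTS2: the number $L = b_j$ of accepted states that is taken is fixed adaptively by the U-turn checks --- which is why the symmetry condition~\eqref{eqn:symmetry_cond_spNUTS2} is needed --- and \texttt{FindNextAcceptable} advances the leapfrog trajectory over a data-dependent number of $l$-jump steps, skipping the states whose Hamiltonian exceeds the current threshold. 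I would first record the ingredients. Each leapfrog step, and hence $\S_\tau = \S_{(\epsilon,l)}$, preserves $dx\,dv$, so $\left|\det D\S_\tau^m\right|\equiv 1$; the reversibility condition~\eqref{eqn:TSTS} with $\T(x,v)=(x,-v)$ and $\R_x = -\id$ gives $\S_\tau^m\circ\T\circ\S_\tau^m = \T$, so that $\T\circ\S_\tau^m$ is a measure-preserving involution; the Hamiltonian and velocity density are even in $v$, so $H(x,-v)=H(x,v)$ and $\psi_C(-v)=\psi_C(v)$; and $\cosangle(-a,-b\giventh C) = \cosangle(a,b\giventh C)$. Finally, up to an additive constant that cancels, the acceptability test $-\log\pi(x_\text{try}) + \tfrac12\Vert v_\text{try}\Vert_C^2 < H_\text{max}$ is exactly $H(x_\text{try},v_\text{try}) < H(x_0,v_0) + \Delta$ with $\Delta := -\log\Lambda$, equivalently $\Lambda < \Pi(x_\text{try},v_\text{try})/\Pi(x_0,v_0)$.

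Next I would decompose the transition. On the event that the chain stays --- some \texttt{FindNextAcceptable} returns $\mathbf f = 0$, or the symmetry condition fails --- we have $X^{(i+1)} = X^{(i)}$, and its contribution $\int_{A\cap B}\bar\pi(x_0)\,\mathcal P[\text{stay}\giventh X^{(i)}=x_0]\,dx_0$ is symmetric in $A,B$; so I would restrict to the ``move'' event. Fix $c$ and the tuple $\mathbf m = (0 = m_0 < m_1 < \dots < m_L)$ of iteration counts for which the accepted states are $z_0 = (x_0,v_0)$ and $z_k = (x_k,v_k) := \S_\tau^{m_k}z_0$, where $L = b_j$ is the stopping checkpoint; write $\mathrm{Interm} = \{1,\dots,m_L\}\setminus\{m_1,\dots,m_L\}$, $\mathcal M = \max_{0\le k\le L}H(z_k)$ and $\mathcal N = \min_{m\in\mathrm{Interm}}H(\S_\tau^m z_0)$ (with $\mathcal N := +\infty$ if $\mathrm{Interm}$ is empty). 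Given $(x_0,v_0)$, the algorithm produces precisely this tuple of accepted states and takes $z_L$ exactly when: $m_k - m_{k-1}\le N$ for all $k$; the purely geometric conditions --- the stopping condition~\eqref{eqn:stopping_cond_spNUTS2} fails at every checkpoint before $b_j$, holds at $b_j$ or $j = j_\text{max}$, and the symmetry condition~\eqref{eqn:symmetry_cond_spNUTS2} holds --- are satisfied by $z_0,\dots,z_L$ and $c$; and the threshold $H(z_0)+\Delta$ lies in $(\mathcal M,\mathcal N]$. Since $\Delta$ has density $e^{-\Delta}$ on $(0,\infty)$ and $\mathcal M\ge H(z_0)$, the last condition has probability $\big(e^{-(\mathcal M - H(z_0))} - e^{-(\mathcal N - H(z_0))}\big)^+$, and multiplying by $\bar\pi(x_0)\psi_C(v_0) = \Pi(z_0) = e^{-H(z_0)}$ cancels the $e^{-H(z_0)}$ factor, so this event contributes to $\mathcal P[X^{(i)}\in A,\,X^{(i+1)}\in B]$ the quantity
\[
\int\zeta(dc)\int \1_A(x_0)\,\1_B(x_L)\,\1[\text{geometric conditions hold}]\,\big(e^{-\mathcal M} - e^{-\mathcal N}\big)^+\,dx_0\,dv_0 .
\]

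To conclude I would change variables $(x_0,v_0)\mapsto\T\S_\tau^{m_L}(x_0,v_0) = (x_L,-v_L)$, which is a measure-preserving involution. It carries the trajectory $z_0,\dots,z_{m_L}$ to $\T z_{m_L},\dots,\T z_0$, so the accepted-state tuple $\mathbf m$ maps bijectively to the reversed tuple $\mathbf m' = (0,\,m_L - m_{L-1},\dots,m_L - m_0)$ (admissible, with the same gap bound), sends ``$x_0\in A$'' to ``the final position of the reversed run lies in $A$'' and ``$x_L\in B$'' to ``the starting position of the reversed run lies in $B$'', and leaves $\mathrm{Interm}$, $\mathcal M$, $\mathcal N$ unchanged because $H(x,-v)=H(x,v)$. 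The step needing care is that the geometric conditions transform correctly: using $\cosangle(-a,-b\giventh C)=\cosangle(a,b\giventh C)$ and matching the checkpoint $b_{j'}$ of one run with $b_j - b_{j'}$ of the other, one checks that \textbf{(i)} ``\eqref{eqn:stopping_cond_spNUTS2} fails at every checkpoint before $b_j$'' for the original run coincides with ``\eqref{eqn:symmetry_cond_spNUTS2} holds'' for the reversed run, \textbf{(ii)} ``\eqref{eqn:symmetry_cond_spNUTS2} holds'' for the original run coincides with ``\eqref{eqn:stopping_cond_spNUTS2} fails before $b_j$'' for the reversed run, and \textbf{(iii)} ``\eqref{eqn:stopping_cond_spNUTS2} holds at $b_j$, or $j=j_\text{max}$'' is invariant; this is precisely the purpose for which~\eqref{eqn:symmetry_cond_spNUTS2} is designed. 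The displayed integral is therefore equal to the analogous integral for the reversed run with $A$ and $B$ interchanged; summing over $\mathbf m$, integrating over $c$, and adding the symmetric ``stay'' term yields detailed balance. The main obstacle is exactly this bookkeeping in \textbf{(i)}--\textbf{(iii)}: tracking how the accepted-state indices and the U-turn checkpoints re-index under trajectory reversal, and confirming that~\eqref{eqn:symmetry_cond_spNUTS2} is precisely strong enough --- and no stronger --- to make the forward and reversed executions mirror images.
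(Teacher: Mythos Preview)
Your proposal is correct and follows essentially the same route as the paper's own proof: both decompose the ``move'' event according to the index set of acceptable states along the leapfrog trajectory (your tuple $\mathbf m$ is the paper's $\mathcal K$, with the same gap bound $m_k-m_{k-1}\le N$), integrate out the uniform variable to obtain the symmetric quantity $\big(\min_{\text{accepted}}\Pi - \max_{\text{rejected}}\Pi\big)^+$ (your $(e^{-\mathcal M}-e^{-\mathcal N})^+$), change variables via the measure-preserving involution $\T\circ\S_\tau^{m_L}$, and verify that the U-turn and symmetry conditions interchange under reversal (your (i)--(iii) is the paper's $\Upsilon$-invariance, proved there via four $\cosangle$ identities). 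The paper carries out the bookkeeping you flag as ``the main obstacle'' explicitly, but the architecture is the same.
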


\subsection{Adaptive tuning of parameters in HMC}\label{sec:adaptiveHMC}
Adaptively tuning parameters in MCMC algorithms using the history of the Markov chain can often lead to enhanced numerical efficiency \citep{haario2001adaptive, andrieu2008tutorial}.
Here we discuss adaptive tuning of some parameters in HMC algorithms.
As discussed in Section~\ref{sec:spHMC}, tuning the leapfrog step size $\epsilon$ is one of the critical decisions to make in running HMC.
Numerical efficiency of HMC algorithms can be increased by targeting an average acceptance probability that is away from both zero and one \citep{beskos2013optimal}.
Since the mean acceptance probability tends to increase with decreasing step size, we use the following recursive formula to update the step size,
\begin{equation}
  \log\epsilon_{i+1} \gets \log\epsilon_i + \frac{\lambda}{i^\alpha} (a_i - a^*),
  \label{eqn:eps_adaptive_HMC}
\end{equation}
where $\epsilon_i$ and $a_i$ denote the leapfrog step size and the acceptance probability of a proposal at the $i$-th iteration, and $a^*$ the target mean acceptance probability.
We follow a standard approach for the sequence of adaptation sizes by taking $\alpha \sr\in (0, 1]$ and $\lambda \sr> 0$ \citep{andrieu2008tutorial}.

\begin{figure}[t]
  \centering
  \begin{subfigure}[b]{.4\textwidth}
    \includegraphics[width=\textwidth]{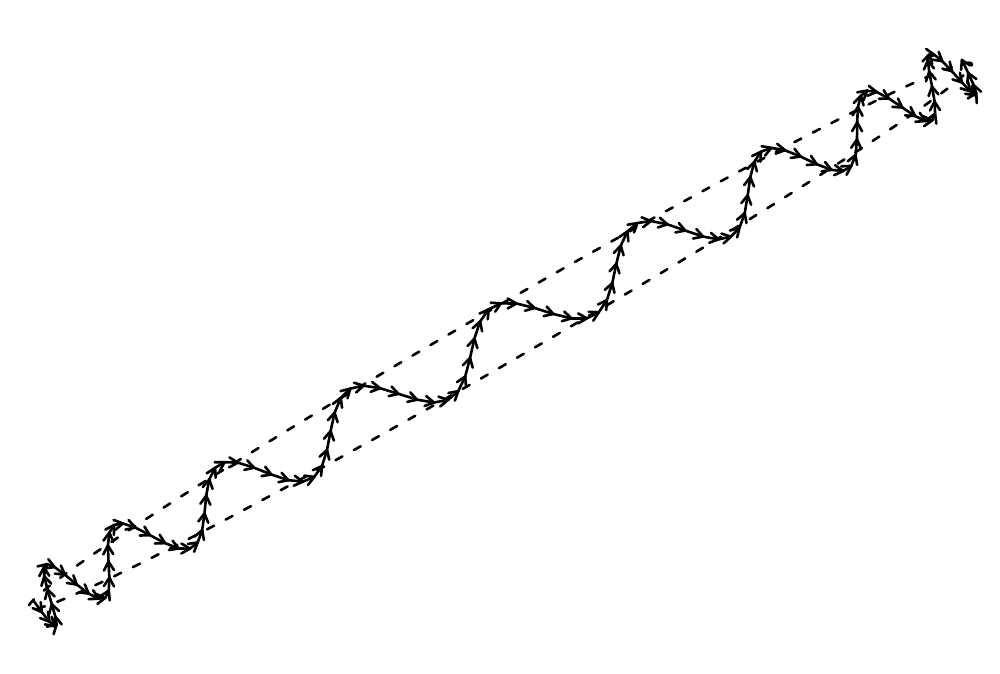}
    \caption{120 leapfrog jumps with $C=I$.}
    \label{fig:leapfrog_1}
  \end{subfigure}
  \quad
  \begin{subfigure}[b]{.4\textwidth}
    \includegraphics[width=\textwidth]{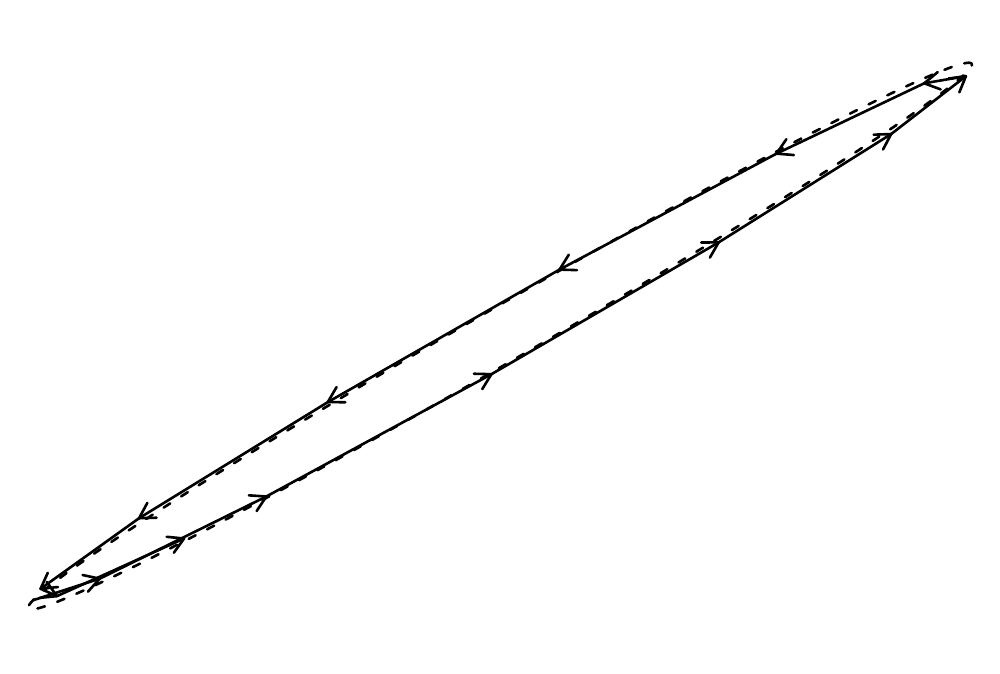}
    \caption{15 leapfrog jumps with $C=\Sigma$.}
    \label{fig:leapfrog_C}
  \end{subfigure}
  \caption{Two leapfrog trajectories for an ill conditioned Gaussian distribution with covariance $\Sigma$ for two different choices of $C$.
    In both cases, the leapfrog jump size $\epsilon$ was 0.5.
    A level set of the target density is shown as a dashed ellipsoid.}
  \label{fig:leapfrog_trajectories}
\end{figure}
Tuning the covariance $C$ of the velocity distribution can also increase the numerical efficiency of HMC algorithms.
If the marginal distributions for the target density $\bar\pi$ along different directions have orders of magnitude differences in standard deviation, the size of leapfrog jumps should typically be on the order of the smallest standard deviation in order to avoid numerical instability \citep{neal2011mcmc}.
In this case a large number of leapfrog jumps are needed to make a global move in the direction having the largest standard deviation.
Figure~\ref{fig:leapfrog_1} shows an example of leapfrog trajectory when the target distribution is $\mathcal N(0,\Sigma)$, where the standard deviation of one principal component of $\Sigma$ is twenty times larger than the other.
In this diagram, the leapfrog trajectory takes about 120 jumps to move across the less constrained direction from one end of a level set to the other.
Choosing a covariance $C$ for the velocity distribution close to the covariance of the target distribution can substantially reduce the number of leapfrog jumps needed to explore the sample space in every direction \citep[Section~5.4.1]{neal2011mcmc}.
Figure~\ref{fig:leapfrog_C} shows that a leapfrog trajectory can loop around the level set with only fifteen jumps when the covariance of the velocity distribution $C$ is equal to $\Sigma$.
We note that the covariance $C$ affects not only the velocity distribution and the leapfrog updates, but also the U-turn condition \eqref{eqn:stopping_cond_spNUTS2} for NUTS-type algorithms (the original NUTS, spNUTS1, and spNUTS2) via the $\cosangle$ function.

For adaptive tuning, the covariance $C_i$ used at the $i$-th iteration can be set equal to the sample covariance of the Markov chain sampled up to the previous iteration.
During initial iterations, a fixed covariance $C_0$ can be used to avoid numerical instability \citep{haario2001adaptive}:
\[
C_i \gets \left\{ \begin{array}{ll}
  C_0 & i \sr \leq i_0 \\
  \text{sample covariance of }\{X^{(j)} \giventh j\sr\leq i{-}1\} & i\sr>i_0.
\end{array}\right.
\]
It is possible to take $C_i$ as a diagonal matrix whose diagonal entries are given by the sample marginal variances of each component of the Markov chain \citep{haario2005componentwise}.
This approach is effective when the components have different scales.
The computational cost can be substantially reduced by using a diagonal covariance matrix when the target distribution is high dimensional, because operations such as the Cholesky decomposition of $C_i$ can be avoided.
Marginal sample variances can be updated with little overhead at each iteration using a recursive formula.

\subsection{Numerical examples}\label{sec:HMCnumerical}
\subsubsection{Multivariate normal distribution}
We used two examples to study the numerical efficiency of various algorithms discussed in this paper.
We first considered a one hundred dimensional normal distribution $\mathcal N(0,\Sigma)$ where the covariance matrix $\Sigma$ is diagonal and the marginal standard deviations form a uniformly increasing sequence from 0.01 to 1.00.

\begin{figure}[t]
  \captionsetup{width=0.88\linewidth}
  \centering
  \includegraphics[width=.9\linewidth]{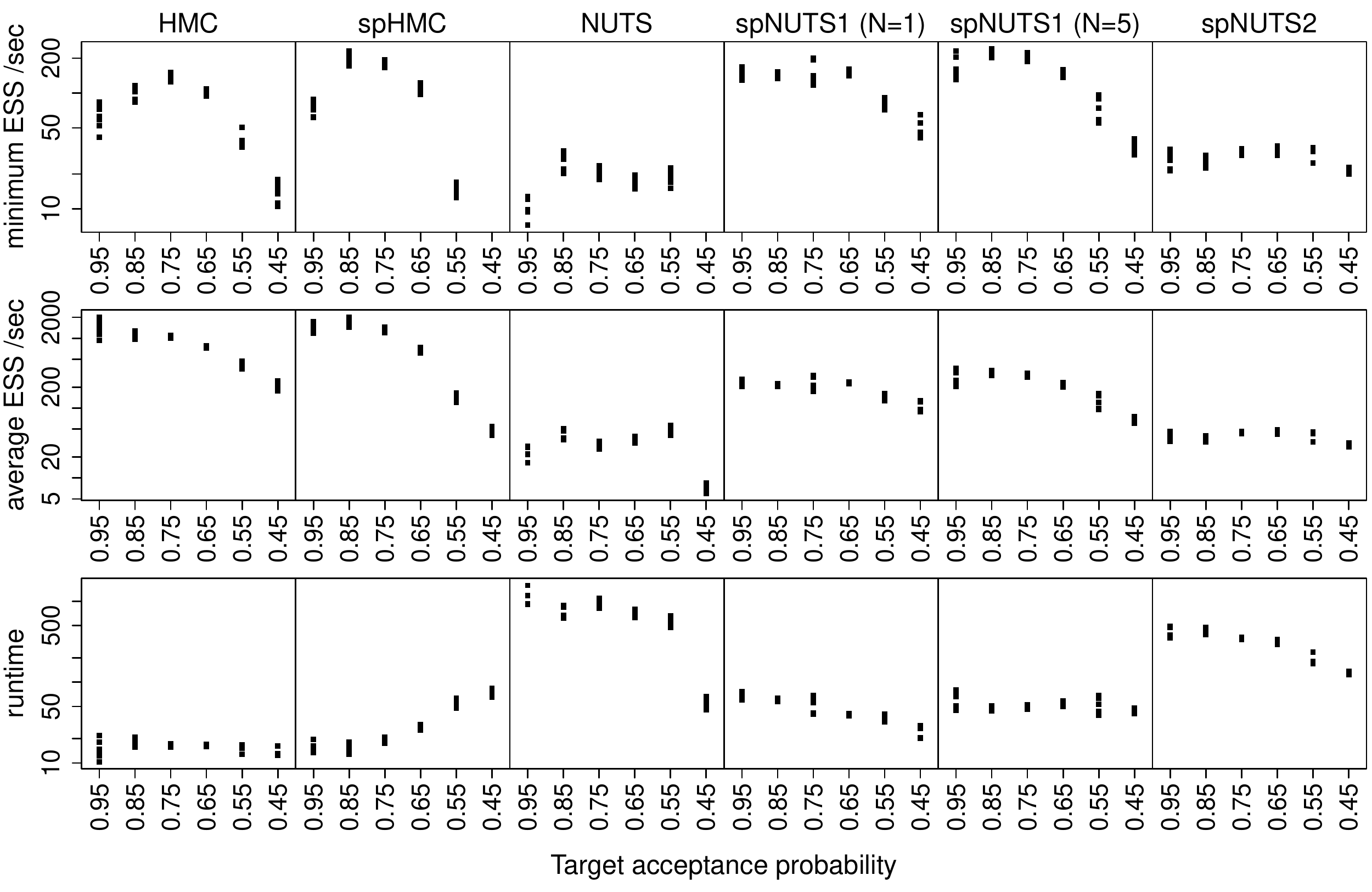}
  \caption{The minimum and average effective sample sizes of constructed Markov chains across $d\,{=}\,100$ variables per second of runtime for the target distribution $\mathcal N(0,\Sigma)$ when the covariance $C$ of the velocity distribution was fixed. The target acceptance probabilities are shown on the $x$-axis. The runtime in seconds are shown in the bottom row of plots. All $y$-axes are in logarithmic scales.}
  \label{fig:ESSperSec_astar_algo_mvnorm_fixedmass}
\end{figure}

We compared the numerical efficiency of the following five algorithms, first without adaptively tuning the covariance of the velocity distribution $C$: the standard HMC, HMC with sequential proposals (abbreviated as spHMC, which is equivalent to XCGHMC in Algorithm~\ref{alg:leapfrog}), the NUTS algorithm by \citet{hoffman2014no} (the efficient version), spNUTS1 (Algorithm~\ref{alg:spNUTS1}), and spNUTS2 (Algorithm~\ref{alg:spNUTS2}).
All experiments were carried out using the implementation of the algorithms in \textsf{R} \citep{R}.
The source codes are available at \url{https://github.com/joonhap/spMCMC}.
The covariance matrix $C$ of the velocity distribution was set equal to the one hundred dimensional identity matrix.
The leapfrog step size $\epsilon$ was adaptively tuned using \eqref{eqn:eps_adaptive_HMC} with $\alpha\sr=0.7$ and target acceptance probabilities $a^*$ varying from 0.45 to 0.95.
The adaptation started from the one hundredth iteration.
The acceptance probability at the $i$-th iteration $a_i$ was computed using the state that was one leapfrog jump away from the current state of the Markov chain to ensure that the leapfrog jump size $\epsilon$ converges to the same value for the same target acceptance probability across the various algorithms.
When running HMC, spHMC, spNUTS1, and spNUTS2, the leapfrog step size was randomly perturbed at each iteration by multiplying to $\epsilon_i$ a uniform random number between 0.8 and 1.2.
Randomly perturbing the leapfrog step size can improve the mixing of the Markov chain constructed by HMC algorithms \citep{neal2011mcmc}.
For the NUTS, we found perturbing the leapfrog step size did not improve numerical efficiency and thus used $\epsilon_i$.
In HMC and spHMC, each proposal was obtained by making fifty leapfrog jumps.
In spHMC, a maximum of $N\,{=}\,10$ proposals were tried in each iteration and the first acceptable proposal was taken as the next state of the Markov chain (i.e., $L\,{=}\,1$).
In spNUTS1 and spNUTS2, the stopping condition was checked according to the schedule $b_j\,{=}\,2^{j-1}$ for $j \in 1\col j_{\max}$ with $j_{\max}\,{=}\,15$, and the unit number of leapfrog trajectories was set to one (i.e., $l\,{=}\,1$ in Algorithms~\ref{alg:spNUTS1} and \ref{alg:spNUTS2}).
The value $c$ in the stopping condition \eqref{eqn:stopping_cond_spNUTS2} was randomly drawn from a uniform$(0,1)$ distribution for each trajectory, as we found randomizing $c$ yielded better numerical results than fixing it at zero.
For the NUTS algorithm, randomizing $c$ did not improve the numerical efficiency, so each trajectory was stopped when the cosine angle fell below zero (i.e., $c\,{=}\,0$), as was in \citet{hoffman2014no}. 
In spNUTS1, the maximum number of proposals $N$ in each iteration was set to either one or five.
In spNUTS2, a maximum of $N\,{=}\,20$ consecutive states on a leapfrog trajectory were tried in each attempt to find an acceptable state.
Every algorithm ran for $M\sr=20{,}200$ iterations.
As a measure of numerical efficiency, the effective sample size (ESS) of each component of the Markov chain was computed using an estimate of the spectral density at frequency zero via the \texttt{effectiveSize} function in \textsf{R} package \texttt{coda} \citep{plummer2006coda}.
The first two hundred states of the Markov chains were discarded when computing the effective sample sizes.
Each experiment was independently repeated ten times.
All computations were carried out using the Boston University Shared Computing Cluster.

Figure~\ref{fig:ESSperSec_astar_algo_mvnorm_fixedmass} shows both the minimum and the average effective sample size for the one hundred variables divided by the runtime in seconds when the covariance $C$ of the velocity distribution was fixed at the identity matrix.
We observed that there were large variations in the effective sample sizes among the $d\sr= 100$ variables for the Markov chains constructed by HMC and spHMC, resulting in minimum ESSs much smaller than average ESSs.
This happened due to the fact that for some variables the leapfrog trajectories with fifty jumps consistently tended to return to states close to the initial positions.
The Markov chains mixed slowly in these variables.
On the other hand, the leapfrog trajectories tended to reach the opposite side of the level set of Hamiltonian for some variables, for which the autocorrelation at lag one was close to $-1$.
For these variables, the effective sample size was greater than the length of the Markov chain $M$.
There were much variations in the effective sample size among the variables for the Markov chains constructed by spNUTS1 and spNUTS2 when the stopping cosine angle $c$ was fixed at zero, but variations diminished when $c$ was varied uniformly in the interval $(0,1)$.

The highest value of the minimum ESS per second achieved by spHMC among various values of the target acceptance probability was about fifty percent higher than that by the standard HMC.
For this multivariate normal distribution, the number of leapfrog jumps $l\sr= 50$ for HMC and spHMC was within the range of the average number of jumps in the leapfrog trajectories constructed by the NUTS, spNUTS1, and spNUTS2 algorithms.
Thus the effective sample sizes by HMC and spHMC were comparable to those by the other three algorithms, but the runtimes tended to be shorter.
The highest minimum ESS per second by spNUTS1 with $N\sr= 5$ was 7.6 times higher than that by the NUTS and 6.9 times higher than that by spNUTS2.
The runtimes of the NUTS were more than ten times longer than those of spNUTS1 and twice longer than those of spNUTS2.
This happened because the evaluation of the gradient of the log target density took much less computation time than the evaluation of the log target density for this example.
The highest minimum ESS per second by spNUTS1 when up to five sequential proposals were made (i.e., $N\sr= 5$) was twenty percent higher than when only one proposal was made ($N\sr= 1$).

\begin{figure}[t]
  \captionsetup{width=0.88\linewidth}
  \centering
  \includegraphics[width=.9\linewidth]{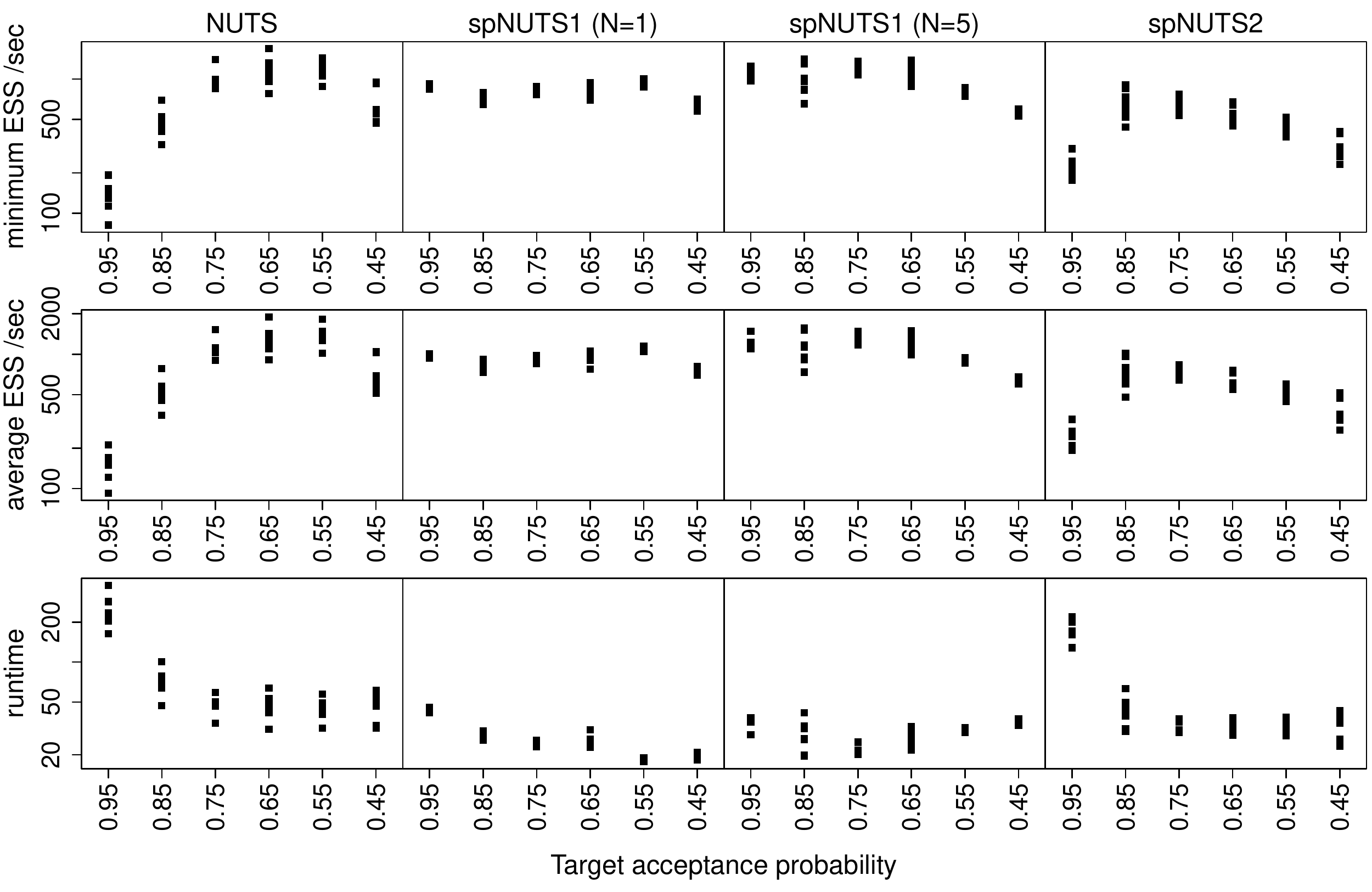}
  \caption{The minimum and average effective sample sizes per second of runtime for the target distribution $\mathcal N(0,\Sigma)$ when the covariance $C$ of the velocity distribution was adaptively tuned. The target acceptance probabilities are shown on the $x$-axis.}
  \label{fig:ESSperSec_astar_algo_mvnorm}
\end{figure}

Next we ran the NUTS, spNUTS1, and spNUTS2 algorithms for the same target distribution $\mathcal N(0,\Sigma)$ but with adaptively tuning the covariance $C$ of the velocity distribution.
The covariance $C_i$ at the $i$-th iteration for $i\sr\geq 100$ was set to a diagonal matrix whose diagonal entries were given by the marginal sample variances of the Markov chain constructed up to that point.
We did not test HMC or spHMC when $C$ was adaptively tuned, because the leapfrog step size, and thus the total length of the leapfrog trajectory with a fixed number of jumps, varied depending on the tuned values for $C$.
Figure~\ref{fig:ESSperSec_astar_algo_mvnorm} shows the minimum and average ESS among $d\sr= 100$ variables divided by the runtime in seconds.
The highest minimum ESS per second improved more than fifty times, compared to when the covariance $C$ was fixed, for the NUTS.
There was more than a five-fold improvement for spNUTS1, and more than a 25-fold improvement for spNUTS2.
The highest minimum ESS per second by the NUTS was $19\%$ higher than that by spNUTS1 ($N\sr= 5$) and $86\%$ higher than that by spNUTS2.
The NUTS was relatively more efficient when $C$ was adaptively tuned because the trajectories were built using fewer leapfrog jumps.
The computational advantage of spNUTS1 that the log target density is not evaluated at every leapfrog jump is relatively small when there are only few jumps per trajectory.
When $C$ is close to $\Sigma$, the sampling task is essentially equivalent to sampling from the standard normal distribution, in which case larger leapfrog step sizes may be used.
The trajectories were made of five to eight leapfrog jumps at the most efficient target acceptance probability when $C$ was adaptively tuned.
In comparison, the number of leapfrog jumps in a trajectory was between 80 and 250 when $C$ was not adaptively tuned.

\subsubsection{Bayesian logistic regression model}\label{sec:num_logit}
\begin{figure}[t]
  \captionsetup{width=0.88\linewidth}
  \centering
  \includegraphics[width=.9\linewidth]{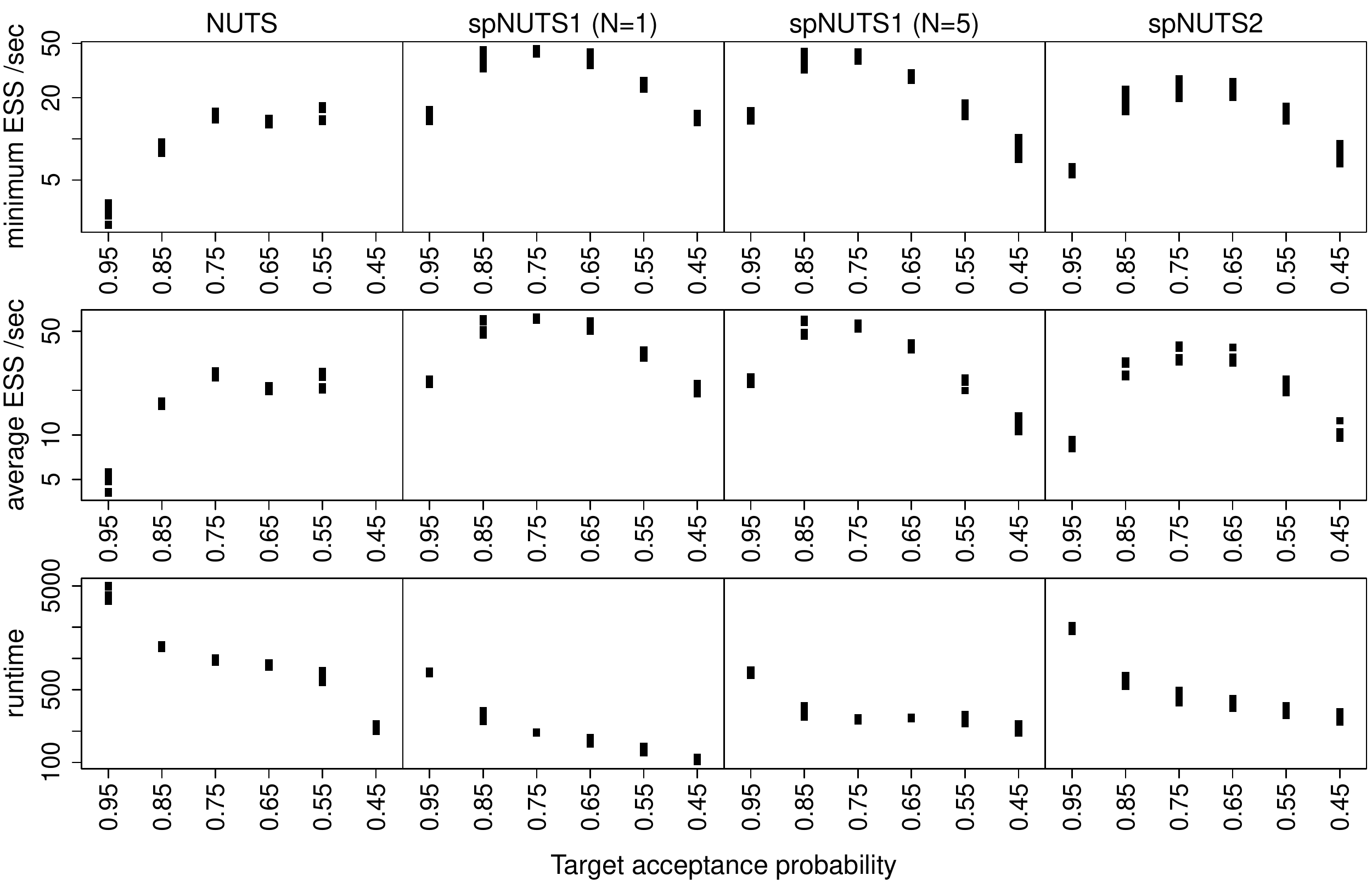}
  \caption{The minimum and average effective sample sizes per second of runtime across $d\sr= 25$ variables for the posterior distribution for the Bayesian logistic regression model in Section~\ref{sec:num_logit}.}
  \label{fig:ESSperSec_astar_algo_logit}
\end{figure}

We also experimented the numerical efficiency of the NUTS, spNUTS1, and spNUTS2 using the posterior distribution for a Bayesian logistic regression model.
The Bayesian logistic regression model and the data we used are identical to those considered by \citet{hoffman2014no}.
The German credit dataset from the UCI repository \citep{dua2019} consists of twenty four attributes of individuals and one binary variable classifying those individuals' credit.
The posterior density is proportional to
\begin{equation*}
  \pi(\alpha,\beta \given x,y)
  \propto \exp\left\{-\sum_{i=1}^{1000} \log(1+\exp\{-y_i(\alpha+x_i\cdot \beta\}) - \frac{\alpha^2}{200} - \frac{\Vert \beta \Vert_2^2}{200}\right\},
\end{equation*}
where $x_i$ denotes the twenty four dimensional covariate vector for the $i$-th individual and $y_i$ denotes the classification result taking a value from $\pm 1$.
We did not normalize the covariates to zero mean and unit variance as in \citet{hoffman2014no}, because we let $C$ be adaptively tuned.
The covariance $C$ was set to a diagonal matrix having as its diagonal entries the marginal sample variances of the constructed Markov chain up to the previous iteration.
All algorithms were run under the same settings as those used for the multivariate normal distribution example.

Figure~\ref{fig:ESSperSec_astar_algo_logit} shows the minimum and average ESS across $d\sr= 25$ variables per second of runtime.
The minimum ESS per second by spNUTS1 at the most efficient target acceptance probability was 2.6 times higher than that by the NUTS and 1.7 times higher than that by spNUTS2.
The differences in the numerical efficiency was led mostly by the differences in the runtime.
The numbers of leapfrog jumps in stopped trajectories tended to be larger than those for the normal distribution example due to the correlations between the variables in this Bayesian logistic regression model; the numbers of leapfrog jumps were about fifty for the NUTS, twenty seven for spNUTS1, and twenty two for spNUTS2.

\section{Conclusion}\label{sec:conclusion}
The sequential-proposal MCMC framework is readily applicable to a wide range of MCMC algorithms.
The flexibility and simplicity of the framework allow for various adjustments to the algorithms and offer possibilities of developing new ones.
In this paper, we showed that the numerical efficiency of MCMC algorithms can be improved by using sequential proposals.
In particular, we developed two novel NUTS-type algorithms, which showed higher numerical efficiency than the original NUTS by \citet{hoffman2014no} on two examples we examined.
In Appendix~\ref{sec:BPS}, we apply the sequential-proposal framework to the bouncy particle sampler (BPS) and demonstrate an advantageous property that the sequential-proposal BPS can readily make jumps between multiple modes.
The possibilities of other applications of the sequential-proposal MCMC framework can be explored in future research.

{\footnotesize
\paragraph{Acknowledgement}
This work was supported by National Science Foundation grants DMS-1513040 and DMS-1308918.
The authors thank Edward Ionides, Aaron King, and Stilian Stoev for comments on an earlier draft of this manuscript.
The authors also thank Jes\'us Mar\'ia Sanz-Serna for informing us about related references.
}

\appendix
\section{Proof of detailed balance for Algorithm~\ref{alg:spMH} (sequential-proposal Metropolis-Hastings algorithm)}\label{sec:proof_spMH}

Here we give a proof that Algorithm~\ref{alg:spMH} constructs a reversible Markov chain with respect to the target density $\bar\pi$.
In what follows, we denote the $l$-th rank of a given finite sequence $a_{n:m}$ by $r_l(a_{n:m})$; that is, if we reorder the sequence $a_{n:m}$ as $a_{(1)} \geq a_{(2)} \geq \cdots \geq a_{(m-n+1)}$, then $r_l(a_{n:m}) = a_{(l)}$.
If $l$ is greater than the length of the sequence $a_{n:m}$, we define $r_l(a_{n:m}) := 0$.
We also define $r_0(a_{n:m}) := \infty.$

\begin{prop}\label{prop:detailedbalanceMH}
  The Markov chain $\left(X^{(i)}\right)_{i\in 1:M}$ constructed by Algorithm~\ref{alg:spMH} is reversible with respect to the target density $\bar\pi$. 
\end{prop}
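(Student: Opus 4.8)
The plan is to adapt the proof of Proposition~\ref{prop:detailedbalance_spM} to this more general setting; the new features are a possibly asymmetric proposal density, a general target rank $L\ge 1$, and a randomized pair $(N,L)$. As before it suffices to establish detailed balance, $\mathcal P[X^{(i)}\in A,\,X^{(i+1)}\in B]=\mathcal P[X^{(i)}\in B,\,X^{(i+1)}\in A]$ for all measurable $A,B$, assuming $X^{(i)}\sim\bar\pi$. First I would condition on the drawn value of $(N,L)$, which by assumption is independent of $\Lambda$ and of the proposal chain, so that it remains to prove detailed balance conditionally and then average over $\nu$. Writing $Y_0:=X^{(i)}$, I would split the event according to which proposal $Y_n$, $n\in 1\col N$, first makes the running count of acceptable proposals reach $L$ (and is therefore taken as $X^{(i+1)}$), plus the residual event that fewer than $L$ of $Y_1,\dots,Y_N$ are acceptable and the chain stays put. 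The stay-put contribution satisfies detailed balance trivially, since then $X^{(i+1)}=X^{(i)}=Y_0$ and $\{Y_0\in A\cap B\}$ is symmetric in $A,B$; so the work is in the per-$n$ term.

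Fix $n$ and set $\tilde\rho_k:=\min\{1,\rho_k\}$, with $\rho_k$ the acceptance ratio~\eqref{eqn:accep_n_asym} formed from $y_{0:k}$; since $\Lambda<1$, proposal $Y_k$ is acceptable iff $\Lambda<\tilde\rho_k$. Using the rank notation $r_l$ introduced above, the event that $Y_n$ is the $L$-th acceptable proposal (i.e.\ $\Lambda<\tilde\rho_n$ and exactly $L-1$ of $\tilde\rho_1,\dots,\tilde\rho_{n-1}$ exceed $\Lambda$) is exactly $\big\{\Lambda\in[\,r_L(\tilde\rho_{1:n-1}),\,\min(r_{L-1}(\tilde\rho_{1:n-1}),\tilde\rho_n)\,)\big\}$, and both endpoints lie in $[0,1]$, so integrating out $\Lambda$ yields the interval length $\ell_n^L(y_{0:n})=\min(r_{L-1}(\tilde\rho_{1:n-1}),\tilde\rho_n)-\min(r_L(\tilde\rho_{1:n-1}),\tilde\rho_n)$. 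The bookkeeping device, generalizing the symmetrization in the proof of Proposition~\ref{prop:detailedbalance_spM}, is to introduce
\[
  \omega_k \;:=\; \pi(y_k)\prod_{j=1}^{k}q(y_{j-1}\given y_j)\prod_{j=k+1}^{n}q(y_j\given y_{j-1}),\qquad k=0,1,\dots,n,
\]
so that the joint-density factor in the integral representing this contribution is $\bar\pi(y_0)\prod_{k=1}^{n}q(y_k\given y_{k-1})=\frac{1}{Z}\,\omega_0$, while $\omega_n=\pi(y_n)\prod_{j=1}^{n}q(y_{j-1}\given y_j)$ and $\rho_k=\omega_k/\omega_0$. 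Hence the per-$n$ contribution to $\mathcal P[X^{(i)}\in A,\,X^{(i+1)}\in B\mid N,L]$ is $\frac{1}{Z}\int \1_A(y_0)\1_B(y_n)\,\omega_0\,\ell_n^L(y_{0:n})\,dy_{0:n}$, and everything reduces to showing that $\omega_0\,\ell_n^L$ is invariant under the path reversal $\hat y_k:=y_{n-k}$.

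To see this, push $\omega_0$ inside using the homogeneity $r_l(c\,a_{1:m})=c\,r_l(a_{1:m})$ of the ranks and the elementary truncated-rank identity $\min\{r_l(a_{1:m}),c\}=\min\{r_l(\min(a_1,c),\dots,\min(a_m,c)),c\}$, valid for $c\ge 0$ and every $l\ge 0$ under the conventions $r_0=\infty$, $r_l=0$ for $l>m$; this reduces $\omega_0\,\ell_n^L$ to
\[
  \min\{r_{L-1}(S_{1:n-1}),\,\omega_0,\,\omega_n\}\;-\;\min\{r_L(S_{1:n-1}),\,\omega_0,\,\omega_n\},\qquad S_k:=\min\{\omega_0,\omega_n,\omega_k\}.
\]
A short computation (reversing the two products defining $\omega_k$ and relabeling) gives $\omega_k(\hat y_{0:n})=\omega_{n-k}(y_{0:n})$, so $\omega_0$ and $\omega_n$ are interchanged while the multiset $\{S_1,\dots,S_{n-1}\}$ is carried to itself; since ranks are permutation-invariant, the displayed expression, and hence $\omega_0\,\ell_n^L$, is reversal-invariant. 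Applying the measure-preserving change of variables $y_k\mapsto y_{n-k}$ to the per-$n$ integral replaces $\1_A(y_0)\1_B(y_n)$ by $\1_A(y_n)\1_B(y_0)$ and fixes the rest, producing the $A\leftrightarrow B$ swapped term. Summing over $n\in 1\col N$, adding the stay-put term, and averaging over $(N,L)\sim\nu$ then gives detailed balance, hence reversibility, and in particular $\bar\pi$-invariance.

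I expect the main obstacle to be the reduction in the third paragraph: collapsing the nested minima and rank functions into the manifestly reversal-symmetric form requires applying the truncated-rank identity uniformly, which in turn means carefully tracking the conventions $r_0=\infty$ (relevant precisely when $L=1$) and $r_l=0$ beyond the sequence length (relevant when $n$ is small). Once $\omega_0\,\ell_n^L$ has been put into the symmetric form, the remaining ingredients — the reversal identity $\omega_k(\hat y)=\omega_{n-k}(y)$, the change of variables, and the averaging over $\nu$ — are routine and mirror the $L=1$, symmetric-kernel argument of Proposition~\ref{prop:detailedbalance_spM}.
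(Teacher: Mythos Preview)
Your proposal is correct and follows essentially the same approach as the paper: define the path weights $\omega_k$ (the paper calls them $p_k$), express the acceptance event via the rank functions $r_{L-1}$ and $r_L$, integrate out $\Lambda$, multiply through by $\omega_0$, and observe that the resulting expression is invariant under the path reversal $\omega_k\mapsto\omega_{n-k}$. The only difference is cosmetic: you pre-truncate the ratios to $\tilde\rho_k=\min\{1,\rho_k\}$ and then invoke your truncated-rank identity to restore a symmetric form involving $S_k=\min\{\omega_0,\omega_n,\omega_k\}$, whereas the paper keeps the untruncated ratios $p_k/p_0$ together with the separate constraint $\Lambda<1$, and after multiplying by $p_0$ lands directly on the simpler (and equivalent) symmetric expression $\min\{r_{L-1}(p_{1:n-1}),p_0,p_n\}-\min\{r_L(p_{1:n-1}),p_0,p_n\}$, avoiding the truncated-rank lemma altogether.
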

\begin{proof}
  It suffices to show the claim for fixed $N$ and $L$.
  The general case immediately follows by considering a mixture over $N$ and $L$ according to $\nu(N,L)$.

  We will show that for a given $n\sr\in 1\col N$, the probability density of taking $y_n$ as the next state of the Markov chain starting from the current state $y_0$ after rejecting a sequence of proposals $y_{1:n-1}$ is the same as the probability density of taking $y_0$ starting from $y_n$ after going through a reversed sequence of proposals $y_{n-1:1}$.
  The case for $n\sr=1$ coincides with a standard Metropolis-Hastings algorithm.
  We now fix $n\sr\geq 2$.
  Denoting the uniform$(0,1)$ random variable drawn at the beginning of the iteration by $\Lambda$, the $k$-th proposal $y_k$ is considered acceptable if and only if
  \[
  \Lambda < \frac{\pi(y_k) \prod_{j=1}^k q(y_{j-1}\given y_j)} {\pi(y_0) \prod_{j=1}^{k} q(y_j \given y_{j-1})}.
  \]
  Multiplying both the numerator and the denominator by $\prod_{j=k+1}^{n} q(y_j \given y_{j-1})$, we see that the above condition is equivalent to
  \begin{equation}
  \Lambda < \frac{\pi(y_k) \prod_{j=1}^k q(y_{j-1}\given y_j) \prod_{j=k+1}^{n} q(y_j \given y_{j-1})} {\pi(y_0) \prod_{j=1}^{n} q(y_j \given y_{j-1})}.
  \label{eqn:proof_spMH_accep_cri}
  \end{equation}
  For $k\sr \in 0\col n$, we define the following quantities
  \[
  p_k(y_0,y_1,\dots,y_n) := \pi(y_k) \prod_{j=1}^k q(y_{j-1}\given y_j) \prod_{j=k+1}^{n} q(y_j \given y_{j-1}),
  \]
  such that the condition \eqref{eqn:proof_spMH_accep_cri} can be concisely written as
  \[
  \Lambda < \frac{p_k(y_0,y_1,\dots,y_n)}{p_0(y_0,y_1,\dots,y_n)}.
  \]
  In what follows, $p_k(y_0,y_1,\dots,y_n)$ will be denoted by $p_k$ for brevity.
  The proposal $y_n$ is taken as the next state of the Markov chain if and only if it is the $L$-th acceptable proposal among the sequence of proposals $y_{1:n}$.
  This happens if and only if $\Lambda < p_n/p_0$ and there are exactly $L {-} 1$ proposals among $y_{1:n-1}$ such that $\Lambda < p_k/p_0$.
  The latter condition is satisfied if and only if $\Lambda$ is less than the $L{-} 1$-th largest number among $p_{1:n-1}/p_0$ but greater than or equal to the $L$-th largest number among the same sequence, that is,
  \[
  r_L\left(\frac{p_{1:n-1}}{p_0}\right) \leq \Lambda < r_{L-1}\left(\frac{p_{1:n-1}}{p_0}\right).
  \]
  Under the assumption that $X^{(i)}$ is distributed according to the target density $\bar\pi$, the probability that the current state of the Markov chain is in a set $A \sr\in \mathcal X$ and the $n$-th proposal, which is in a set $B \sr\in \mathcal X$, is taken as the next state of the Markov chain is given by
  \begin{equation}
  \begin{split}
    &\int \1_A(y_0) \1_B(y_n) \bar\pi(y_0) \prod_{j=1}^n q(y_j \given y_{j-1}) \1\left[\Lambda \geq r_L\left( \frac{p_{1:n-1}}{p_0} \right) \right] \1\left[\Lambda < r_{L-1} \left( \frac{p_{1:n-1}}{p_0} \right) \right] \1\left[ \Lambda < \frac{p_n}{p_0} \right] \\
    & \hspace{70ex}\cdot \1\left[ 0 < \Lambda < 1\right] d\Lambda\, dy_{0:n}\\
    &= \int \1_A(y_0) \1_B(y_n)\cdot \frac{p_0}{Z} \cdot \1\left[\Lambda \geq r_L\left( \frac{p_{1:n-1}}{p_0} \right) \right] \cdot \1\left[\Lambda < \min\left\{ r_{L-1} \left( \frac{p_{1:n-1}}{p_0} \right), \frac{p_n}{p_0}, 1 \right\} \right] d\Lambda\, dy_{0:n}\\
    &= \int \1_A(y_0) \1_B(y_n) \cdot \frac{p_0}{Z} \cdot \left( \min\left\{ \frac{r_{L-1}(p_{1:n-1})}{p_0}, \frac{p_n}{p_0}, 1\right\} - \min\left\{ \frac{r_L(p_{1:n-1})}{p_0}, \frac{p_n}{p_0}, 1\right\} \right) d\Lambda \, dy_{0:n}\\
    &= \frac{1}{Z} \int \1_A(y_0) \1_B(y_n) \cdot \big( \min\{r_{L-1}(p_{1:n-1}), p_n, p_0\} - \min\{r_L(p_{1:n-1}), p_n, p_0\} \big) d\Lambda\, dy_{0:n}.
  \end{split}
  \label{eqn:proof_spMH_int}
  \end{equation}
  We will change the notation of dummy variables by writing $y_0 \gets y_n$, $y_1 \gets y_{n-1}$, $\dots$, $y_n \gets y_0$.
  But note that $p_k(y_n,y_{n-1},\dots, y_0)$ can be expressed as
  \[
  \pi(y_{n-k}) \prod_{j=1}^{k} q(y_{n-j+1}\given y_{n-j}) \prod_{j=k+1}^n q(y_{n-j} \given y_{n-j+1})
  = \pi(y_{n-k}) \prod_{j=n-k+1}^n q(y_j \given y_{j-1}) \prod_{j=1}^{n-k} q(y_{j-1} \given y_j),
  \]
  which is the same as an expression for $p_{n-k}(y_0,y_1,\dots, y_n)$.
  Thus under the change of notation, \eqref{eqn:proof_spMH_int} can be re-written as
  \[
  \frac{1}{Z} \int \1_A(y_n) \1_B(y_0) \cdot \big[ \min\{r_{L-1}(p_{n-1:1}), p_0, p_n\} - \min\{r_L(p_{n-1:1}), p_0, p_n\} \big] d\Lambda\, dy_{n:0}.
  \]
  where $p_k$ denotes $p_k(y_0, y_1, \dots, y_n)$ for $k \sr\in 0\col n$.
  The above integral is equal to \eqref{eqn:proof_spMH_int} with the sets $A$ and $B$ interchanged.
  Thus we have proved that the probability that the current state of the Markov chain is in $A$ and the $n$-th proposal, which is in $B$, is taken as the next state of the Markov chain is equal to the probability that the current state is in $B$ and the $n$-th proposal, which is in $A$, is taken as the next state.
  Summing the established equality over all $n\sr\in 1\col N$ and finally noting that the next state in the Markov chain is set equal to the current state in the case where there were less than $L$ proposals found in the first $N$ proposals, we reach the conclusion that under the assumption that $X^{(i)}$ is distributed according to $\bar\pi$, the probability that the current state of Markov chain is in $A$ and the next state is in $B$ is the same as the probability that the current state is in $B$ and the next state is in $A$.
  This finishes the proof of detailed balance for Algorithm~\ref{alg:spMH}.
\end{proof}

\section{Sequential-proposal Metropolis-Hastings algorithms with proposal kernels dependent on previous proposals}\label{sec:spMH_path_depen_kernel}

\begin{figure}[t]
  \centering
  \scalebox{.89}{\begin{minipage}{\textwidth}
\begin{algorithm}[H]
  \SetKwInOut{Input}{Input}\SetKwInOut{Output}{Output}
  \Input{Distribution for the maximum number of proposals and the number of accepted proposals, $\nu(N,L)$\\
  Path-dependent proposal kernels, $\{q_j(\cdot \given x_{j-1}, \dots, x_0) \giventh j \sr \geq 1\}$\\
  Number of iterations, $M$}
  \vspace{1ex}
  \Output{A draw of Markov chain, $\left(X^{(i)}\right)_{i\in 1:M}$}
  \vspace{1ex}
  \textbf{Initialize:} Set $X^{(0)}$ arbitrarily
  
  \For {$i\gets 0\col M{-}1$}{
    Draw $(N,L) \sim \nu(\cdot, \cdot)$\\
    Draw $\Lambda\sim \text{unif}(0,1)$\\
    Set $X^{(i+1)} \gets X^{(i)}$\\
    Set $Y_0 \gets X^{(i)}$ and $n_a \gets 0$\\
    \For {$n \gets 1\col N$} {
      Draw $Y_n \sim q_n(\cdot \given Y_{n-1:0})$\\
      \textbf{if} {$\displaystyle \Lambda < \frac{\pi(Y_n) \prod_{j=1}^n q_j(Y_{n-j} \given Y_{n-j+1:n}) } { \pi(Y_0) \prod_{j=1}^n q_j(Y_j \given Y_{j-1:0}) }$} \textbf{then} $n_a \gets n_a + 1$\\
      \If {$n_a = L$} {
        \If {there exist exactly $L\sr -1$ cases among $k\sr\in 1\col n{-}1$ such that $\displaystyle \Lambda < \frac{\pi(y_{k}) \prod_{j=1}^{n-k} q_j(y_{k+j} \given y_{k+j-1:k}) \prod_{j=n-k+1}^n q_j(y_{n-j} \given y_{n-j+1:n})} { \pi(y_0) \prod_{j=1}^n q_j(y_j \given y_{j-1:0}) }$ } {
          Set $X^{(i+1)} \gets Y_n$
        }
        \texttt{break}
      }
    }
  }
  \caption{A sequential-proposal Metropolis Hasting algorithm using a path-dependent proposal kernel}
\label{alg:spMHgen}
\end{algorithm}
  \end{minipage}}
  \end{figure}

In Section~\ref{sec:spMHgen} we presented a generalization of Algorithm~\ref{alg:spMH} in which the proposal kernel can depend on previous proposals made in the same iteration.
A pseudocode for this generalized version is given in Algorithm~\ref{alg:spMHgen}.
A proof that Algorithm~\ref{alg:spMHgen} constructs a reversible Markov chain with respect to the target density $\bar\pi$ is given below.
\begin{prop}\label{prop:spMHgen_detailed_balance}
  Algorithm~\ref{alg:spMHgen} constructs a reversible Markov chain with respect to the target density $\bar\pi$.
\end{prop}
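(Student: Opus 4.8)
The plan is to follow the proof of Proposition~\ref{prop:detailedbalanceMH} almost verbatim; the single new ingredient is a second family of path weights that encodes the symmetry test \eqref{eqn:spMH_dual_condition}. As there, it suffices to fix $N$ and $L$ and to prove, for each $n$ with $1\le n\le N$, that the probability --- with $X^{(i)}\sim\bar\pi$ --- that the chain starts in $A$, draws proposals $y_{1:n}$, and takes $y_n$ as $X^{(i+1)}$ is left unchanged when $A$ and $B$ are interchanged. Summing over $n$ and adjoining the event that the chain stays put --- which is trivially symmetric in $A,B$, since then $X^{(i+1)}=X^{(i)}$ --- gives detailed balance for fixed $(N,L)$, and averaging over $\nu(N,L)$ gives the stated result.

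The key definitions I would introduce, for a path $y_{0:n}$ and $0\le k\le n$, are the forward weight
\[
p_k:=\pi(y_k)\prod_{j=1}^{k}q_j(y_{k-j}\given y_{k-j+1:k})\prod_{j=k+1}^{n}q_j(y_j\given y_{j-1:0})
\]
and the reflected weight
\[
\tilde p_k:=\pi(y_k)\prod_{j=1}^{n-k}q_j(y_{k+j}\given y_{k+j-1:k})\prod_{j=n-k+1}^{n}q_j(y_{n-j}\given y_{n-j+1:n}).
\]
Two elementary facts then do all the work. First, clearing denominators shows that the $k$-th proposal is acceptable (in the forward sense of \eqref{eqn:accep_n_path_dep}) precisely when $\Lambda<p_k/p_0$, while the symmetry condition \eqref{eqn:spMH_dual_condition} for index $k$ reads precisely $\Lambda<\tilde p_k/p_0$; moreover $p_0=\tilde p_0$ is the forward path density $\pi(y_0)\prod_j q_j(y_j\given y_{j-1:0})$ and $p_n=\tilde p_n$ is the reversed path density $\pi(y_n)\prod_j q_j(y_{n-j}\given y_{n-j+1:n})$. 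Second, substituting $y_i\mapsto y_{n-i}$ into the definitions yields the identity $\tilde p_k(y_{0:n})=p_{n-k}(y_{n:0})$ (hence also $p_k(y_{0:n})=\tilde p_{n-k}(y_{n:0})$); consequently, reversing the order of a path interchanges $p_0$ with $p_n$ and swaps the two finite multisets $\{p_1,\dots,p_{n-1}\}$ and $\{\tilde p_1,\dots,\tilde p_{n-1}\}$.

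From here the computation mirrors Appendix~\ref{sec:proof_spMH}. The event that $y_n$ is taken at step $n$ requires $\Lambda<p_n/p_0$, exactly $L-1$ indices $k\in 1:n-1$ with $\Lambda<p_k/p_0$ (so $y_n$ is the $L$-th forward-acceptable proposal), and exactly $L-1$ indices $k\in 1:n-1$ with $\Lambda<\tilde p_k/p_0$ (the symmetry test); expressed through the rank notation $r_l$ of Appendix~\ref{sec:proof_spMH}, this confines $\Lambda$ to an interval whose length, multiplied by the path density $p_0/Z$, produces an integrand of the form
\[
\frac{1}{Z}\,\1_A(y_0)\,\1_B(y_n)\Big(\min\{r_{L-1}(p_{1:n-1}),r_{L-1}(\tilde p_{1:n-1}),p_n,p_0\}-\min\{\max\{r_L(p_{1:n-1}),r_L(\tilde p_{1:n-1})\},r_{L-1}(p_{1:n-1}),r_{L-1}(\tilde p_{1:n-1}),p_n,p_0\}\Big),
\]
the usual care with equalities being irrelevant since they occur on a $\Lambda$-null set. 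Applying the dummy substitution $y_i\gets y_{n-i}$, which has Jacobian one, turns the indicators into $\1_A(y_n)\1_B(y_0)$ while, by the second fact above, leaving the bracketed difference invariant --- exchanging $p_0$ with $p_n$ fixes each $\min$, and exchanging $\{p_k\}$ with $\{\tilde p_k\}$ merely relabels the ranks inside. Hence the integral is unchanged under swapping $A$ and $B$, which is the desired identity for the $n$-th proposal; summing over $n$ and adding the ``stay'' contribution completes the proof. The only step that demands careful bookkeeping is the index chase establishing $\tilde p_k(y_{0:n})=p_{n-k}(y_{n:0})$ --- and it is precisely there that one sees why the extra condition \eqref{eqn:spMH_dual_condition} is the right thing to impose: it is exactly what is needed to symmetrize the integrand under path reversal.
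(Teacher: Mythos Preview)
Your proposal is correct and follows essentially the same approach as the paper's own proof: the paper defines the same forward weights $p_k$ and reflected weights (denoted $\overline p_k$ rather than your $\tilde p_k$), establishes the same reversal identity $p_k(y_{0:n})=\overline p_{n-k}(y_{n:0})$, integrates out $\Lambda$ using the rank notation to obtain the same symmetric integrand, and then performs the same dummy-variable substitution $y_i\gets y_{n-i}$ to swap $A$ and $B$. Your observation that $p_0=\tilde p_0$ and $p_n=\tilde p_n$, and that the substitution therefore just interchanges the two rank multisets while fixing the remaining terms, is exactly the mechanism the paper uses.
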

\begin{proof}
  Again we consider fixed $N$ and $L$, because the general case can easily follow by considering a mixture over $N$ and $L$.
  Let $\Lambda$ denote the uniform$(0,1)$ number drawn at the start of the iteration.
  We will denote the value of current state of the Markov chain by $y_0$, and the values of a sequence of proposals up to the $n$-th proposal as $y_1$, $\dots$, $y_n$.
  The $n$-th proposal $y_n$ is taken as the next state of the Markov chain if and only if
  \[
  \Lambda < \frac{\pi(y_n) \prod_{j=1}^n q_j(y_{n-j} \given y_{n-j+1:n})}{\pi(y_0) \prod_{j=1}^n q_j(y_j \given y_{j-1:0}) },
  \]
  and there are exactly $L-1$ numbers $k$ among $1\col n{-}1$ that satisfy
  \begin{equation}
  \Lambda < \frac{\pi(y_k) \prod_{j=1}^k q_j(y_{k-j} \given y_{k-j+1:k})}{\pi(y_0) \prod_{j=1}^k q_j(y_j \given y_{j-1:0}) },
  \label{eqn:spMHgen_k_accep}
  \end{equation}
  and also there exist exactly $L-1$ numbers $k'$ among $1\col n{-}1$ that satisfy
  \begin{equation}
    \Lambda < \frac{\pi(y_{k'}) \prod_{j=1}^{n-k'} q_j(y_{k'+j} \given y_{k'+j-1:k'}) \prod_{j=n-k'+1}^n q_j(y_{n-j} \given y_{n-j+1:n})} { \pi(y_0) \prod_{j=1}^n q_j(y_j \given y_{j-1:0}) }.
    \label{eqn:spMHgen_dual}
  \end{equation}
  The inequality \eqref{eqn:spMHgen_k_accep} can be expressed as
  \[
  \Lambda < \frac{\pi(y_k) \prod_{j=1}^k q_j(y_{k-j} \given y_{k-j+1:k}) \prod_{k+1}^n q_j(y_j \given y_{j-1:0})} {\pi(y_0) \prod_{j=1}^n q_j(y_j \given y_{j-1:0}) }.
  \]
  We note that the numerator in the expression above is the probability density of drawing a sequence of proposals in the order $y_k \sr\to y_{k-1} \sr\to \cdots \sr\to y_0 \sr\to y_{k+1} \sr\to y_{k+2} \sr\to \dots \sr\to y_n$, where the value $y_j$ for $j\sr\geq k\sr+1$ is drawn from a proposal density $q_j(\cdot \given y_{j-1}, y_{j-2}, \dots, y_0)$.
  We denote this probability density by
  \[
  p_k(y_0,y_1,\dots,y_n) := \pi(y_k) \prod_{j=1}^k q_j(y_{k-j} \given y_{k-j+1:k}) \prod_{k+1}^n q_j(y_j \given y_{j-1:0}).
  \]
  We also denote the numerator in \eqref{eqn:spMHgen_dual} by
  \[
  \overline p_k(y_0,y_1,\dots,y_n) := \pi(y_{k}) \prod_{j=1}^{n-k} q_j(y_{k+j} \given y_{k+j-1:k}) \prod_{j=n-k+1}^n q_j(y_{n-j} \given y_{n-j+1:n}),
  \]
  which gives the probability density of drawing proposals in the order $y_k \to y_{k+1} \to \cdots y_n \to y_{k-1} \to \cdots \to y_0$, where $y_j$ for $j \sr \leq k\sr- 1$ is drawn from $q_{n-j}(\cdot \given y_{j+1}, \dots, y_n)$.
  One can easily check the following relations:
  \begin{equation}
    \begin{aligned}
      p_n(y_{0:n}) = p_0(y_{n:0}),& \qquad p_0(y_{0:n}) = p_n(y_{n:0}), \qquad\text{and}\\
      p_k(y_{0:n}) = \overline p_{n-k}(y_{n:0}),& \qquad p_{k}(y_{n:0}) = \overline p_{n-k}(y_{0:n}) \qquad \text{for } k \in 0\col n,\\
    \end{aligned}
    \label{eqn:spMHgen_sym_p}
  \end{equation}
  where we remind the reader of our notation $y_{0:n} := (y_0, y_1, \dots, y_n)$ and $y_{n:0} := (y_n, y_{n-1}, \dots, y_0)$.
  Now \eqref{eqn:spMHgen_k_accep} and \eqref{eqn:spMHgen_dual} can be concisely expressed as
  \[
  \Lambda < \frac{p_k(y_{0:n})}{p_0(y_{0:n})}, \quad \text{ and } \quad \Lambda < \frac{\overline p_k(y_{0:n})}{p_0(y_{0:n})}
  \]
  respectively.
  The conditions required for taking $y_n$ as the next state of the Markov chain can be summarized by the following inequalities:
  \begin{gather*}
    \Lambda \geq r_L\left( \frac{p_{1:n-1}}{p_0}(y_{0:n}) \right), \qquad \Lambda < r_{L-1}\left( \frac{p_{1:n-1}}{p_0}(y_{0:n}) \right),\\
    \Lambda \geq r_L\left( \frac{\overline p_{1:n-1}}{\overline p_0}(y_{0:n}) \right), \qquad \Lambda < r_{L-1}\left( \frac{\overline p_{1:n-1}}{\overline p_0}(y_{0:n}) \right),\\
    \text{and }\quad \Lambda < \frac{p_n}{p_0},
  \end{gather*}
  where $r_L$ denotes the function returning the $L$-rank as defined in Section~\ref{sec:proof_spMH}, and $\frac{p_{1:n-1}}{p_0}(y_{0:n})$ denotes the sequence of values $\big(\frac{p_1(y_{0:n})}{p_0(y_{0:n})}, \dots, \frac{p_{n-1}(y_{0:n})}{p_0(y_{0:n})}\big)$.
  In what follows, $p_k(y_{0:n})$ and $\overline p_k(y_{0:n})$ will be written as $p_k$ and $\overline p_k$ for brevity.
  Under the assumption that at the current iteration the state of the Markov chain is distributed according to $\bar\pi$, the probability that the current state is in $A$ and the $n$-th proposal, which is in $B$, is taken as the next state of the Markov chain is given by
  \begin{equation}\begin{split}
  &\int \1_A(y_0) \1_B(y_n) \bar\pi(y_0) \prod_{j=1}^n q_j(y_j \given y_{j-1:0}) \1\left[ \Lambda < \min\left\{ 1, \frac{p_n}{p_0}, r_{L-1}\left(\frac{p_{1:n-1}}{p_0}\right), r_{L-1}\left(\frac{\overline p_{1:n-1}}{p_0}\right) \right\} \right] \\
  &\hspace{45ex}\cdot \1\left[ \Lambda \geq \max\left\{r_L\left( \frac{p_{1:n-1}}{p_0} \right), r_L \left( \frac{\overline p_{1:n-1}}{p_0} \right) \right\} \right] d\Lambda \, dy_{0:n}\\
  &= \int \1_A(y_0) \1_B(y_n) \frac{p_0}{Z} \left[ \min\left\{ 1, \frac{p_n}{p_0}, r_{L-1}\left(\frac{p_{1:n-1}}{p_0}\right), r_{L-1}\left(\frac{\overline p_{1:n-1}}{p_0}\right) \right\} \right.\\
  &\hspace{8ex} \left.- \min\left\{ 1, \frac{p_n}{p_0}, r_{L-1}\left(\frac{p_{1:n-1}}{p_0}\right), r_{L-1}\left(\frac{\overline p_{1:n-1}}{p_0}\right), \max\left\{r_L\left(\frac{p_{1:n-1}}{p_0}\right), r_L\left( \frac{\overline p_{1:n-1}}{p_0} \right) \right\} \right\} \right] dy_{0:n}\\
  &=\frac{1}{Z} \int \1_A(y_0) \1_B(y_n) \big[ \min\{ p_0, p_n, r_{L-1}(p_{1:n-1}), r_{L-1}(\overline p_{1:n-1}) \} \\
  &\hspace{22ex} - \min\{ p_0, p_n, r_{L-1}(p_{1:n-1}), r_{L-1}(\overline p_{1:n-1}), \max\{ r_L(p_{1:n-1}), r_L(\overline p_{1:n-1})\} \} \big] dy_{0:n}
  \end{split}
  \label{eqn:spMHgen_prob}
  \end{equation}
  We now change the notation of dummy variables by writing $y_0 \gets y_n$, $y_1 \gets y_{n-1}$, $\dots$, $y_n \gets y_0$, and noting the relations \eqref{eqn:spMHgen_sym_p}, we may rewrite \eqref{eqn:spMHgen_prob} as
  \[
  \begin{split}
  &\frac{1}{Z} \int \1_A(y_n) \1_B(y_0) \big[ \min\{ p_n, p_0, r_{L-1}(\overline p_{n-1:1}), r_{L-1}(p_{n-1:1}) \} \\
  &\hspace{22ex} - \min\{ p_n, p_0, r_{L-1}(\overline p_{n-1:1}), r_{L-1}(p_{n-1:1}), \max\{ r_L(\overline p_{n-1:1}), r_L(p_{n-1:1})\} \} \big] dy_{n:0}
  \end{split}
  \]
  But the above display is equal to what is obtained when the sets $A$ and $B$ are interchanged in \eqref{eqn:spMHgen_prob}.
  Thus we have proved that, denoting the current state of the Markov chain as $X^{(i)}$ and the next state as $X^{(i+1)}$ and assuming that $X^{(i)}$ is distributed according to $\bar\pi$,
  \begin{multline*}
  \mathcal P[X^{(i)} \in A, X^{(i+1)} \in B, \text{the }n\text{-th proposal is taken as }X^{(i+1)}]\\
  = \mathcal P[X^{(i)} \in B, X^{(i+1)} \in A, \text{the }n\text{-th proposal is taken as }X^{(i+1)}].
  \end{multline*}
  Summing the above equation for $n \sr \in 1\col N$ and considering that $X^{(i+1)}$ is set equal to $X^{(i)}$ for all scenarios except when a proposal among $y_0$, $\dots$, $y_N$ is taken as the next state of the Markov chain, we reach the conclusion that 
  \[
  \mathcal P[X^{(i)} \in A, X^{(i+1)} \in B] = \mathcal P[X^{(i)} \in B, X^{(i+1)} \in A],
  \]
  which shows the desired detailed balance of the Markov chain with respect to $\bar\pi$.
\end{proof}

\section{Equivalence between the sequential-proposal Metropolis-Hastings algorithm and the delayed rejection method (only in the case where proposals are not path-dependent)}\label{sec:equiv_delayedrej}
Before showing the equivalence between sequential-proposal Metropolis-Hastings algorithms for $L\sr=1$ and the delayed rejection method when the proposal kernel is not path-dependent, we briefly check that the target density $\bar\pi$ is invariant in the delayed rejection method.
It suffices to check that the detailed balance equation holds for each $n\sr\in 1\col N$.
The probability density that $y_0$ is drawn from $\bar\pi$, the proposals $y_1, \dots, y_n$ are drawn sequentially, and $y_n$ is the first accepted value is given by
\begin{equation*}\begin{split}
    &\bar\pi(y_0) \prod_{j=1}^n q(y_j \given y_{j-1}) \prod_{j=1}^{n-1} \{1-\alpha_j(y_{0:j})\} \cdot \alpha_n(y_{0:n})\\
    &= \left[ \bar\pi(y_0) \prod_{j=1}^n q(y_j \given y_{j-1}) \prod_{j=1}^{n-1} \{1-\alpha_j(y_{0:j})\}\right] \land \left[\bar\pi(y_n) \prod_{j=1}^n q(y_{j-1}\given y_j) \prod_{j=1}^{n-1}\{1-\alpha_j(y_{n:n-j})\}\right].
  \end{split}\end{equation*}
Since the above quantity is symmetric with respect to reversing the order of sequence from $y_{0:n}$ to $y_{n:0}$, it also equals the probability density that starting from $y_n$, the proposals $y_{n-1}, \dots, y_1, y_0$ are drawn and $y_0$ becomes the first accepted value, which is given by
\[
\bar\pi(y_n) \prod_{j=1}^n q(y_{j-1}\given y_j) \prod_{j=1}^{n-1}\{1-\alpha_j(y_{n:n-j})\} \cdot \alpha_n(y_{n:0}).
\]
Combining the case for $n \sr \in 1\col N$, we see that the Markov chain constructed by the delayed rejection method is reversible with respect to the target density $\bar\pi$.
We now prove the following proposition.

\begin{prop}\label{prop:equiv_delayedrej}
  Sequential-proposal Metropolis-Hastings algorithm (Algorithm~\ref{alg:spMH}) for $L\,{=}\,1$ and fixed $N$ constructs a Markov chain that has the same law as that constructed by the delayed rejection method.
\end{prop}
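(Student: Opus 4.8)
The plan is to show that, with $L\,{=}\,1$ and the given fixed $N$, the two procedures induce the same one-step transition kernel (a random $N$ drawn from some $\nu$ is then handled by mixing). The two algorithms generate the candidate sequence in exactly the same way---$Y_k\sim q(\cdot\given Y_{k-1})$ drawn sequentially, stopping at the first accepted candidate or after $N$ candidates---so they differ only in the rule that decides which candidate is accepted. Hence it suffices to show that, for each $n\in 1\col N$ and each realized path $y_{0:n}$, the conditional probability (over the algorithm's internal randomness: the single $\Lambda$ for the sequential-proposal algorithm, and the independent acceptance coins for delayed rejection) that $y_1,\dots,y_{n-1}$ are all rejected and $y_n$ is accepted is the same function of $y_{0:n}$ for both; integrating this common quantity against $\prod_{j=1}^n q(y_j\given y_{j-1})$ over the intermediate proposals $y_{1:n-1}$, summing over $n$, and noting the ``stay'' event is the complement then yields equality of the kernels and of the laws of the chains. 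Put $c_k:=p_k(y_{0:n})/p_0(y_{0:n})$ with $p_k$ as in the proof of Proposition~\ref{prop:detailedbalanceMH}, so that $c_0=1$ and $c_k=\tfrac{\pi(y_k)\prod_{j=1}^k q(y_{j-1}\given y_j)}{\pi(y_0)\prod_{j=1}^k q(y_j\given y_{j-1})}$ depends only on $y_{0:k}$. Rerunning the computation in the proof of Proposition~\ref{prop:detailedbalanceMH} with $L\,{=}\,1$, the sequential-proposal probability is
\[
\beta_n^{\mathrm{sp}}(y_{0:n})=\min(1,c_n)-\min\bigl(1,\,c_n,\,{\textstyle\max_{1\le k\le n-1}}c_k\bigr),
\]
whereas by construction the delayed-rejection probability is $\beta_n^{\mathrm{dr}}(y_{0:n})=\alpha_n(y_{0:n})\prod_{j=1}^{n-1}\{1-\alpha_j(y_{0:j})\}$.

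The crux is the identity
\[
\prod_{j=1}^{m}\{1-\alpha_j(y_{0:j})\}=1-\min\bigl(1,\,{\textstyle\max_{1\le k\le m}}c_k\bigr),\qquad m\ge 1,
\]
which I would prove by induction on $m$ (reading the empty product as $1$ and a $\max$ over the empty index set as $0$, which also covers $m\,{=}\,0$ and the case $n\,{=}\,1$, where both algorithms reduce to ordinary Metropolis--Hastings). For the inductive step, use the recursive definition $\alpha_m(y_{0:m})=1\land\bigl(c_m\prod_{j=1}^{m-1}\{1-\alpha_j(y_{m:m-j})\}\big/\prod_{j=1}^{m-1}\{1-\alpha_j(y_{0:j})\}\bigr)$. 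The key observation is that running the recursion along the reversed path $(y_m,y_{m-1},\dots,y_0)$ replaces each $c_k$ by $c_{m-k}/c_m$ (a one-line cancellation of the $q$-products), so the inductive hypothesis applied to that reversed path gives $\prod_{j=1}^{m-1}\{1-\alpha_j(y_{m:m-j})\}=1-\min(1,c_m^{-1}\max_{1\le k\le m-1}c_k)$. Writing $A:=\max_{1\le k\le m-1}c_k$ and $B:=c_m$, the claim reduces to the elementary identity $\bigl(1-\min(1,A)\bigr)\bigl(1-1\land\tfrac{B-\min(A,B)}{1-\min(1,A)}\bigr)=1-\min(1,\max(A,B))$, checked by case analysis on the ordering of $A$, $B$, $1$; the case $A\ge 1$ is trivial since then $\prod_{j=1}^{m-1}\{1-\alpha_j(y_{0:j})\}=0$ already and both sides vanish regardless of $\alpha_m$.

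Granting the lemma, rewrite $\beta_n^{\mathrm{dr}}$ via the same expansion of $\alpha_n$ as $\bigl[\prod_{j=1}^{n-1}\{1-\alpha_j(y_{0:j})\}\bigr]\land\bigl[c_n\prod_{j=1}^{n-1}\{1-\alpha_j(y_{n:n-j})\}\bigr]$, and apply the lemma to the forward path $y_{0:n-1}$ and to the reversed path $(y_n,\dots,y_1)$; with $A:=\max_{1\le k\le n-1}c_k$ this becomes $(1-A)^{+}\land(c_n-A)^{+}$. A final elementary case split on the ordering of $A$, $c_n$, $1$ shows $(1-A)^{+}\land(c_n-A)^{+}=\min(1,c_n)-\min(1,c_n,A)=\beta_n^{\mathrm{sp}}$, which is precisely the per-path, per-$n$ identity required; summing over $n\in 1\col N$ (and mixing over $\nu$ in the random-$N$ case), with the ``stay'' probabilities matching as complements---equivalently, directly from the lemma at $m\,{=}\,N$---completes the proof. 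I expect the only genuine obstacle to be the induction for the key lemma: the bookkeeping that identifies the reversed-path ratios $c_{m-k}/c_m$ and the repeated $\min$/$\max$ case analysis (including the degenerate $\alpha_j\,{=}\,1$ branches); everything downstream is routine algebra.
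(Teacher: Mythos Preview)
Your proposal is correct and follows essentially the same route as the paper: both arguments hinge on the induction lemma $\prod_{j=1}^{m}\{1-\alpha_j(y_{0:j})\}=1-\min(1,\max_{1\le k\le m}c_k)$, established via the reversed-path identity $\bar c_k^m=c_{m-k}/c_m$ and an elementary $\min/\max$ manipulation. The only cosmetic difference is that the paper works with the ``all-rejected'' probabilities and obtains the per-$n$ acceptance probabilities by subtraction, whereas you compute $\beta_n^{\mathrm{dr}}$ and $\beta_n^{\mathrm{sp}}$ directly; the paper also packages the case analysis into the two identities $(u\lor v)\land w=u\land w+v\land w-u\land v\land w$ and $(u-w\land u)\land(v-w\land v)=u\land v-w\land u\land v$, which you may find cleaner than a full case split.
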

\begin{proof}
  For both Algorithm~\ref{alg:spMH} and the delayed rejection method, let the state of the Markov chain at the start of a certain iteration be denoted by $y_0$.
  In both algorithms, the probability density of drawing $y_0$ from $\bar\pi$ and a sequence of proposals $y_{1:N}$ using a proposal kernel with density $q$ is given by
  \[
  \bar\pi(y_0)\prod_{j=1}^n q(y_j \given y_{j-1}).
  \]
  Given the current state $y_0$ and a sequence of proposals $y_{1:n}$, the probability that $y_n$ is taken as the next state of the Markov chain is obtained by subtracting the probability that all of $y_{1:n}$ are rejected from the probability that $y_{1:n-1}$ are rejected.
  Thus it suffices to show that for an arbitrary $n \sr\in 1\col N$ and a given sequence of proposals $y_{1:n}$, the probability that all of the proposals are rejected is the same in both algorithms.
  In what follows we assume $y_{1:n}$ are drawn and fixed.
  We define for $k\in 1\col N$,
  \[
  c_k(y_{0:n}) := \frac{\pi(y_k) \prod_{j=1}^k q(y_{j-1}\given y_j)} {\pi(y_0) \prod_{j=1}^k q(y_j \given y_{j-1})}.
  \]
  For brevity, $c_k(y_{0:n})$ will be denoted simply by $c_k$ in cases where the argument $y_{0:n}$ can be clearly understood.
  In Algorithm~\ref{alg:spMH}, all $y_{1:n}$ are rejected if and only if $\Lambda \geq c_k$ for all $k\sr \in 1\col n$ where $\Lambda \sim \text{unif}(0,1)$ is the random number drawn at the start of the current iteration, so the probability that all $y_{1:n}$ are rejected is given by
  \[
  \int_0^1 \1[\Lambda \geq \max(c_{1:n})] d\Lambda
  = 1 - \max(c_{1:n}) \land 1.
  \]
  For the delayed rejection method, we denote
  \[
  \beta_k := \alpha_k(y_{0:k}), \quad \bar\beta_k^n := \alpha_k(y_{n:n-k}),  \quad k\sr\in 1\col n{-}1.
  \]
  Since the probability that all $y_{1:n}$ are rejected in an implementation of the delayed rejection method is given by $\prod_{k=1}^n (1-\beta_k)$, our goal is to show that
  \begin{equation}\prod_{k=1}^n (1-\beta_k) = 1 - \max(c_{1:n}) \land 1, \label{eqn:equiv_goal}\end{equation}
  which we will prove by induction.
  The case for $n=1$ is obvious from the definition of $\beta_1$.
  Suppose we have
  \begin{equation}
    \prod_{k=1}^{n-1}(1-\beta_k) = 1-\max(c_{1:n-1})\land 1.
    \label{eqn:equiv_delay_spMH_induc_hypo}
  \end{equation}
  We denote for $k \sr\in 1\col n{-}1$ and $n \sr\in 1\col N$
  \[
  \bar c^n_k := \frac{\pi(y_{n-k}) \prod_{j=1}^{k} q(y_{n-j+1} \given y_{n-j})}{\pi(y_n) \prod_{j=1}^{k} q(y_{n-j} \given y_{n-j+1})}.
  \]
  In \eqref{eqn:equiv_delay_spMH_induc_hypo}, both $\beta_{1:n-1}$ and $c_{1:n-1}$ are functions of $y_{0:n}$.
  If we change the notation by writing $y_0\gets y_n$, $y_1\gets y_{n-1}$, $\dots$, $y_n \gets y_0$, the equation \eqref{eqn:equiv_delay_spMH_induc_hypo} becomes
  \begin{equation}
    \prod_{k=1}^{n-1}(1-\bar \beta_k^n) = 1-\max(\bar c_{1:n-1}^n) \land 1.
    \label{eqn:induc_hypo_rev}
  \end{equation}
  It can also be easily checked from the definitions of $c_k$ and $\bar c_k^n$ that $c_n \bar c^n_k = c_{n-k}$ for $k\sr\in 1\col n{-}1$.
  Since the acceptance probability $\beta_k$ in the delayed rejection method is given by
  \[
  \beta_k = c_n \frac{\prod_{k=1}^{n-1} (1-\bar\beta_k^n)}{\prod_{k=1}^{n-1} (1-\beta_k)} \land 1,
  \]
  we observe that
  \begin{equation}\begin{split}
      \prod_{k=1}^n(1-\beta_k) &= \prod_{k=1}^{n-1}(1-\beta_k) \cdot \left(1- c_n \frac{\prod_{k=1}^{n-1} (1-\bar\beta_k^n)}{\prod_{k=1}^{n-1} (1-\beta_k)} \land 1\right)\\
      &= \{1-\max(c_{1:n-1})\land 1\} - [c_n \{1-\max(\bar c^n_{1:n-1}) \land 1\}] \land \{1-\max(c_{1:n-1})\land 1\}\\
      &= \{1-\max(c_{1:n-1})\land 1\} - \{c_n - \max(c_{n-1:1})\land c_n\} \land \{1-\max(c_{1:n-1})\land 1\}
      \label{eqn:equiv_delay_spMH_all_rejec}
  \end{split}\end{equation}
  using the induction hypothesis \eqref{eqn:equiv_delay_spMH_induc_hypo} and \eqref{eqn:induc_hypo_rev}.
  For real numbers $u$, $v$, and $w$, the following relation holds:
  \[
  (u\lor v) \land w \equiv u\land w + v \land w - u\land v \land w.
  \]
  We see that
  \begin{equation*}\begin{split}
      1 - \max(c_{1:n})\land 1 &= 1 - (\max(c_{1:n-1}) \lor c_n) \land 1\\
      &= 1 - \max(c_{1:n-1})\land 1 - c_n \land 1 + \max(c_{1:n-1})\land c_n \land 1.
  \end{split}\end{equation*}
  Thus from \eqref{eqn:equiv_delay_spMH_all_rejec}, showing \eqref{eqn:equiv_goal} reduces to checking
  \begin{equation}
    \{c_n - \max(c_{n-1:1})\land c_n\} \land \{1-\max(c_{1:n-1})\land 1\} = c_n \land 1 - \max(c_{1:n-1})\land c_n \land 1.
    \label{eqn:equiv_delay_spMH_c}
  \end{equation}
  However, one can check the following relation holds:
  \[(u - w\land u) \land (v - w\land v) \equiv u\land v - w\land u\land v,\]
  from which \eqref{eqn:equiv_delay_spMH_c} follows by letting $u=c_n$, $w=\max(c_{1:n-1})$, and $v=1$.
\end{proof}

\section{Proof of Proposition~\ref{prop:SPPD_invariance}}\label{sec:proof_SPPD_invariance}
In this section, we will show that sequential-proposal MCMC algorithms using deterministic kernels (Algorithm~\ref{alg:sppd}) construct reversible Markov chains with respect to the target density $\bar\pi$.
We first present the following lemma.
\begin{lemma}\label{lem:TSnTSn}
  Suppose \eqref{eqn:RR} and \eqref{eqn:TSTS} hold.
  Define recursively $\S_\tau^n := \S_\tau^{n-1} \circ \S_\tau$ where $\S_\tau^1 = \S_\tau$.
  Then for any $n \geq 1$, we have $\T \circ \S_\tau^n \circ \T \circ \S_\tau^n = \id.$
  Moreover, $\S_\tau$ is a bijective map.
\end{lemma}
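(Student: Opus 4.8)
The plan is to reduce everything to formal statements about invertibility of the maps $\T$ and $\S_\tau$, so that the lemma follows by elementary manipulation of compositions with no analytic input.

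First I would observe that \eqref{eqn:RR} forces $\T$ to be an involution: since $\T(x,v) = (x, \R_x v)$, we get $\T\circ\T(x,v) = (x, \R_x\circ\R_x v) = (x,v)$, i.e.\ $\T\circ\T = \id$, so $\T$ is a bijection with $\T^{-1} = \T$. Next I would extract from \eqref{eqn:TSTS} that $\S_\tau$ is a bijection with $\S_\tau^{-1} = \T\circ\S_\tau\circ\T$. Indeed, composing $\T\circ\S_\tau\circ\T\circ\S_\tau = \id$ on the left with $\T$ and using $\T\circ\T = \id$ gives $\S_\tau\circ\T\circ\S_\tau = \T$; composing this on the right with $\T$ gives $\S_\tau\circ(\T\circ\S_\tau\circ\T) = \id$. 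Together with \eqref{eqn:TSTS} itself, read as $(\T\circ\S_\tau\circ\T)\circ\S_\tau = \id$, this exhibits $\T\circ\S_\tau\circ\T$ as a two-sided inverse of $\S_\tau$, which already settles the last assertion of the lemma.

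For the main identity, since $\S_\tau$ is invertible so is the $n$-fold composition $\S_\tau^n$, and $(\S_\tau^n)^{-1} = (\S_\tau^{-1})^n = (\T\circ\S_\tau\circ\T)^n$. In the $n$-fold composition $(\T\circ\S_\tau\circ\T)\circ\cdots\circ(\T\circ\S_\tau\circ\T)$ each of the $n-1$ adjacent pairs $\T\circ\T$ collapses to the identity, so the product telescopes to $\T\circ\S_\tau^n\circ\T$. Hence $(\S_\tau^n)^{-1} = \T\circ\S_\tau^n\circ\T$, which is precisely $\T\circ\S_\tau^n\circ\T\circ\S_\tau^n = \id$. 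Equivalently, one could prove $\T\circ\S_\tau^n\circ\T\circ\S_\tau^n = \id$ directly by induction on $n$, the base case $n=1$ being \eqref{eqn:TSTS} and the inductive step again using $\T\circ\T = \id$ to insert an identity and invoke the hypothesis.

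There is essentially no obstacle here beyond bookkeeping: the only point requiring care is tracking the order of composition and verifying that the telescoping cancellation of the $\T\circ\T$ pairs is carried out correctly; no properties of $\S_\tau$ or $\R_x$ beyond \eqref{eqn:RR} and \eqref{eqn:TSTS} are used.
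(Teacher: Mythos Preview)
Your proof is correct and essentially the same as the paper's: both derive $\T\circ\T=\id$ from \eqref{eqn:RR}, then use \eqref{eqn:TSTS} to obtain $\S_\tau\circ\T\circ\S_\tau=\T$ (equivalently $\S_\tau^{-1}=\T\circ\S_\tau\circ\T$) and hence bijectivity of $\S_\tau$, and finally propagate to $\S_\tau^n$. The only cosmetic difference is that the paper carries out the last step by induction on the self-inverse property of $\S_\tau^n\circ\T$, whereas you telescope $(\T\circ\S_\tau\circ\T)^n$ directly; you even note the inductive alternative yourself.
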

\begin{proof}
  From \eqref{eqn:RR}, we have $\T\circ \T = \id$.
Thus from \eqref{eqn:TSTS}, we have $\T = \T\circ \T \circ \S_\tau\circ \T \circ \S_\tau = \S_\tau \circ \T \circ \S_\tau$.
Thus, we can see that $\S_\tau^n \circ \T$ is a self-inverse for any $n\geq 1$ from induction
$$\S_\tau^n \circ \T \circ \S_\tau^n \circ \T = \S_\tau^{n-1} \circ \S_\tau\circ \T\circ \S_\tau\circ \S_\tau^{n-1} \circ \T = \S_\tau^{n-1} \circ \T \circ \S_\tau^{n-1} \circ \T.$$
It also follows that $\S_\tau\circ \T \circ \S_\tau \circ \T = \T\circ \T = \id$.
Thus, since $f\circ g = \id$ implies that function $f$ is surjective and $g$ is injective, the relation $\S_\tau \circ (\T \circ \S_\tau \circ \T) = \id$ implies that $\S_\tau$ is surjective and $(\T \circ \S_\tau \circ \T) \circ \S_\tau = \id$ implies that $\S_\tau$ is injective.
\end{proof}

\renewcommand{\theprop}{\ref{prop:SPPD_invariance}}
\begin{prop}
  The extended target distribution with density $\Pi(x,v)$ is a stationary distribution for the Markov chain $\left( X^{(i)}, V^{(i)} \right)_{i\in1:M}$ constructed by the sequential-proposal MCMC algorithm using a deterministic kernel (Algorithm~\ref{alg:sppd}).
  Furthermore, the Markov chain $\left(X^{(i)}\right)_{i\in1:M}$ constructed by Algorithm~\ref{alg:sppd}, marginally for the $x$-component, is reversible with respect to the target distribution $\bar\pi(x)$.
\end{prop}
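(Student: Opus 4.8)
The plan is to follow the proof of Proposition~\ref{prop:detailedbalanceMH}, replacing the relabeling of proposal indices there with a change of variables that reverses the deterministic trajectory, and then to read off both assertions from the resulting symmetry. As in that proof, I would first fix $N$, $L$ and $\tau$, since the general case follows by mixing over $\nu$ and $\mu$; and I would dispose of the final velocity refreshment right away, because it leaves $X^{(i+1)}$ unchanged and, conditionally on $X^{(i+1)}$, replaces $V^{(i+1)}$ by a draw from $\psi(\,\cdot\giventh X^{(i+1)})$, hence preserves $\Pi(x,v)=\bar\pi(x)\psi(v\giventh x)$. So it suffices to treat the kernel $\tilde P$ obtained before refreshment.

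Write $z_0:=(X^{(i)},V^{(i)})$, $z_k:=\S_\tau^k(z_0)$, and $p_k(z_0):=\Pi(z_k)\,|\det D\S_\tau^k(z_0)|$ for $k\in 0\col N$, so that $p_0(z_0)=\Pi(z_0)$ and, since $Z$ cancels in the ratio tested in Algorithm~\ref{alg:sppd}, the $n$-th proposal is acceptable exactly when $\Lambda<p_n(z_0)/p_0(z_0)$. Exactly as in the proof of Proposition~\ref{prop:detailedbalanceMH}, integrating out $\Lambda\sim\text{unif}(0,1)$ shows that, for $z_0$ distributed according to $\Pi$ and measurable $\mathcal A,\mathcal B\subseteq\mathbb X\times\mathbb V$,
\[
Q_{\mathcal A,\mathcal B,n}:=\mathcal P\big[z_0\in\mathcal A,\ \S_\tau^n z_0\in\mathcal B,\ \text{the $n$-th proposal is taken}\big]=\int \1_{\mathcal A}(z_0)\,\1_{\mathcal B}(\S_\tau^n z_0)\,\big[m_{L-1}(z_0)-m_L(z_0)\big]\,dz_0,
\]
where $m_l(z_0):=\min\big\{p_0(z_0),\,p_n(z_0),\,r_l\big(p_{1:n-1}(z_0)\big)\big\}$ and $r_l$ is the rank function of Appendix~\ref{sec:proof_spMH}.

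The core of the argument is the identity $\S_\tau^n\circ\T\circ\S_\tau^n=\id$ of Lemma~\ref{lem:TSnTSn}, which I would exploit through the map $\Phi:=\T\circ\S_\tau^n$. From $\S_\tau^m\circ\T=\T\circ\S_\tau^{-m}$ one obtains $\S_\tau^k(\Phi u)=\T\big(\S_\tau^{n-k}u\big)$ for $0\le k\le n$, in particular $\S_\tau^n(\Phi u)=\T u$ and $\Phi\circ\Phi=\id$; differentiating the map identity $\S_\tau^k\circ\Phi=\T\circ\S_\tau^{n-k}$ and combining with the pointwise form of \eqref{eqn:psiR}, namely $\Pi(\T w)\,|\det D\T(w)|=\Pi(w)$, yields the key relation $p_k(\Phi u)\,|\det D\Phi(u)|=p_{n-k}(u)$ for every $k\in 0\col n$. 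Substituting $z_0=\Phi(u)$ in $Q_{\mathcal A,\mathcal B,n}$, the Jacobian $|\det D\Phi(u)|$ coming from $dz_0$ cancels the common factor $1/|\det D\Phi(u)|$ produced by this relation — here one uses that $r_l$ is equivariant under multiplication by a positive scalar and that the $\min$ in $m_l$ is symmetric — while $\1_{\mathcal A}(\Phi u)=\1_{\T\mathcal A}(\S_\tau^n u)$ and $\1_{\mathcal B}(\S_\tau^n\Phi u)=\1_{\T\mathcal B}(u)$; hence $Q_{\mathcal A,\mathcal B,n}=Q_{\T\mathcal B,\T\mathcal A,n}$. I expect this Jacobian bookkeeping to be the main obstacle: one must carry the determinants of the composed maps $\S_\tau^k$, $\S_\tau^n$, $\T$ through $\S_\tau^n\circ\T\circ\S_\tau^n=\id$ and combine them with the density-reweighting identity for $\T$, verifying that every factor cancels so that the acceptance ratios of the reversed trajectory match those of the forward one.

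Both conclusions then follow quickly. For $\Pi$-stationarity, fix $\mathcal B$ and take $\mathcal A=\mathbb X\times\mathbb V$; since the events ``the $n$-th proposal is taken'' for $n\in 1\col N$ and ``stay'' partition the probability space, on ``stay'' the next state is $\T z_0$, and $\1_{\mathcal B}(\T z_0)=\1_{\T\mathcal B}(z_0)$ because $\T$ is an involution,
\[
\int\Pi(dz)\,\tilde P(z,\mathcal B)=\sum_{n=1}^N Q_{\T\mathcal B,\,\mathbb X\times\mathbb V,\,n}+\mathcal P\big[z_0\in\T\mathcal B,\ \text{stay}\big]=\mathcal P\big[z_0\in\T\mathcal B\big]=\Pi(\T\mathcal B)=\Pi(\mathcal B),
\]
the last equality being again \eqref{eqn:psiR}. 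For the $x$-marginal reversibility, take $(X^{(i)},V^{(i)})\sim\Pi$ and set $\mathcal A=A\times\mathbb V$, $\mathcal B=B\times\mathbb V$, which are fixed by $\T$; then $Q_{\mathcal A,\mathcal B,n}=Q_{\mathcal B,\mathcal A,n}$, while the ``stay'' event contributes $\mathcal P[X^{(i)}\in A\cap B,\ \text{stay}]$, which is symmetric in $A$ and $B$. Summing over $n$ and the ``stay'' event gives $\mathcal P[X^{(i)}\in A,X^{(i+1)}\in B]=\mathcal P[X^{(i)}\in B,X^{(i+1)}\in A]$; since the $x$-marginal of $\Pi$ is $\bar\pi$, this is detailed balance with respect to $\bar\pi$.
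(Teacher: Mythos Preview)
Your argument is correct and follows essentially the same route as the paper's proof: both rely on Lemma~\ref{lem:TSnTSn}, the density identity $\Pi(\T w)\,|\det D\T(w)|=\Pi(w)$ from \eqref{eqn:psiR}, and a change of variables through the self-inverse map built from $\S_\tau^n$ and $\T$, with the same rank-function bookkeeping as in Proposition~\ref{prop:detailedbalanceMH}. The only organizational difference is that the paper first factors each iteration as $\T$ followed by a second step that uses the self-inverse map $\S_\tau^n\circ\T$ from the reflected state and shows each step is $\Pi$-reversible, whereas you work directly with the composite kernel and establish the twisted symmetry $Q_{\mathcal A,\mathcal B,n}=Q_{\T\mathcal B,\T\mathcal A,n}$ via $\Phi=\T\circ\S_\tau^n$; the underlying change of variables and Jacobian cancellations are the same.
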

\addtocounter{prop}{-1}
\renewcommand\theprop\originaltheprop
\begin{proof}
  We will prove the claim for the case where $N$, $L$, and $\tau$ are fixed, since the general case can easily follow by considering a mixture over these parameters.
  In Algorithm~\ref{alg:sppd}, the $n$-th proposal $(Y_n,W_n)$ is obtained as $\S_\tau^n (Y_0,W_0)$, and if there are less than $L$ acceptable proposals in the first $N$ proposals, the next state of the Markov chain $(X^{(i+1)}, V^{(i+1)})$ is set to $(X^{(i)}, \R_{X^{(i)}} V^{(i)})$.
  Each iteration of Algorithm~\ref{alg:sppd} can thus be understood a composition of two operations, where the first operation is simply reflecting the velocity component from $(X^{(i)}, V^{(i)})$ to $(X^{(i)}, \R_{X^{(i)}} V^{(i)})$, and the second operation proposes a sequence of proposals $(Y_n,W_n) = \S_\tau^n \circ \T (X^{(i)}, \R_{X^{(i)}} V^{(i)})$ until $L$ acceptable proposals are found or until $N$ proposals have been made.
  The reason that we view the algorithm this way is to use the fact that both $\T$ and $\S_\tau^n \circ \T$ are self-inverse maps.
  If both the first and the second operations preserve $\Pi$ as an invariant density, the Markov chain constructed by Algorithm~\ref{alg:sppd} preserves $\Pi$ as an invariant density.
  In fact, we will show that both the first and the second operations satisfy detailed balance with respect to $\Pi$.
  However, we note that this does not imply that the constructed Markov chain satisfy detailed balance with respect to $\Pi$, because carrying out the first and then the second operation is not the same as carrying out the second and then the first.
  
  It is rather straightforward to see that velocity reflection operation $\T$ establishes detailed balance with respect to $\Pi$.
  Supposing that $(X,V)\sim \Pi$, we have for $A$, $B$ measurable in $\mathbb X\times \mathbb V$,
  \begin{equation*}
      \P[ (X,V)\sr\in A,~ (X,\R_X V)\sr\in B]
      = \int \1_A(x,v) \1_B(x,R_x v) \Pi(x,v) dx \, dv
  \end{equation*}
  Upon denoting $v' := R_x v$, we can express the right hand side as
  \begin{equation*}\begin{split}
      &\int \1_A(x, \R_x v') \1_B(x,v') \Pi(x, \R_x v') \left| \frac{\partial \R_x v'}{\partial v'} \right| dx \, dv' 
      = \int \1_A(x, \R_x v') \1_B(x, v') \Pi(x,v') dx \, dv' \\
      &= \P[(X,V)\in B, (X,\R_X V) \in A],
  \end{split}\end{equation*}
  where we have used the condition \eqref{eqn:psiR}.
  This shows that $\T$ establishes detailed balance with respect to $\Pi$.
   
  Now we will show that the second operation also establishes detailed balance with respect to $\Pi$.
  If the current state in the Markov chain $(X^{(i)}, V^{(i)})$ is denoted by $(Y_0,W_0)$, the second operation starts at $(Y_0, \R_{Y_0} W_0)$ since the velocity was reflected by the first operation.
  We will show that for arbitrary $n \sr\in 1\col N$ and for measurable subsets $A$, $B$ of $\mathbb X \times \mathbb V$, 
  \begin{multline*}
    \P[(Y_0, \R_{Y_0} W_0) \sr\in A,~ (Y_n, W_n) \sr\in B,~ (Y_n,W_n) \text{ is the $L$-th acceptable proposal}]\\
    =\P[(Y_0, \R_{Y_0} W_0) \sr\in B,~ (Y_n, W_n) \sr\in A,~ (Y_n,W_n) \text{ is the $L$-th acceptable proposal}],
  \end{multline*}
  provided that $(Y_0, \R_{Y_0} W_0)$ is distributed according to $\Pi$.
  Then by combining the cases for $n \sr \in 1\col N$, we can establish detailed balance for the second operation.
  For notational convenience, for $(y_0, w_0) \sr \in \mathbb X \times \mathbb V$, we will write $(y_k, w_k) = \S_\tau^k (y_0, w_0)$ and $\bar w_k = \R_{y_k} w_k$ for $k \sr \in 0 \col n$.
  We will write
  \[
  p_k(y_0, \bar w_0) := \Pi\{\S_\tau^k\circ \T(y_0, \bar w_0)\} \left| \frac{\partial \S_\tau^k \circ \T(y_0, \bar w_0)}{\partial (y_0, \bar w_0)}\right| = \Pi(y_k,w_k) \left| \frac{\partial(y_k,w_k)}{\partial (y_0,\bar w_0)}\right|
  \]
  for $k \sr \in 0\col n$.
  Note that this definition leads to
  \[
  p_0(y_0, \bar w_0) = \Pi(y_0, w_0) \left| \frac{\partial (y_0,w_0)}{\partial(y_0,\bar w_0)} \right| = \Pi(y_0, \bar w_0)
  \]
  due to \eqref{eqn:psiR}.
  Also, since $\S_\tau^{n-k}\circ \T (y_k, \bar w_k) = (y_n, w_n)$ and $\S_\tau^{n-k}\circ \T$ is a self-inverse map, we have $S_\tau^{n-k}\circ T(y_n, w_n) = (y_k, \bar w_k)$.
  This leads to 
  \begin{equation} \begin{split}
      p_{n-k}(y_n,w_n) &= \Pi \{\S_\tau^{n-k}\circ \T(y_n, w_n)\} \left| \frac{\partial \S_\tau^{n-k} \circ \T(y_n,w_n)}{\partial (y_n, w_n)} \right|\\
      &= \Pi(y_k, \bar w_k) \left| \frac{\partial (y_k, \bar w_k)}{\partial (y_n, w_n)}\right|\\
      &= \Pi(y_k, w_k) \left| \frac{\partial (y_k,w_k)}{\partial(y_n,w_n)} \right| \qquad \text{due to \eqref{eqn:psiR}}\\
      &= p_k(y_0, \bar w_0) \left| \frac{\partial (y_0, \bar w_0)}{\partial(y_n, w_n)} \right|.
      \label{eqn:sppd_pnk}
  \end{split}\end{equation}
  The following steps are similar to corresponding steps in the proof of Proposition~\ref{prop:detailedbalanceMH}.
  We have
  \begin{equation}\begin{split}
      &\P[(Y_0, \R_{Y_0} W_0) \in A,~ (Y_n, W_n) \in B,~ (Y_n,W_n) \text{ is the $L$-th acceptable proposal}]\\
      &= \int \1_A(y_0,\bar w_0) \1_B(y_n,w_n) \Pi(y_0, \bar w_0) \1\left[\Lambda \geq r_L\left\{\frac{p_{1:n-1}}{p_0}(y_0,\bar w_0)\right\}\right] \1\left[\Lambda < r_{L-1}\left\{\frac{p_{1:n-1}}{p_0}(y_0,\bar w_0)\right\}\right] \\
      &\hspace{71ex}\cdot \1\left[ \Lambda < \frac{p_n}{p_0} \land 1 \right] d\Lambda \, dy_0 \, d\bar w_0\\
      &= \int \1_A(y_0, \bar w_0) \1_B(y_n,w_n) p_0 \left\{ \frac{p_n}{p_0} \land 1 \land r_{L-1}\left(\frac{p_{1:n-1}}{p_0}\right) - \frac{p_n}{p_0} \land 1 \land r_L\left( \frac{p_{1:n-1}}{p_0}\right) \right\}(y_0,\bar w_0) dy_0\, d\bar w_0,
      \label{eqn:sppd_PAB}
  \end{split}\end{equation}
  where all functions $p_k$, $k\sr\in 0\col n$, in the above display take the argument $(y_0, \bar w_0)$.
  Using \eqref{eqn:sppd_pnk}, the above equation is equal to
  \[ \begin{split}
    &\int \1_A(y_0,\bar w_0) \1_B(y_n, w_n) \left\{p_0 \land p_n \land r_{L-1}(p_{n-1:1}) - p_0 \land p_n \land r_L(p_{n-1:1}) \right\} (y_n,w_n) \left| \frac{\partial (y_n,w_n)}{\partial(y_0,\bar w_0)}\right| dy_0\, d\bar w_0\\
    &= \int \1_A(y_0, \bar w_0) \1_B(y_n, w_n) \left\{ p_0 \land p_n \land r_{L-1}(p_{n-1:1}) - p_0 \land p_n \land r_L(p_{n-1:1})\right\} (y_n, w_n) dy_n \, dw_n
  \end{split}\]
  We change the dummy variables by writing $(y_0, \bar w_0) \gets (y_n, w_n)$.
  Since $\S_\tau^n \circ \T(y_0,\bar w_0) = (y_n, w_n)$, we can also write $(y_n, w_n) \gets (y_0, \bar w_0)$.
  The above display can be re-written as 
  \[
  \int \1_A(y_n,w_n) \1_B(y_0, \bar w_0) \left\{ p_0 \land p_n \land r_{L-1}\left(p_{n-1:1}\right) - p_0 \land p_n \land r_L\left( p_{n-1:1} \right) \right\}(y_0,\bar w_0) dy_0\, d\bar w_0,
  \]
  which is equal to \eqref{eqn:sppd_PAB} where the sets $A$ and $B$ are interchanged.
  Thus we have proved 
  \begin{multline*}
    \P[(Y_0, \R_{Y_0} W_0) \in A,~ (Y_n, W_n) \in B,~ (Y_n,W_n) \text{ is the $L$-th acceptable proposal}]\\
    =\P[(Y_0, \R_{Y_0} W_0) \in B,~ (Y_n, W_n) \in A,~ (Y_n,W_n) \text{ is the $L$-th acceptable proposal}].
  \end{multline*}
  By adding the cases for $n\sr\in 1\col N$, we can conclude the proof of detailed balance for the second operation with respect to $\Pi$.
  Since both the first and the second operations preserves $\Pi$ as an invariant density, Algorithm~\ref{alg:sppd} preserves $\Pi$ as an invariant density.
  Finally, refreshing the velocity $V^{(i+1)}$ from $\psi(\cdot \giventh X^{(i+1)})$ at the end of the iteration with an arbitrary probability $p^\text{ref}(X^{(i+1)})$ clearly preserves the invariant density $\Pi(x,v) = \bar\pi(x) \psi(v \giventh x)$.

  In order to prove the claim that the marginally for the $x$-component, the Markov chain $\left(X^{(i)}\right)_{i\in1:M}$ constructed by Algorithm~\ref{alg:sppd} is reversible with respect to the target distribution $\bar\pi(x)$, we denote the position-velocity pair taken as the next state of the Markov chain at the end of the second operation by $(Y',W')$.
  We showed above that when $(Y_0, W_0)$ is drawn from $\Pi$, $(Y_0, \R_{Y_0} W_0)$ is also distributed according to $\Pi$.
  Due to the fact that the second operation satisfies detailed balance with respect to $\Pi$, we see that for measurable subsets $A$, $B$ of $\mathbb X$,
  \[\begin{split}
  \P[ Y_0 \in A,~ Y'\in B]  &= \P[ (Y_0, \R_{Y_0} W_0) \in A \sr\times \mathbb V, ~ (Y', W') \in B \sr\times \mathbb V ]\\
  &=\P [ (Y_0, \R_{Y_0} W_0) \in B \sr\times \mathbb V, ~ (Y', W') \in A \sr\times \mathbb V] \\
  &=\P [ Y_0 \in B, Y' \in A ].
  \end{split}\]
  This shows that the Markov chain $\left(X^{(i)})\right)_{i\in1:M}$ constructed by Algorithm~\ref{alg:sppd} is reversible with respect to $\bar\pi$, which is the marginal distribution of $\Pi$ for the $x$-component.
\end{proof}

\section{Proofs of detailed balance for sequential-proposal No-U-Turn samplers (spNUTS1 and spNUTS2)}\label{sec:proof_spNUTS}
\renewcommand{\theprop}{\ref{prop:detailedbalance_spNUTS1}}
We prove that both spNUTS1 and spNUTS2 algorithms (Algorithms~\ref{alg:spNUTS1} and \ref{alg:spNUTS2}) construct reversible Markov chains with respect to the target distribution $\bar\pi$.
\begin{prop} 
  The Markov chain $\left(X^{(i)}\right)_{i\in1:M}$ constructed by the sequential-proposal No-U-Turn sampler of type 1 (spNUTS1, Algorithm~\ref{alg:spNUTS1}) is reversible with respect to the target distribution $\bar\pi$.
\end{prop}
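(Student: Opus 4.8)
The plan is to adapt the proofs of Proposition~\ref{prop:SPPD_invariance} (Appendix~\ref{sec:proof_SPPD_invariance}) and Proposition~\ref{prop:spMHgen_detailed_balance} (Appendix~\ref{sec:spMH_path_depen_kernel}): spNUTS1 is essentially the deterministic-kernel algorithm of Algorithm~\ref{alg:sppd} with $L\sr=1$, except that the proposal map carries a built-in stopping rule (hence is path-dependent and needs an accompanying symmetry condition, as in Appendix~\ref{sec:spMH_path_depen_kernel}) and that the velocity direction is refreshed between proposals. First I would reduce to the case of fixed $N$ and a fixed realization of the thresholds $c_{1:N}$ (together with the fixed $\epsilon,l,C,j_{\max},(b_j)$); the general statement follows by averaging. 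I would then work on the extended space with target $\Pi(x,v)=\bar\pi(x)\psi_C(v)$: since the velocity is drawn afresh from $\psi_C$ at the start of every iteration, it suffices to show that, when $(Y_0,W_0)\sim\Pi$, the joint law of $(Y_0,X^{(i+1)})$ is invariant under swapping its two coordinates, which I would obtain via a detailed-balance-type computation on the full pair $(x,v)$ in the spirit of Appendix~\ref{sec:proof_SPPD_invariance}.

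The essential new ingredient is a reversibility property of the map \texttt{spNUTS1Kernel}. Write $\S$ for one block of $l$ leapfrog jumps of size $\epsilon$ with covariance $C$ and $\T(x,v)\sr=(x,-v)$; by Lemma~\ref{lem:TSnTSn} with $\R_x\sr={-}\id$, $\S$ has unit Jacobian and $\S^{k}\circ\T\circ\S^{k}=\T$ for every $k\sr\geq1$, and one also has $\cosangle(-a,-a'\giventh C)=\cosangle(a,a'\giventh C)$. The key lemma I would prove is: if, started from $(x_0,v_0)$ with threshold $c$, the kernel trajectory stops at checkpoint $b$ and its symmetry condition~\eqref{eqn:symmetry_cond_spNUTS1} holds, and $(x_b,v_b)\sr=\S^{b}(x_0,v_0)$, then (i) $\S^{b}(x_b,-v_b)=(x_0,-v_0)$, and (ii) the trajectory started from $(x_b,-v_b)$ with the same $c$ stops at the same checkpoint $b$ and also satisfies its symmetry condition. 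Part (i) is immediate from $\S^{k}\circ\T\circ\S^{k}=\T$. For (ii), one identifies the reversed trajectory's scheduled-checkpoint states as $(x_{b-b_{j'}},-v_{b-b_{j'}})$ and its symmetry-checkpoint states as $(x_{b_{j'}},-v_{b_{j'}})$, and checks, using the $\cosangle$ sign identity, that the ``did not stop before checkpoint $b_{j'}$'' inequalities of the reversed trajectory are exactly the forward symmetry inequalities, that the corresponding inequalities for the forward trajectory are exactly the reversed symmetry inequalities, and that the two ``stop at $b$'' conditions coincide (with the case $j\sr=j_{\max}$ handled separately). In particular, since the chain stays whenever any trajectory's symmetry condition fails, the event that proposal $n$ is accepted entails that all of the first $n$ trajectories satisfy their symmetry conditions, so the key lemma applies to each of them.

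With the key lemma in hand, the main computation follows the pattern of~\eqref{eqn:sppd_PAB}. Fixing $n\sr\in1\col N$ and writing the $k$-th between-proposal refresh as $W_k=\|W'_k\|_C\,U_k/\|U_k\|_C$ with iid $U_k\sim\psi_C$, every $Y_k,W'_k,W_k$ becomes a deterministic function of $(Y_0,W_0,U_{1:n-1})$, and I would integrate against $\bar\pi(y_0)\,dy_0\,\psi_C(w_0)\,dw_0\,\prod_{k<n}\psi_C(u_k)\,du_k\,d\Lambda$ the indicator of the event that $Y_0\sr\in A$, that $Y_n\sr\in B$, that trajectories $1,\dots,n$ are all symmetric, that $\Lambda\geq\Pi(Y_k,W'_k)/\Pi(Y_0,W_0)$ for $k<n$, and that $\Lambda<\Pi(Y_n,W'_n)/\Pi(Y_0,W_0)$. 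All acceptance/rejection comparisons are against the fixed $\Pi(Y_0,W_0)$ (leapfrog being volume-preserving and the refreshes preserving $\psi_C$), so integrating out $\Lambda\sr\in(0,1)$ yields, as in~\eqref{eqn:proof_spMH_int} with $L\sr=1$, a difference of minima of the ratios $\Pi(Y_k,W'_k)/\Pi(Y_0,W_0)$. I would then apply the change of variables suggested by reversing the chain of trajectories --- $y_0\mapsto y_n$, initial velocity $w_0\mapsto-w'_n$, and the $m$-th refresh direction set to that of $-w'_{n-m}$, carrying the dummy refresh magnitudes through unchanged. Part (i) of the key lemma and the unit Jacobian of $\S$ turn the deterministic trajectory maps into their reverses and cancel the Jacobian factors; part (ii) converts the symmetry indicators; the refresh preserving $\|W'_k\|_C$ together with the $v\mapsto-v$ invariance of $\psi_C$ and of the refresh kernel makes the densities $\psi_C(w_0)$ and $\prod\psi_C(u_k)$ transform correctly; and each $\Pi(Y_k,W'_k)$ relabels to $\Pi$ at the corresponding reversed state, as in~\eqref{eqn:spMHgen_sym_p}. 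The integrand is thus symmetric under $A\leftrightarrow B$, giving $\P[Y_0\sr\in A,X^{(i+1)}\sr\in B,\text{prop. }n\text{ accepted}]=\P[Y_0\sr\in B,X^{(i+1)}\sr\in A,\text{prop. }n\text{ accepted}]$. Summing over $n\sr\in1\col N$, adding the event on which the chain stays (where $X^{(i+1)}\sr=Y_0$, a trivially symmetric contribution), and averaging over $N$ and $c_{1:N}$ completes the proof.

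The main obstacle is twofold. First, the checkpoint-by-checkpoint verification of part (ii) of the key lemma --- correctly matching each $\cosangle$ inequality of the reversed trajectory to a stopping or a symmetry inequality of the forward one, and handling the boundary case $j\sr=j_{\max}$ --- is delicate, though conceptually routine. Second, and more technically, I need to make the change of variables involving the interleaved norm-preserving refreshes precise: that the map from $(Y_0,W_0,U_{1:n-1})$ to the reversed-chain variables is a bijection of unit Jacobian with respect to $\bar\pi(y_0)\,dy_0\,\psi_C(w_0)\,dw_0\,\prod\psi_C(u_k)\,du_k$, which relies on the refresh kernel being symmetric in its two arguments, invariant under simultaneous velocity reflection, and supported on a level set of $\|\cdot\|_C$.
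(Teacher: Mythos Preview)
Your proposal is correct and follows essentially the same route as the paper's proof. Both arguments fix the thresholds $c$, work on the extended space with $\Pi(x,v)=\bar\pi(x)\psi_C(v)$, write the probability that the $n$-th proposal is accepted as an integral over $(y_0,w_0,u_{1:n-1},\Lambda)$, integrate out $\Lambda$ to obtain a difference of minima of the $\Pi(y_k,w'_k)$, and then perform the change of variables that reverses the chain of trajectories and refreshes. Two points of comparison are worth noting. First, your ``key lemma'' on the reversibility of \texttt{spNUTS1Kernel} is treated more tersely in the paper: there the symmetry condition is absorbed into the map $\S$ (which returns its input when symmetry fails), and the paper simply asserts $\S(y_k,-w'_k)=(y_{k-1},-w_{k-1})$ ``due to the symmetric nature of the stopping condition''; your checkpoint-by-checkpoint matching is a more explicit version of the same fact. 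Second, the Jacobian obstacle you flag is exactly what the paper isolates as Lemma~\ref{lem:spNUTS1_volelement}: for $u'=w'\,\|u\|_C/\|w'\|_C$ and $w=u\,\|w'\|_C/\|u\|_C$ one has $\big|\partial(u',w)/\partial(w',u)\big|\equiv1$, which, chained with the unit Jacobian of leapfrog, yields $dy_0\,dw_0\prod_k du_k=dy_n\,dw'_n\prod_k du'_k$ and completes the change of variables.
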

\addtocounter{prop}{-1}
\renewcommand\theprop\originaltheprop
\begin{proof}
  We assume that the cosine value $c$ at which trajectory extensions in Algorithm~\ref{alg:spNUTS1} stop is fixed, as the general case readily follows by considering a mixture over $c$.
  The state of the Markov chain constructed by the algorithm in the current iteration is denoted by $Y_0$, and assumed to be distributed according to $\bar\pi$.
  The velocity drawn from $\psi_C$ at the start of the iteration is denoted by $W_0$.
  For $k\sr\geq 1$, the leapfrog trajectory starting from $(Y_{k-1},W_{k-1})$ stops at $(Y_k, W'_k)$, and the function that maps the initial position-velocity pair to the final pair will be denoted by $\S$, such that $(Y_k,W'_k) = \S(Y_{k-1},W_{k-1})$.
  We will show that for $n \sr \in 1\col N$ and for measurable subsets $A$ and $B$ of $\mathbb X\sr\times \mathbb V$,
  \begin{multline}
  \P[(Y_0,W_0) \sr\in A, ~(Y_n, -W'_n) \sr\in B, ~ Y_n\text{ is taken as the next state of the Markov chain}]\\
  = \P[(Y_0,W_0) \sr\in B, ~(Y_n, -W'_n) \sr\in A, ~ Y_n\text{ is taken as the next state of the Markov chain}].
  \label{eqn:spNUTS1proof_detailedbalance}
  \end{multline}
  Then, by considering the cases $A=A_0\sr\times\mathbb V$ and $B=B_0\sr\times\mathbb V$ for some $A_0, B_0 \subset \mathbb X$ and summing over $n\sr \in 1\col N$, we show that the Markov chain constructed by Algorithm~\ref{alg:spNUTS1} is reversible with respect to $\bar\pi$.

  When $(Y_k,W'_k)$ is rejected, the velocity is refreshed by drawing $U_k \sim \psi_C$ and setting
  \[
  W_k = U_k \frac{\Vert W'_k \Vert_C}{\Vert U_k \Vert_C}.
  \]
  In following equations, $y_{1:n}$, $w_{1:n-1}$, and $w'_{1:n}$ will denote functions of $y_0$, $w_0$, and $u_{1:n-1}$ defined recursively by
  \begin{gather}
    \S(y_{k-1},w_{k-1}) = (y_k, w'_k), \qquad\qquad w_k = u_k \frac{ \Vert w'_k \Vert_C } { \Vert u_k \Vert_C }.
    \label{eqn:spNUTS1_dummy}
  \end{gather}
  We also denote $\bar w_k := -w_k$ and $\bar w'_k := -w'_k$.
  The left hand side of \eqref{eqn:spNUTS1proof_detailedbalance} is then given by
  \begin{equation}\begin{split}
  &\int \1_A(y_0, w_0) \1_B(y_n,\bar w'_n) \frac{1}{Z} \pi(y_0) \psi_C(w_0) \1\left[\Lambda < \frac{\pi(y_n)\psi_C(w'_n)}{\pi(y_0)\psi_C(w_0)} \right] \1\left[\Lambda \geq \max_{k\in 1:n-1} \frac{\pi(y_k)\psi_C(w'_k)}{\pi(y_0)\psi_C(w_0)} \right] \\
  &\hspace{53ex}\cdot \1\left[0<\Lambda<1\right] \prod_{k=1}^{n-1} \psi_C(u_k) d\Lambda\, dy_0\, dw_0 \prod_{k=1}^{n-1} du_k\\
  &= \frac{1}{Z} \int \1_A(y_0,w_0) \1_B(y_n, \bar w'_n) \big[ \pi(y_n) \psi_C(w'_n) \land \pi(y_0) \psi_C(w_0) \\
  &\hspace{16ex} - \pi(y_n) \psi_C(w'_n) \land \pi(y_0) \psi_C(w_0) \land \max_{k\in 1:n-1} \pi(y_k) \psi_C(w'_k) \big] \prod_{k=1}^{n-1} \psi_C(u_k) dy_0 \, dw_0 \prod_{k=1}^{n-1} du_k.
  \end{split}
  \label{eqn:spNUTS1proof_prob1}\end{equation}
  To establish a symmetric relationship between $(Y_0, W_0)$ and $(Y_n, \bar W'_n)$, we define 
  \[
  u'_k := w'_k \frac{\Vert u_k \Vert_C}{\Vert w'_k \Vert_C}
  \]
  and write $\bar u'_k := -u'_k$ for $k \sr\in 1\col n{-}1$.
  Due to the symmetric nature of the stopping condition \eqref{eqn:stopping_cond_spNUTS1}, we have
  \[
  \S(y_k, \bar w'_k) = (y_{k-1}, \bar w_{k-1}).
  \]
  It is also readily observed from the definition of $u'_k$ that
  \[
  \bar w'_k = \bar u'_k \frac{\Vert \bar w_k \Vert_C}{\Vert \bar u'_k \Vert_C}.
  \]
  The two relations above form a counterpart to \eqref{eqn:spNUTS1_dummy} in a symmetric relationship between $(y_k, w_k)$ and $(y_{n-k}, \bar w'_{n-k})$ for $k \sr \in 0\col n$.
  Furthermore, since $\psi_C(v)$ is a function of $\Vert v\Vert_C$, we have $\psi_C(u_k) = \psi_C(\bar u'_k)$ and $\psi_C(w'_k) = \psi_C(\bar w_k)$.
  Finally, we use the equation
  \[
  \left| \frac{\partial (w_k, u'_k)}{\partial (w'_k, u_k)} \right| = 1,
  \]
  which is stated as Lemma~\ref{lem:spNUTS1_volelement} and proved below.
  This, together with the fact that leapfrog jumps preserve the volume element, leads to
  \begin{gather*}
    dy_0 \, dw_0 \prod_{k=1}^{n-1} du_k = dy_1 \, dw'_1 \,du_1 \prod_{k=2}^{n-1} du_k = dy_1 \, dw_1 \, du'_1 \, \prod_{k=2}^{n-1} du_k = dy_2 \, dw'_2 \, du'_1 \, du_2\, \prod_{k=3}^{n-1} du_k \\
    = dy_2 \, dw_2 \, du'_1 \, du'_2 \prod_{k=3}^{n-1} du_k = \cdots = dy_n \, dw_n \prod_{k=1}^{n-1} du'_k.
  \end{gather*}
  Thus, \eqref{eqn:spNUTS1proof_prob1} is equal to
  \[\begin{split}
  &\frac{1}{Z} \int \1_A(y_0,w_0) \1_B(y_n, \bar w'_n) \big[ \pi(y_n) \psi_C(\bar w'_n) \land \pi(y_0) \psi_C(\bar w_0) \\
  &\hspace{16ex} - \pi(y_n) \psi_C(\bar w'_n) \land \pi(y_0) \psi_C(\bar w_0) \land \max_{k\in 1:n-1} \pi(y_k) \psi_C(\bar w_k) \big] \prod_{k=1}^{n-1} \psi_C(\bar u'_k) dy_n \, d\bar w_n \prod_{k=1}^{n-1} d\bar u'_k,
  \end{split}\]
  which, under the change of notation of dummy variables $(y_0, w_0) \gets (y_n, \bar w'_n)$, becomes
  \[\begin{split}
  &\frac{1}{Z} \int \1_A(y_n,\bar w'_n) \1_B(y_0, w_0) \big[ \pi(y_0) \psi_C(w_0) \land \pi(y_n) \psi_C(w'_n) \\
  &\hspace{10ex} - \pi(y_0) \psi_C(w_0) \land \pi(y_n) \psi_C(w'_n) \land \max_{k\in 1:n-1} \pi(y_{n-k}) \psi_C(w'_{n-k}) \big] \prod_{k=1}^{n-1} \psi_C(u_{n-k}) dy_0 \, d w_0 \prod_{k=1}^{n-1} du_k.
  \end{split}\]
  Since the above expression equals the right hand side of \eqref{eqn:spNUTS1proof_detailedbalance}, the claim of detailed balance is proved.
\end{proof}

\begin{lemma}\label{lem:spNUTS1_volelement}
  Let $C$ be a positive definite symmetric matrix in $\mathbb R^{d\times d}$.
  Given $(w', u)$ in $\mathbb R^{2d}$, define $u' := w' \frac{\Vert u\Vert_C}{\Vert w' \Vert_C}$ and $w := u \frac{\Vert w' \Vert_C}{\Vert u\Vert_C}$.
  Then we have
  \[
  \left| \frac{\partial(u', w)}{\partial(w', u)} \right| \equiv 1.
  \]
\end{lemma}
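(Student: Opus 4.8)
The plan is to reduce to the Euclidean case $C=I_d$ and then identify an invariant splitting of $\mathbb R^{2d}$ on which the Jacobian is visibly block-diagonal. First I would handle the reduction: write $\Phi_C$ for the map $(w',u)\mapsto(u',w)$ in the statement, and set $\Psi(w',u):=(C^{-1/2}w',C^{-1/2}u)$. Since $C^{-1/2}$ is symmetric positive definite one has $\|C^{-1/2}x\|_2=\|x\|_C$, and a one-line substitution gives $\Phi_{I_d}\circ\Psi=\Psi\circ\Phi_C$, i.e. $\Phi_C=\Psi^{-1}\circ\Phi_{I_d}\circ\Psi$. Because $\Psi$ is linear with constant nonzero Jacobian, $\det D\Phi_C=\det D\Phi_{I_d}$, so it suffices to treat the case where $\|\cdot\|_C$ is the ordinary Euclidean norm.

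With $C=I_d$, on the full-measure set $\{w'\neq 0,\ u\neq 0\}$ (the excluded null set being harmless for the change-of-variables identity in which the lemma is used) I would set $a:=\|w'\|$, $b:=\|u\|$, $e:=w'/a$, $f:=u/b$, and compute, using $\partial\|x\|/\partial x = x^{T}/\|x\|$, the $2d\times 2d$ Jacobian in $d\times d$ blocks:
\[
D\Phi \;=\; \begin{pmatrix} \dfrac{b}{a}\bigl(I_d-ee^{T}\bigr) & ef^{T} \\[1.5ex] fe^{T} & \dfrac{a}{b}\bigl(I_d-ff^{T}\bigr) \end{pmatrix}.
\]
The crux is to notice that $D\Phi$ preserves the two complementary subspaces $P_1:=\mathrm{span}\{(e,0),(0,f)\}$ (dimension $2$) and $P_2:=e^{\perp}\times f^{\perp}$ (dimension $2d-2$): a direct check shows $D\Phi$ sends $(e,0)\mapsto(0,f)$ and $(0,f)\mapsto(e,0)$, while on $P_2$ it is multiplication by $b/a$ on the first factor and by $a/b$ on the second. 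In a basis adapted to $P_1\oplus P_2$ the matrix of $D\Phi$ is block-diagonal, so $\det D\Phi=(-1)\cdot(b/a)^{d-1}(a/b)^{d-1}=-1$ and $|\det D\Phi|\equiv 1$, which is the claim.

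I would also include the conceptual reason, which doubles as a sanity check: in spherical coordinates $w'=r_1\omega_1$, $u=r_2\omega_2$ the map becomes $u'=r_2\omega_1$, $w=r_1\omega_2$, so it merely interchanges the two radii $r_1\leftrightarrow r_2$ while fixing the angular variables, and the volume elements $r_1^{d-1}r_2^{d-1}\,dr_1\,dr_2\,d\sigma(\omega_1)\,d\sigma(\omega_2)$ match term by term. The main obstacle is that the upper-left block of $D\Phi$ is singular (it annihilates $e$), so a naive Schur-complement factorization is unavailable; recognizing the invariant decomposition $P_1\oplus P_2$ is exactly what sidesteps this, and it is the only genuinely non-mechanical step in the proof.
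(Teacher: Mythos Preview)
Your proof is correct and takes a genuinely different route from the paper. Both arguments begin identically, by conjugating with the linear map $(w',u)\mapsto(C^{-1/2}w',C^{-1/2}u)$ to reduce to the Euclidean case $C=I_d$, and both compute the same block Jacobian. From that point the paper proceeds by brute force: it carries out explicit elementary row and column operations on the $2d\times 2d$ matrix (in coordinates $w'_1,\dots,w'_d,u_1,\dots,u_d$) to reduce it to a form whose determinant can be read off by cofactor expansion, ultimately obtaining the value $-1$ via a telescoping sum of the coordinate ratios $u_k^2/\|u\|^2$ and ${w'_k}^2/\|w'\|^2$. Your approach replaces this computation with the observation that $D\Phi$ preserves the decomposition $P_1\oplus P_2$, so that the determinant factors as $(-1)\cdot(b/a)^{d-1}(a/b)^{d-1}=-1$. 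This is shorter, coordinate-free, and explains \emph{why} the determinant is $\pm1$: the map is a swap on $P_1$ and a pair of opposite dilations on $P_2$. Your spherical-coordinates remark makes the same point in yet another way and is essentially a self-contained alternative proof. The paper's approach buys nothing extra here; your argument is simply cleaner, and the paper's row-reduction is the mechanical verification that your invariant-subspace picture predicts.
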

\begin{proof}
  It is sufficient to prove the claim for $C=I$, the identity matrix.
  To see this, we denote $\tilde w := C^{-1/2}w$, $\tilde w' := C^{-1/2}w'$, $\tilde u := C^{-1/2}u$, $\tilde u' := C^{-1/2}u'$, and denote the Euclidean norm as $\Vert v \Vert := \sqrt{v^T v}$.
  Then since $\Vert \tilde w \Vert^2 = w^T C^{-1} w = \Vert w \Vert_C^2$ and the same kind of relation holds for the other three variables, we have
  \[
  \tilde u' = C^{-1/2} u' = C^{-1/2} w' \frac{\Vert \tilde u \Vert}{\Vert \tilde w' \Vert} = \tilde w' \frac{\Vert \tilde u \Vert}{\Vert \tilde w \Vert}
  \]
  and similarly $\tilde w = \tilde u \frac{\Vert \tilde w'\Vert}{\Vert \tilde u \Vert}$.
  But we also have
  \[
  \left| \frac{\partial (u',w)}{\partial (w',u)} \right|
  = \left| \frac{\partial (C^{1/2}\tilde u', C^{1/2}\tilde w)}{\partial (C^{1/2}\tilde w', C^{12}\tilde u)} \right|
  = \left| \left( \begin{array}{cc} C^{1/2} & 0 \\ 0 & C^{1/2} \end{array} \right) \frac{\partial (\tilde u',\tilde w)}{\partial (\tilde w',\tilde u)} \left( \begin{array}{cc} C^{1/2} & 0 \\ 0 & C^{1/2} \end{array} \right)^{-1} \right|
  = \left| \frac{\partial (\tilde u',\tilde w)}{\partial (\tilde w',\tilde u)} \right|,
  \]
  from which we see that it is sufficient to prove that the rightmost term is equal to unity.
  
  Now we will assume $C=I$.
  Computing partial derivatives yields
  \[
  \frac{\partial(u',w)}{\partial(w',u)}
  = \left( \begin{array}{cc} \frac{\Vert u \Vert}{\Vert w'\Vert}I - \frac{w'w'^T\Vert u\Vert}{\Vert w'\Vert^3} & \frac{w' u^T}{\Vert w'\Vert \Vert u \Vert}\\
    \frac{u w'^T}{\Vert u \Vert \Vert w'\Vert} & \frac{\Vert w'\Vert}{\Vert u \Vert} I - \frac{u u^T \Vert w'\Vert}{\Vert u \Vert^3} \end{array} \right).
  \]
  We carry out elementary column and row operations as follows to obtain
  \[\begin{split}
  &\left( \begin{array}{cccccc} 1&&&\frac{w'_1}{u_1}\frac{\Vert u\Vert^2}{\Vert w'\Vert^2}&&\\
    &\ddots&&&\ddots&\\ &&1&&& \frac{w'_d}{u_d}\frac{\Vert u\Vert^2}{\Vert w'\Vert^2}\\
    &&&1&&\\ &&&&\ddots& \\ &&&&&1 \end{array} \right)
  \frac{\partial (u',w)}{\partial (w',u)}
  \left( \begin{array}{cccccc} 1&&&\frac{u_1}{w'_1}\frac{\Vert w'\Vert^2}{\Vert u \Vert^2}&&\\
    &\ddots&&&\ddots&\\ &&1&&& \frac{u_d}{w'_d}\frac{\Vert w'\Vert^2}{\Vert u\Vert^2}\\
    &&&1&&\\ &&&&\ddots& \\ &&&&&1 \end{array} \right)\\
  &\hspace{20ex} \cdot \left( \begin{array}{ccccccc} 1& -\frac{w'_2}{w'_1}& \cdots& -\frac{w'_d}{w'_1}&&&\\ &1&&&&&\\ &&\ddots&&&&\\ &&&1&&&\\ &&&&1&&\\ &&&&&\ddots& \\ &&&&&&1\end{array}\right) 
    \left( \begin{array}{cccccc} 1&&&&&\\ &\ddots&&&&\\ &&1&&&\\ -\frac{u_1w'_1}{\Vert w'\Vert^2}&&&1&&\\\vdots&&&&\ddots&\\ -\frac{u_d w'_1}{\Vert w'\Vert^2}&&&&&1\end{array} \right)
  \end{split}\]
  \[
  = \left( \begin{array}{cccccccc} \frac{\Vert u \Vert}{\Vert w' \Vert}\left(1-\frac{u_1^2}{\Vert u\Vert^2}-\frac{{w'_1}^2}{\Vert w'\Vert^2}\right) & -\frac{\Vert u \Vert}{\Vert w'\Vert} \frac{w'_2}{w'_1} & \cdots & -\frac{\Vert u \Vert}{\Vert w'\Vert} \frac{w'_d}{w'_1}&\frac{u_1}{w'_1}\frac{\Vert w'\Vert}{\Vert u \Vert} + \frac{w'_1}{u_1}\frac{\Vert u \Vert}{\Vert w'\Vert}&&&\\
   -\frac{\Vert u \Vert}{\Vert w'\Vert}\frac{w'_1}{w'_2}\left(\frac{u_2^2}{\Vert u\Vert^2}+\frac{{w'_2}^2}{\Vert w'\Vert^2}\right) & \frac{\Vert u \Vert}{\Vert w'\Vert}&&&&\ddots&&\\
   \vdots&&\ddots&&&&\ddots&\\
   -\frac{\Vert u \Vert}{\Vert w'\Vert}\frac{w'_1}{w'_d}\left(\frac{u_d^2}{\Vert u\Vert^2}+\frac{{w'_d}^2}{\Vert w'\Vert^2}\right) &&& \frac{\Vert u \Vert}{\Vert w'\Vert}&&&&\frac{u_d}{w'_d}\frac{\Vert w'\Vert}{\Vert u\Vert}+\frac{w'_d}{u_d}\frac{\Vert u\Vert}{\Vert w'\Vert}\\
   &&&&\frac{\Vert w'\Vert}{\Vert u \Vert} &&&\\
   &&&&&\ddots&&\\ &&&&&&\ddots&\\ &&&&&&&\frac{\Vert w'\Vert}{\Vert u \Vert}
      \end{array} \right)
  \]
  The absolute value of the determinant of the above matrix can be directly computed as
  \begin{multline*}
    \left| \left(1-\frac{u_1^2}{\Vert u\Vert^2}-\frac{{w'_1}^2}{\Vert w'\Vert^2}\right) - \left(\frac{u_2^2}{\Vert u\Vert^2}+\frac{{w'_2}^2}{\Vert w'\Vert^2}\right) - \cdots - \left(\frac{u_d^2}{\Vert u\Vert^2} + \frac{{w'_d}^2}{\Vert w'\Vert^2}\right) \right|\\
    =\left| 1 - \left(\frac{u_1^2}{\Vert u\Vert^2}+\cdots +\frac{u_d^2}{\Vert u\Vert^2}\right) - \left(\frac{{w'_1}^2}{\Vert w'\Vert^2} + \cdots + \frac{{w'_d}^2}{\Vert w'\Vert^2}\right) \right| = 1.
  \end{multline*}
\end{proof}

\renewcommand{\theprop}{\ref{prop:detailedbalance_spNUTS2}}
\begin{prop} 
  The Markov chain $\left(X^{(i)}\right)_{i\in1:M}$ constructed by the sequential-proposal No-U-Turn sampler of type 2 (spNUTS2, Algorithm~\ref{alg:spNUTS2}) is reversible with respect to the target distribution $\bar\pi(x)$.
\end{prop}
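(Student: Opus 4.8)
The plan is to mirror the proofs of Propositions~\ref{prop:detailedbalanceMH} and~\ref{prop:detailedbalance_spNUTS1}. Fix the stopping cosine $c$ (the general case follows by mixing over $\zeta$), write $\S_\tau$ for one call of $\texttt{Leapfrog}(\cdot,\cdot,\epsilon,l,C)$, which preserves $dx\,dv$ and, with $\T(x,v)=(x,-v)$, satisfies $\T\circ\S_\tau\circ\T\circ\S_\tau=\id$, so Lemma~\ref{lem:TSnTSn} applies. Assume $X^{(i)}=Y_0\sim\bar\pi$ and $v_0\sim\psi_C$, so $(Y_0,v_0)\sim\Pi$ with $\Pi(x,v)=\bar\pi(x)\psi_C(v)$; because the velocity is drawn afresh at the top of every iteration, it suffices to prove $x$-marginal detailed balance. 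A spNUTS2 iteration traces the single deterministic trajectory $(y_m,v_m):=\S_\tau^m(y_0,v_0)$, $m\ge0$, on which the milestones $x_1,\dots,x_{b_j}$ are the first states $y_m$ at which $\pi(y_m)\psi_C(v_m)/\big(\pi(y_0)\psi_C(v_0)\big)>\Lambda$, i.e.\ at which $-\log\pi(y_m)+\tfrac12\|v_m\|_C^2$ drops below $H_{\max}:=-\log\pi(y_0)+\tfrac12\|v_0\|_C^2-\log\Lambda$. I would parametrise a successful iteration by the stopping checkpoint index $j\in 1\col j_{\max}$ and the gap vector $\vec m=(m_1,\dots,m_{b_j})$ with $m_k\in 1\col N$; put $M_k=m_1+\cdots+m_k$, $M=M_{b_j}$, $\mathcal M=\{M_1,\dots,M_{b_j}\}$, and $\Pi_m=\pi(y_m)\psi_C(v_m)$. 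The events $E_{j,\vec m}$ (``stop at $b_j$ with gaps $\vec m$, and take $(y_M,v_M)$ as the next state'') partition the event that the proposal is accepted.

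I would then write $\P[E_{j,\vec m},\,(Y_0,v_0)\in A,\,(y_M,-v_M)\in B]$ as an integral over $(y_0,v_0)$ and $\Lambda\in(0,1)$ of $\tfrac1Z\pi(y_0)\psi_C(v_0)\,\1_A(y_0,v_0)\1_B(y_M,-v_M)$ times three indicators: (i) the acceptance chain $\1[\Lambda<\Pi_m/\Pi_0,\ m\in\mathcal M]\,\1[\Lambda\ge\Pi_m/\Pi_0,\ m\in\{1,\dots,M\}\setminus\mathcal M]$; (ii) the indicator that the stopping rule \eqref{eqn:stopping_cond_spNUTS2} fails at $b_1,\dots,b_{j-1}$ and holds at $b_j$ (or $j=j_{\max}$); and (iii) the symmetry indicator \eqref{eqn:symmetry_cond_spNUTS2} --- where (ii) and (iii) depend on $(y_0,v_0)$ and $\vec m$ only. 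Integrating $\Lambda$ out exactly as in the proof of Proposition~\ref{prop:detailedbalanceMH} replaces $\tfrac1Z\pi(y_0)\psi_C(v_0)$ times (i) by $\tfrac1Z\big[\min\{\Pi_m:m\in\mathcal M\cup\{0\}\}-\min\big(\{\Pi_m:m\in\mathcal M\cup\{0\}\}\cup\{\textstyle\max_{1\le m\le M,\,m\notin\mathcal M}\Pi_m\}\big)\big]$. Next I would change variables $(y_0,v_0)\mapsto(y_M,v_M)=:(y_0',v_0')$ (unit Jacobian, each $\S_\tau$ being volume-preserving) and use Lemma~\ref{lem:TSnTSn} in the form $\S_\tau^{-k}=\T\circ\S_\tau^k\circ\T$ to get $(y_m,v_m)=(y_{M-m}'',-v_{M-m}'')$, where $(y_k'',v_k''):=\S_\tau^k(y_0',-v_0')$ is the reversed trajectory. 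Since $\psi_C$ is even in $v$, $\Pi_m=\pi(y_{M-m}'')\psi_C(v_{M-m}'')$; and under $m\mapsto M-m$ the sets $\mathcal M\cup\{0\}$ and $\{1,\dots,M\}\setminus\mathcal M$ map respectively to $\mathcal M'\cup\{0\}$ and $\{1,\dots,M\}\setminus\mathcal M'$, where $\mathcal M'$ is precisely the milestone set of the reversed trajectory with reversed gap vector $\vec m'=(m_{b_j},\dots,m_1)$. Hence the bracketed min-difference is reversal-invariant.

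The crux is checking that indicators (ii) and (iii) interchange. Using $\cosangle(a,-b\giventh C)=\cosangle(-a,b\giventh C)=-\cosangle(a,b\giventh C)$ and $\cosangle(-a,-b\giventh C)=\cosangle(a,b\giventh C)$, together with $y_m=y_{M-m}''$, $v_m=-v_{M-m}''$ and the fact that step $M-M_{b_{j'}}$ of the reversed trajectory is its milestone number $b_j-b_{j'}$, one finds: the forward ``\eqref{eqn:stopping_cond_spNUTS2} fails at $b_{j'}$'' conditions ($j'<j$) become the reversed-run symmetry conditions \eqref{eqn:symmetry_cond_spNUTS2} at $b_j-b_{j'}$; the forward symmetry conditions \eqref{eqn:symmetry_cond_spNUTS2} become the reversed-run ``\eqref{eqn:stopping_cond_spNUTS2} fails at $b_{j'}$'' conditions; and the forward ``\eqref{eqn:stopping_cond_spNUTS2} holds at $b_j$ (or $j=j_{\max}$)'' becomes the same statement for the reversed run. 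Thus the whole integrand becomes that of the event $E_{j,\vec m'}$ for the reversed trajectory, and relabelling $(y_0',v_0')\to(y_0,v_0)$ gives
\[
\P[E_{j,\vec m},\,(Y_0,v_0)\in A,\,(y_M,-v_M)\in B]=\P[E_{j,\vec m'},\,(Y_0,v_0)\in B,\,(y_M,-v_M)\in A].
\]

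Taking $A=A_0\times\mathbb V$, $B=B_0\times\mathbb V$ and summing over $j$ and over valid gap vectors $\vec m$ --- the reversal $\vec m\mapsto\vec m'$ being an involution on these for each $j$ --- yields $\P[X^{(i)}\in A_0,\,X^{(i+1)}\in B_0,\,\text{proposal accepted}]=\P[X^{(i)}\in B_0,\,X^{(i+1)}\in A_0,\,\text{proposal accepted}]$. On the complementary event --- some $\texttt{FindNextAcceptable}$ call exhausts its $N$ trials, or the symmetry condition fails --- the chain stays put, which is trivially symmetric; adding the contributions and mixing over $c\sim\zeta$ completes the proof. The main obstacle is the index bookkeeping in the third paragraph: tracking the correspondence between the forward checkpoints $b_{j'}$ and the reversed milestones $b_j-b_{j'}$ along with the sign flips in $\cosangle$ and in the velocities, and confirming that the ``skip the unacceptable intermediate states'' mechanism commutes with trajectory reversal --- which works only because acceptability is judged against the single fixed threshold $H_{\max}$, so that once $\Lambda$ is integrated out the trajectory enters only through the reversal-invariant multiset $\{\Pi_m:m\in\mathcal M\cup\{0\}\}$ and the quantity $\max_{m\notin\mathcal M}\Pi_m$.
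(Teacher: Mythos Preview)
The proposal is correct and takes essentially the same approach as the paper: the paper parametrizes successful iterations by $(j,k^*,\mathcal K)$ where $\mathcal K\subset\{0,\dots,k^*\}$ is the index set of acceptable states (in bijection with your data via $\mathcal K=\{0\}\cup\mathcal M$, $k^*=M$, and the same gap constraint $\mathcal K(i)-\mathcal K(i{-}1)\le N$), integrates out $\Lambda$ to obtain the identical $\min/\max$ expression, packages your indicators (ii) and (iii) into a single function $\Upsilon$, and verifies $\Upsilon$ is invariant under trajectory reversal using the same $\cosangle$ sign identities you invoke. The only differences are notational --- gap vector $\vec m$ versus index set $\mathcal K$, and your phrasing that (ii) and (iii) ``interchange'' rather than that their product is invariant --- plus your slightly more explicit handling of the $j=j_{\max}$ boundary case.
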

\addtocounter{prop}{-1}
\renewcommand\theprop\originaltheprop
\begin{figure}[t]
\centering
\includegraphics[width=.45\textwidth]{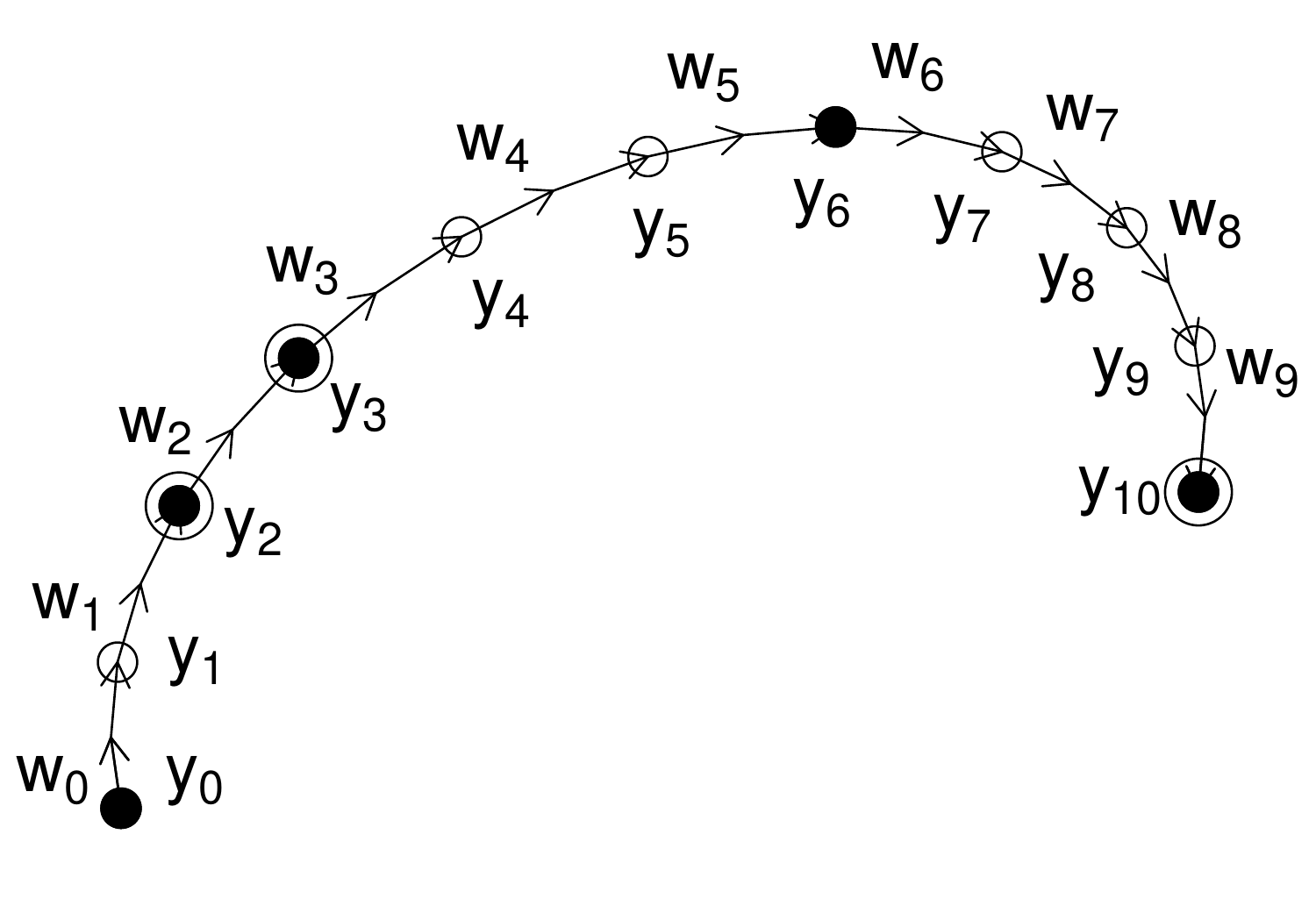}
\caption{A schematic diagram showing the variables defined in the proof of Proposition~\ref{prop:detailedbalance_spNUTS2}.
  Here $(y_k,w_k)$ for $k\sr\geq1$ are obtained by making two leapfrog jumps from $(y_{k-1},w_{k-1})$ (i.e., $l\sr=2$ in Algorithm~\ref{alg:spNUTS2}).
  Acceptable states are denoted by filled circles, and the states at which the stopping condition is checked are additionally marked by an encompassing larger circle.
  The diagram illustrates the case where $b_j \sr= 2^{j-1}$.
  This trajectory stops at $y_{10}$, which is the fourth ($b_3\sr=4$) acceptable state and the first acceptable state after making a U-turn.
}
\label{fig:spNUTS2proof}
\end{figure}
\begin{proof}
We assume the cosine angle $c$ is fixed.
Let $Y_0$ denote the current state of the Markov chain constructed by Algorithm~\ref{alg:spNUTS2} and $W_0$ the velocity drawn from $\psi_C$ at the start of the current iteration.
We recursively let $(Y_k,W_k)$ denote the state reached after $l$ leapfrog jumps starting from $(Y_{k-1},W_{k-1})$, for $k \sr\geq 1$.
We consider the case where the trajectory stops at the $b_j$-th acceptable state which is equal to $(Y_{k^*}, W_{k^*})$ for some $k^*$.
We will consider the probability
\begin{multline*}
  \P[ (Y_0,W_0) \in A,~ (Y_{k^*},-W_{k^*})\in B, \\
    (Y_{k^*},W_{k^*})\text{ is the $b_j$-th acceptable state and taken as the next state of the Markov chain}]
\end{multline*}
for measurable subsets $A$ and $B$ of $\mathbb X \sr\times\mathbb V$.
Let $\mathcal K$ be a subset of $0\col k^*$ and write $\mathcal K^c = (0\col k^*)\sr\setminus \mathcal K$.
Given the values $(Y_k,W_k)=(y_k,w_k)$, $k\sr\in 0\col k^*$ and the uniform$(0,1)$ random variable $\Lambda$ drawn at the start of the current iteration, the states $\{(y_k,w_k) \giventh k\sr\in \mathcal K\}$ are deemed acceptable and $\{(y_k,w_k) \giventh k\in\mathcal K^c\}$ not acceptable if and only if the quantity
\[
\1\left[ \Lambda \geq \max_{k\in \mathcal K^c} \frac{\pi(y_k)\psi_C(w_k)}{\pi(y_0)\psi_C(w_0)} \right]
\cdot \1 \left[ \Lambda < \min_{k\in \mathcal K\sr\setminus \{0\}} \frac{\pi(y_k)\psi_C(w_k)}{\pi(y_0) \psi_C(w_0)} \right]
\cdot \1 \left[ 0 < \Lambda < 1\right]
\]
is equal to unity.
Note that we consider $(y_0,w_0)$ as an acceptable state here.
Integrating the above quantity over $\Lambda$, we see that the probability of finding $\{(y_k,w_k)\giventh k\in \mathcal K\}$ acceptable and $\{(y_k,w_k) \giventh k\in \mathcal K^c\}$ not acceptable is given by
\[
\left[ 1\land \min_{k\in\mathcal K\sr\setminus\{0\}}\frac{\pi(y_k)\psi_C(w_k)}{\pi(y_0)\psi_C(w_0)}\right]
- \left[ 1\land \min_{k\in\mathcal K\sr\setminus\{0\}}\frac{\pi(y_k)\psi_C(w_k)}{\pi(y_0)\psi_C(w_0)} \land \max_{k\in\mathcal K^c}\frac{\pi(y_k)\psi_C(w_k)}{\pi(y_0)\psi_C(w_0)} \right].
\]
We will consider $\mathcal K$ that satisfies the following three conditions: $\{0,k^*\} \subset \mathcal K \subset 0\col k^*$, $|\mathcal K| = b_j\sr+1$, and $\mathcal K(i) - \mathcal K(i{\,-\,}1) \leq N$ for all $i\in 1\col b_j$, where the elements of $\mathcal K$ are ordered as $0= \mathcal K(0) < \mathcal K(1) < \cdots < \mathcal K(b_j)=k^*$.
The last condition is related to the fact that spNUTS2 tries at most $N$ consecutive states to find each new acceptable state.

Given $\{(y_k,w_k)\giventh k\sr\in 0\col k^*\}$ and $\mathcal K$, we let $\Upsilon\big(\{(y_k,w_k) \giventh k \sr\in 0\col k^*\}, \mathcal K \big)$ denote the indicator function that takes the value of unity if and only if the pair $(y_{k^*},w_{k^*})$ is the first position-velocity pair satisfying the U-turn condition among $\{(y_{\mathcal K(b_{j'})}, w_{\mathcal K(b_{j'})}) \giventh {j'} \sr\in 0\col j\}$ and the stopped trajectory satisfies the symmetry condition.
That is, we define
\begin{multline*}
  \Upsilon\big(\{(y_k,w_k) \giventh k\in 0\col k^*\}, \mathcal K\big)
  := \1\left[ \cosangle(y_{k^*}\sr-y_{0}, w_0) \leq c \text{ or } \cosangle(y_{k^*}\sr-y_{0}, w_{k^*}) \leq c \right] \\
  \cdot \prod_{{j'}=0}^{j-1} \Big\{ \1\left[ \cosangle(y_{\mathcal K(b_{j'})}\sr-y_0, w_0) >c\right] \cdot \1 \left[\cosangle(y_{\mathcal K(b_{j'})}\sr-y_0, w_{\mathcal K(b_{j'})}) >0 \right] \\
  \cdot \1 \left[ \cosangle(y_{k^*} \sr- y_{\mathcal K(b_j-b_{j'})}, w_{\mathcal K(b_j-b_{j'})} ) >c \right] \cdot \1 \left[\cosangle(y_{k^*} \sr- y_{\mathcal K(b_j-b_{j'})}, w_{k^*}) > c \right] \Big\},
\end{multline*}
where the dependence of $\cosangle$ on $C$ is suppressed.
Thus the probability of drawing $(y_0,w_0)$ and taking the $b_j$-th acceptable pair $(y_{k^*},w_{k^*})$ as the next state of the Markov chain while finding $\{(y_k,w_k)\giventh k \sr\in \mathcal K\}$ acceptable and $\{(y_k,w_k)\giventh k \sr\in \mathcal K^c\}$ not acceptable is given by
\begin{multline*}
\frac{1}{Z}\pi(y_0)\psi_C(w_0) \left( \left[ 1\land \min_{k\in\mathcal K\sr\setminus\{0\}}\frac{\pi(y_k)\psi_C(w_k)}{\pi(y_0)\psi_C(w_0)}\right] \right. \\
\left. - \left[ 1\land \min_{k\in\mathcal K\sr\setminus\{0\}}\frac{\pi(y_k)\psi_C(w_k)}{\pi(y_0)\psi_C(w_0)} \land \max_{k\in\mathcal K^c}\frac{\pi(y_k)\psi_C(w_k)}{\pi(y_0)\psi_C(w_0)} \right] \right)
\cdot \Upsilon\big(\{(y_k,w_k)\giventh k\in 0\col k^*\}, \mathcal K\big) dy_0 dw_0
\end{multline*}
\vspace{-3ex}
\begin{multline}
=\frac{1}{Z} \left( \min_{k\in\mathcal K}\{\pi(y_k)\psi_C(w_k)\} - \left[\min_{k\in\mathcal K}\{\pi(y_k)\psi_C(w_k)\} \land \max_{k\in\mathcal K^c}\{\pi(y_k)\psi_C(w_k)\} \right] \right)\\
\cdot \Upsilon\big(\{(y_k,w_k)\giventh k\in 0\col k^*\}, \mathcal K\big) dy_0 dw_0.
\label{eqn:spNUTS2_p}
\end{multline}
We now consider a reverse scenario where we draw $y_{k^*}$ from $\bar\pi$ and $-w_{k^*}$ from $\psi_C$ at the start of the current iteration of the algorithm.
The new leapfrog trajectory exactly reverses the original trajectory and is given by $\{(y_{k^*-k}, -w_{k^*-k}) \giventh k \in 0\col k^*\}$.
We denote $\bar y_k := y_{k^*-k}$ and $\bar w_k := -w_{k^*-k}$ for $k \in 0\col k^*$.
We also write $\bar{\mathcal K} := \{ k^*-k \giventh k\in\mathcal K\}$ and $\bar{\mathcal K}^c := (0\col k^*) \sr\setminus \bar{\mathcal K}$.
The elements of $\bar{\mathcal K}$ will also be denoted as $0=\bar{\mathcal K}(0) < \bar{\mathcal K}(1) < \cdots < \bar{\mathcal K}(b_j) = k^*$.
It is easy to see that $\bar{\mathcal K}(i) = k^* - \mathcal K(b_j-i)$ for $i\in 0\col b_j$.
It follows that
\[
\bar y_{\bar{\mathcal K}(b_{j'})} = y_{k^* - \bar{\mathcal K}(b_{j'})} = y_{\mathcal K(b_j-b_{j'})}, \qquad
\bar w_{\bar{\mathcal K}(b_{j'})} = -w_{\mathcal K(b_j-b_{j'})}.
\]
From this, the following equations hold:
\[\begin{split}
\cosangle(\bar y_{\bar{\mathcal K}(b_{j'})} \sr- \bar y_0, \bar w_{\bar{\mathcal K}(b_{j'})})
&= \cosangle(y_{k^*} - y_{\mathcal K(b_j-b_{j'})}, w_{\mathcal K(b_j-b_{j'})})\\
\cosangle(\bar y_{\bar{\mathcal K}(b_{j'})} \sr- \bar y_0, \bar w_0)
&= \cosangle(y_{k^*} \sr- y_{\mathcal K(b_j-b_{j'})}, w_{k^*})\\
\cosangle(\bar y_{k^*} \sr- \bar y_{\bar{\mathcal K}(b_j-b_{j'})}, \bar w_{\bar{\mathcal K}(b_j-b_{j'})})
&= \cosangle(y_{\mathcal K(b_{j'})} \sr- y_0, w_{\mathcal K(b_{j'})})\\
\cosangle(\bar y_{k^*} \sr- \bar y_{\bar{\mathcal K}(b_j-b_{j'})}, \bar w_{k^*})
&= \cosangle(y_{\mathcal K(b_{j'})} \sr- y_0, w_0).
\end{split}\]
Thus we see
\[
\Upsilon (\{(\bar y_k, \bar w_k) \giventh k \in 0\col k^*\}, \bar{\mathcal K})
= \Upsilon (\{(y_k, w_k) \giventh k\in 0\col k^*\}, \mathcal K).
\]
Finally, due to the measure-preserving property of leapfrog jumps, we have $dy_0\,dw_0 = dy_{k^*} \, dw_{k^*}$.
From these facts, we see that the probability of drawing $\bar y_0$ from $\bar\pi$, drawing $\bar w_0$ from $\psi_C$, finding the states $\{(\bar y_k, \bar w_k)\giventh k \in \bar {\mathcal K}\}$ acceptable and $\{(\bar y_k, \bar w_k) \giventh k \in \bar {\mathcal K}^c\}$ not acceptable, and taking $(\bar y_{k^*}, \bar w_{k^*})$ as the next state of the Markov chain equals
\begin{multline*}
\frac{1}{Z} \left( \min_{k\in\bar{\mathcal K}}\{\pi(\bar y_k)\psi_C(\bar w_k)\} - \left[\min_{k\in\bar{\mathcal K}}\{\pi(\bar y_k)\psi_C(\bar w_k)\} \land \max_{k\in\bar{\mathcal K}^c}\{\pi(\bar y_k)\psi_C(\bar w_k)\} \right] \right)\\
\cdot \Upsilon\big(\{(\bar y_k,\bar w_k)\giventh k\in 0\col k^*\}, \bar{\mathcal K}\big) d\bar y_0\, d\bar w_0
\end{multline*}
\vspace{-2em}
\begin{multline*}
= \frac{1}{Z} \left( \min_{k\in\bar{\mathcal K}}\{\pi(y_{k^*-k})\psi_C(-w_{k^*-k})\} - \left[\min_{k\in\bar{\mathcal K}}\{\pi(y_{k^*-k})\psi_C(-w_{k^*-k})\} \land \max_{k\in\bar{\mathcal K}^c}\{\pi(y_{k^*-k})\psi_C(-w_{k^*-k})\} \right] \right)\\
\cdot \Upsilon\big(\{(y_k,w_k)\giventh k\in 0\col k^*\}, \mathcal K\big) dy_{k^*} dw_{k^*}
\end{multline*}
\vspace{-2em}
\begin{multline}
= \frac{1}{Z} \left( \min_{k\in \mathcal K}\{\pi(y_k)\psi_C(w_k)\} - \left[\min_{k\in\mathcal K}\{\pi(y_k)\psi_C(w_k)\} \land \max_{k\in\mathcal K^c}\{\pi(y_k)\psi_C(w_k)\} \right] \right)\\
\cdot \Upsilon\big(\{(y_k,w_k)\giventh k\in 0\col k^*\}, \mathcal K\big) dy_0\, dw_0,
\label{eqn:spNUTS2_prev}
\end{multline}
which is the same as \eqref{eqn:spNUTS2_p}.
Recall that \eqref{eqn:spNUTS2_p} gives the probability corresponding to the case where the index set of the acceptable states is given by $\mathcal K$.
Let $\mathbb K(j,k^*)$ be the set of index sets $\mathcal K$ that satisfy the following three conditions: $\{0,k^*\} \subset \mathcal K \subset 0\col k^*$, $|\mathcal K| = b_j+1$, and $\mathcal K(i) - \mathcal K(i{\,-\,}1) \leq N$ for all $i\in 1\col b_j$.
Due to the symmetric nature of these conditions, we have $\mathcal K \in \mathbb K(j,k^*)$ if and only if $\bar{\mathcal K} \in \mathbb K(j,k^*)$.
Denoting 
\begin{multline*}
\Xi\big(\{(y_k,w_k)\giventh k\in 0\col k^*\}, \mathcal K\big)\\
:= \frac{1}{Z} \left( \min_{k\in \mathcal K}\{\pi(y_k)\psi_C(w_k)\} - \left[\min_{k\in\mathcal K}\{\pi(y_k)\psi_C(w_k)\} \land \max_{k\in\mathcal K^c}\{\pi(y_k)\psi_C(w_k)\} \right] \right)\\
\cdot \Upsilon\big(\{(y_k,w_k)\giventh k\in 0\col k^*\}, \mathcal K\big),
\end{multline*}
we have from \eqref{eqn:spNUTS2_prev},
\[
\Xi\big(\{(y_k,w_k)\giventh k\in 0\col k^*\}, \mathcal K\big) dy_0\, dw_0 = \Xi\big(\{(\bar y_k,\bar w_k)\giventh k\in 0\col k^*\}, \bar{\mathcal K}\big) d\bar y_0\, d\bar w_0.
\]
Thus
\begin{equation*}
  \begin{split}
    &\P[ (Y_0,W_0) \in A,~ (Y_{k^*},-W_{k^*})\in B, \\
    &\hspace{7ex}  (Y_{k^*},W_{k^*})\text{ is the $b_j$-th acceptable state and taken as the next state of the Markov chain}]\\
    &\hspace{0ex}= \int \1_A(y_0,w_0) \1_B(y_{k^*},-w_{k^*}) \sum_{\mathcal K \in \mathbb K(j,k^*)} \Xi\big(\{(y_k,w_k)\giventh k\in 0\col k^*\}, \mathcal K\big) dy_0 dw_0\\
    &\hspace{0ex}= \int \1_A(\bar y_{k^*}, -\bar w_{k^*}) \1_B(\bar y_0,\bar w_0)\sum_{\mathcal K \in \mathbb K(j,k^*)} \Xi\big(\{(\bar y_k,\bar w_k)\giventh k\in 0\col k^*\}, \bar{\mathcal K}\big) d\bar y_0 d\bar w_0\\
    &\hspace{0ex}= \int \1_A(\bar y_{k^*}, -\bar w_{k^*}) \1_B(\bar y_0,\bar w_0)\sum_{\bar{\mathcal K} \in \mathbb K(j,k^*)}  \Xi\big(\{(\bar y_k,\bar w_k)\giventh k\in 0\col k^*\}, \bar{\mathcal K}\big) d\bar y_0 d\bar w_0\\
    &\hspace{0ex}= \P[ (Y_0,W_0) \in B,~ (Y_{k^*},-W_{k^*})\in A, \\
    &\hspace{7ex}  (Y_{k^*},W_{k^*})\text{ is the $b_j$-th acceptable state and taken as the next state of the Markov chain}]\\
\end{split}
\end{equation*}
The third equality follows from the symmetry in $\mathbb K(j,k^*)$.
Summing the above equation for $j{\,\geq\,}0$ and $k^*{\,\geq\,}1$ and taking $A=A_0\sr\times \mathbb V$ and $B=B_0\sr\times \mathbb V$ for some measurable subsets $A_0$ and $B_0$ of $\mathbb X$, we establish detailed balance for the Markov chains $\left(X^{(i)}\right)_{i\in 1:M}$ constructed by Algorithm~\ref{alg:spNUTS2}.
\end{proof}

\section{Connection to the bouncy particle sampler}\label{sec:BPS}
Recently, a non-reversible, piecewise deterministic MCMC sampling method called the bouncy particle sampler (BPS) has been proposed \citep{peters2012rejection, bouchard2018bouncy}.
The BPS constructs a rejection free, continuous time Markov chain.
A key advantage of the BPS is that it allows for local updates of the target variables, meaning that the algorithm can update one subset of the target variables at a time while the rest of the variables evolve according to a flow that is easy to compute.
However, this algorithm has a limitation in terms of the target distributions it can be used for, because the user needs to be able to draw the arrival times of a non-homogeneous Poisson process which have a rate depending on the gradient of the target density.
In this section, we present a new, discrete time version of BPS, which is not rejection free.
This discrete time BPS is readily applicable to any target distribution with evaluable unnormalized density.
We note that there exists an alternative discrete time BPS, which was given in \citet[Algorithm~4]{vanetti2017piecewise}. 

We describe the discrete time BPS algorithm we propose within the framework of MCMC algorithms using deterministic kernels (see Section~\ref{sec:sppdDescrip}).
We assume that the sample space $\mathbb X$ and the velocity space $\mathbb V$ are both given by $\mathbb R^d$.
As in Section~\ref{sec:sppdDescrip}, the density of velocity distribution is denoted by $\psi(v\giventh x)$.
We suppose that for each $x \sr\in\mathbb X$, there exists a linear operator $\R_x\sr:\mathbb V \sr\to \mathbb V$, which satisfies $\R_x \circ \R_x = \id$ and $\psi(\R_x v\giventh x) = \psi(v\giventh x)$.
The time evolution map $\S_\tau$ is defined as
\[
\S_\tau(x,v) := (x-\R_x v \tau, -\R_x v).
\]
The self-inverse property of the linear operator $\R_x$ ensures that the absolute determinant of $\R_x$ when viewed as a matrix is equal to unity.
This translates into unit Jacobian determinant of the map $\R_x$, so the condition \eqref{eqn:psiR}, $\frac{\psi(\R_x v\giventh x)}{\psi(v\giventh x)} \left| \frac{\partial \R_x v}{\partial v} \right| = 1$, is satisfied.
It can also be readily checked that the condition \eqref{eqn:TSTS} is satisfied for $\S_\tau$ and that
\[
\left|\frac{\partial \S_\tau(x,v)}{\partial (x,v)}\right| = 1, \quad \forall(x,v)\sr\in \mathbb X\sr\times \mathbb V.
\]
A sequential-proposal discrete time BPS can be obtained as a specific case of Algorithm~\ref{alg:sppd} where $\S_\tau$ and $\R_x$ are given as above.
Algorithm~\ref{alg:spdBPS} gives a pseudocode for the sequential-proposal discrete time BPS.

\begin{figure}[t]
  \centering
  \scalebox{.89}{\begin{minipage}{\textwidth}
\begin{algorithm}[H]
  \SetKwInOut{Input}{Input}\SetKwInOut{Output}{Output}
  \Input{
    Distribution of the maximum number of proposals and the number of accepted proposals, $\nu(N,L)$\\
    Time step length distribution, $\mu(d\tau)$\\
    Velocity distribution density, $\psi(v\giventh x)$\\
    Reflection operators $\{\R_x\}$, $\{\R'_x\}$\\
    Velocity refreshment probability, $p^{\text{ref}}(x)$\\
    Number of iterations, $M$}
  \vspace{1ex}
  \Output{A draw of Markov chain, $\left(X^{(i)}\right)_{i=1,\dots,M}$}
  \vspace{1ex}
  \textbf{Initialize:} Set $X^{(0)}$ arbitrarily and draw $V^{(0)}\sim \psi(\,\cdot\giventh X^{(i+1)}).$
  
  \For {$i\gets 0\col M{-}1$}{
    Draw $N,L\sim \nu(\cdot, \cdot)$\\
    Draw $\tau \sim \mu(\cdot)$\\
    Draw $\Lambda\sim \text{unif}(0,1)$\\
    Set $X^{(i+1)} \gets X^{(i)}$ and $V^{(i+1)} \gets \R_{X^{(i)}}V^{(i)}$\\
    Set $n_a \gets 0$\\
    Set $(Y_0,W_0) \gets (X^{(i)}, V^{(i)})$\\
    \For {$n \gets 1\col N$} {
      Set $(Y_n, W_n) = (Y_{n-1} \sr- \R_{Y_{n-1}}W_{n-1}\tau,\, - \R_{Y_{n-1}}W_{n-1})$\\
      \textbf{if} {$\displaystyle \Lambda < \frac{\pi(Y_n)}{\pi(X)}$} \textbf{then} $n_a \gets n_a + 1$\\
      \If {$n_a = L$} {
        Set $(X^{(i+1)}, V^{(i+1)}) \gets (Y_n, W_n)$\\
        \texttt{break}
      }
    }
    With probability $p^{\text{ref}}(X^{(i+1)})$, refresh $V^{(i+1)}\sim \psi(\,\cdot\giventh X^{(i+1)})$ 
  }
  \caption{Sequential-proposal discrete time bouncy particle sampler}
\label{alg:spdBPS}
\end{algorithm}
  \end{minipage}}
  \end{figure}

A convenient choice for $\psi$ is a multivariate Gaussian density
\[
\psi(v\giventh x) = \psi_C(v) := \frac{1}{\sqrt{2\pi}^d \left| \det C \right|^{1/2}} \exp\{-v^T C^{-1} v\},
\]
where $C$ is a positive definite matrix.
In this case, the conditions \eqref{eqn:RR} and \eqref{eqn:psiR} hold if and only if
$$\R_x = C^{1/2} (I-2P) C^{-1/2}$$
for a symmetric projection matrix $P$, that is $P P {=} P$ and $P^T {=} P$.
A matrix $P$ is a symmetric projection matrix in $\mathbb{R}^d$ if and only if it is a projection onto the linear span of a subset of an orthonormal basis of $\mathbb{R}^d$, that is $P = \sum_{j\in A} e^j (e^j)^T$ for some $A \subseteq \{1,2,\dots,d\}$ and some orthonormal basis $(e^1, \dots, e^d)$.

A possible choice for $\R_x$ includes $-\id$, in which case the proposal map is given by $\S_\tau(x,v) = (x+v\tau, v)$.
Since the map $\S_\tau$ can be readily evaluated, this choice has a computational advantage when multiple sequential proposals are made.
Another sensible choice for the operator $\R_x$ is the reflection on the hyperplane perpendicular to the gradient of the log target density under the metric given by $C^{-1}$.
We write $U(x) := -\log \pi(x)$ and denote the velocity reflection operator by $\R_{\nabla U(x)}$.
This velocity reflection operator can be written as 
\begin{equation}
 \R_{\nabla U(x)} v := v - 2\frac{\langle \nabla U(x), v \rangle_C}{ \lVert \nabla U(x) \rVert_C^2} \nabla U(x),
  \label{eqn:R_BPS}\end{equation}
where $\langle u, w \rangle_C := u^T C^{-1} v$.
This $\R_{\nabla U(x)}$ is the reflection operator used by the original BPS algorithm of \citet{peters2012rejection}.
Since
\[
\S_\tau^2(x,v) = (x-\R_x v \tau + \R_{x-\R_x v} \R_x v \tau, \, \R_{x-\R_x v} \R_x v),
\]
we have, in the case where $\tau$ is small such that $\nabla U(x-\R_x v \tau) \approx \nabla U(x)$,
\[
\S_\tau^2(x,v) \approx \left(x + 2 \frac{\langle \nabla U(x), v \rangle_C}{\lVert \nabla U(x) \rVert_C^2} \nabla U(x), \, v\right).
\]
Therefore, repeated application of the map $\S_\tau$ has an approximate net effect of moving the particles along the gradient of the log target density.

\subsection{Numerical examples}
\begin{figure}[t]
  \centering
  \begin{subfigure}[b]{\textwidth}
    \centering
    \includegraphics[width=.3\textwidth]{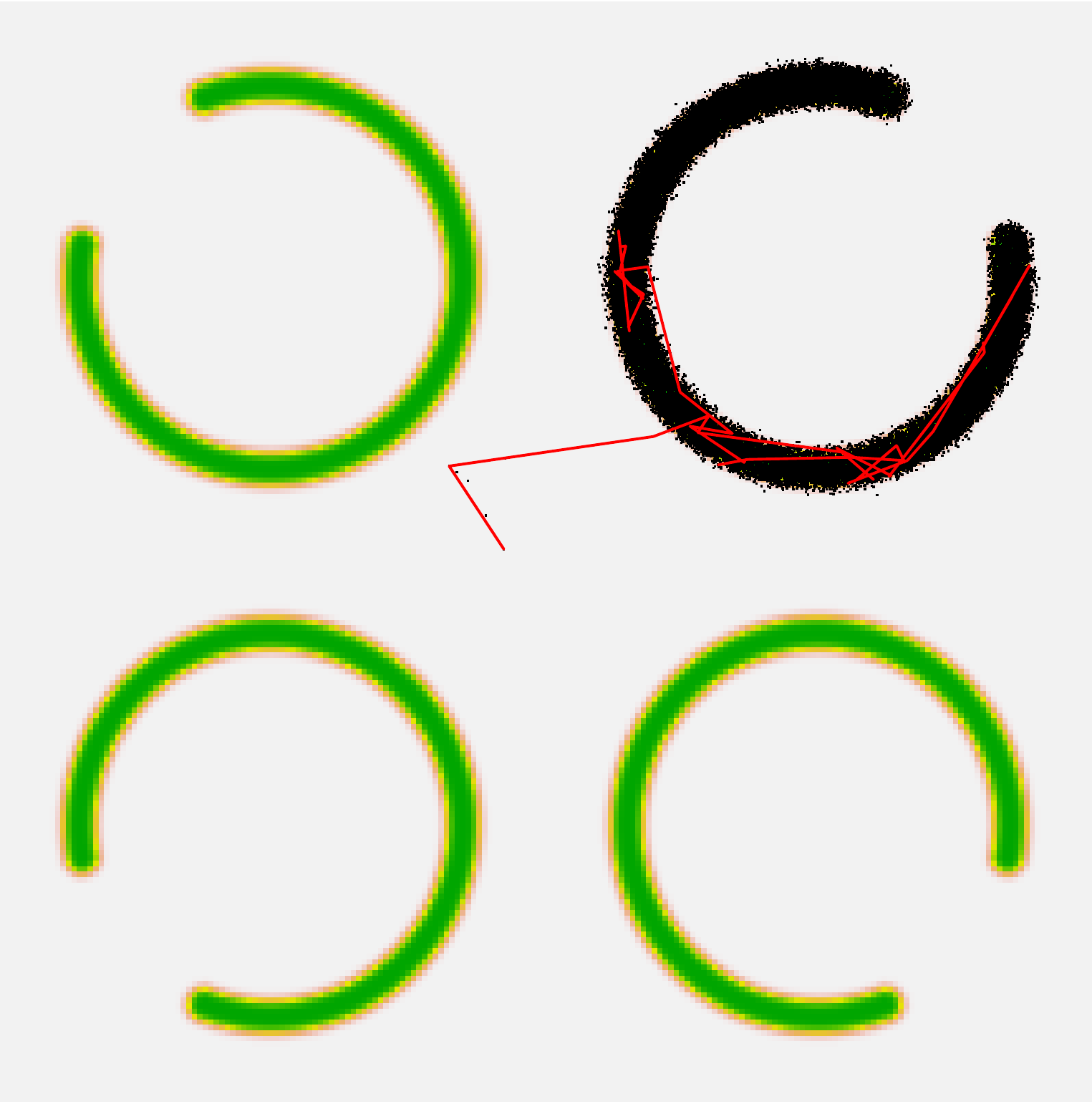} 
    \includegraphics[width=.3\textwidth]{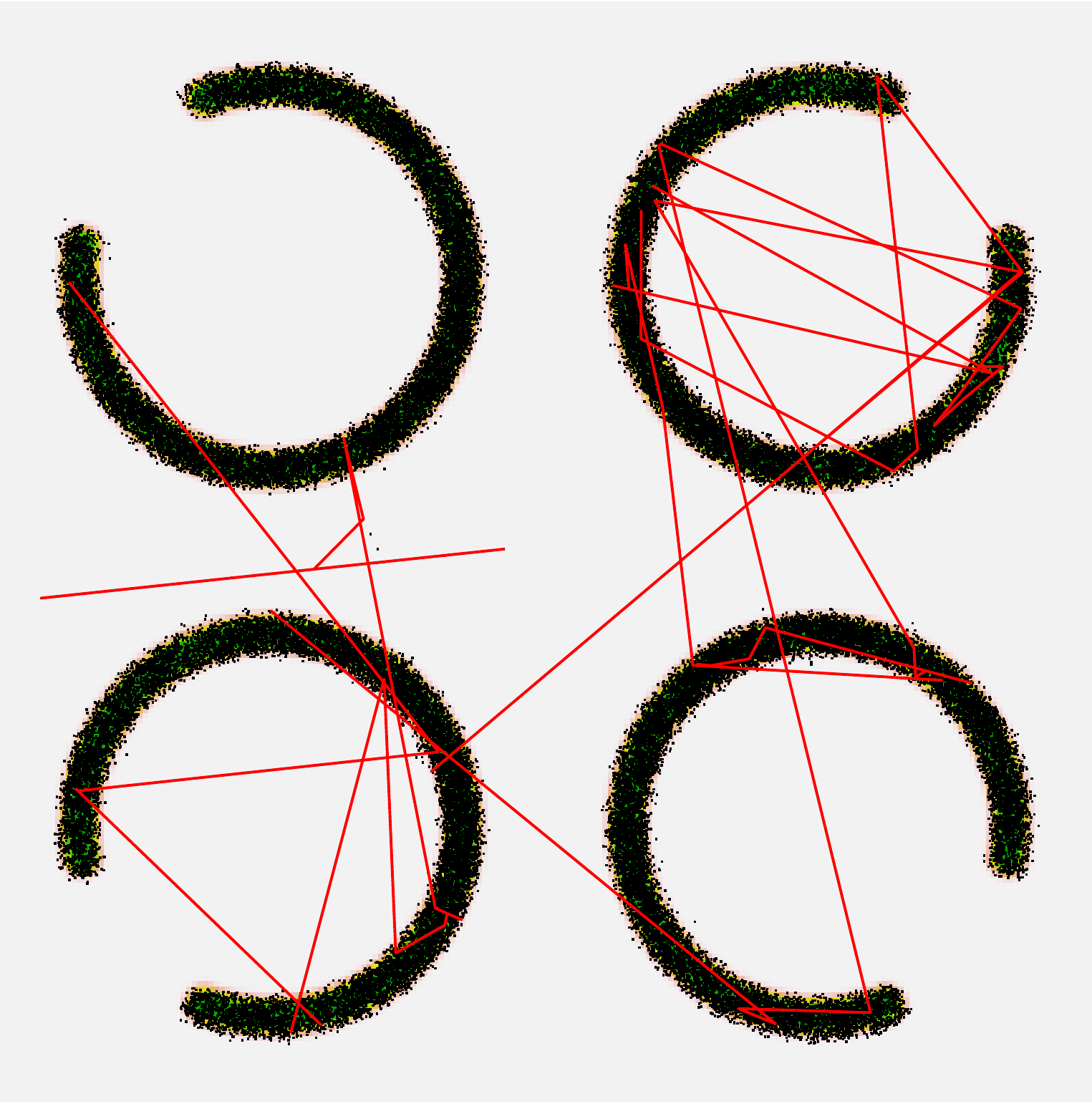} 
    \includegraphics[width=.3\textwidth]{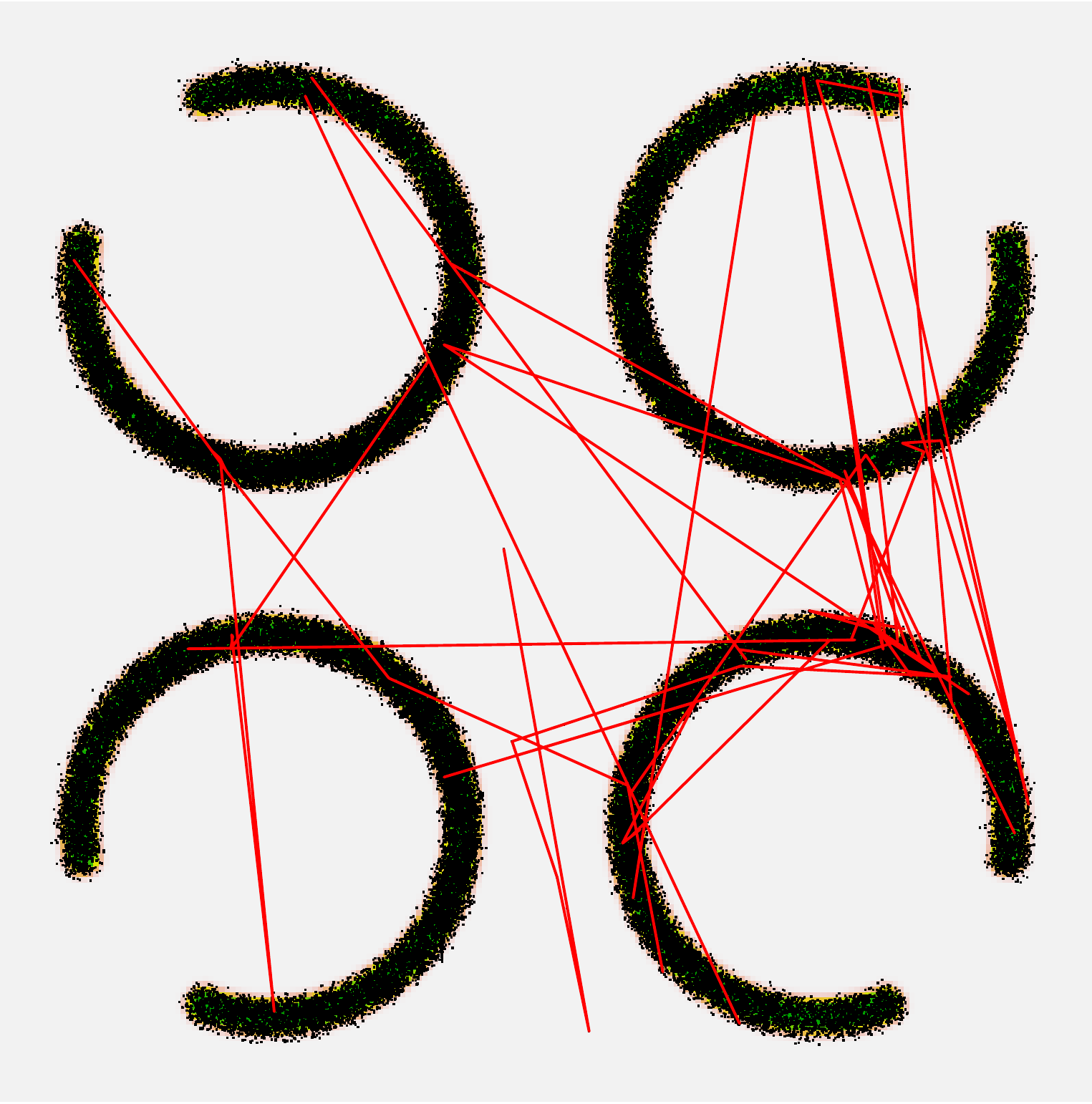} 
    \caption{The first 120,000 states in the constructed Markov chain are shown as black dots. The trajectory connecting every fourth point is shown by red segments up to one hundred points. Left, $N\sr=1$; middle, $N\sr=10$; right, $N\sr=20$.}
    \label{fig:4C_N_diagram}
  \end{subfigure}

  \vspace{1ex}
  \begin{subfigure}[b]{\textwidth}
    \centering
    \includegraphics[width=0.9\textwidth]{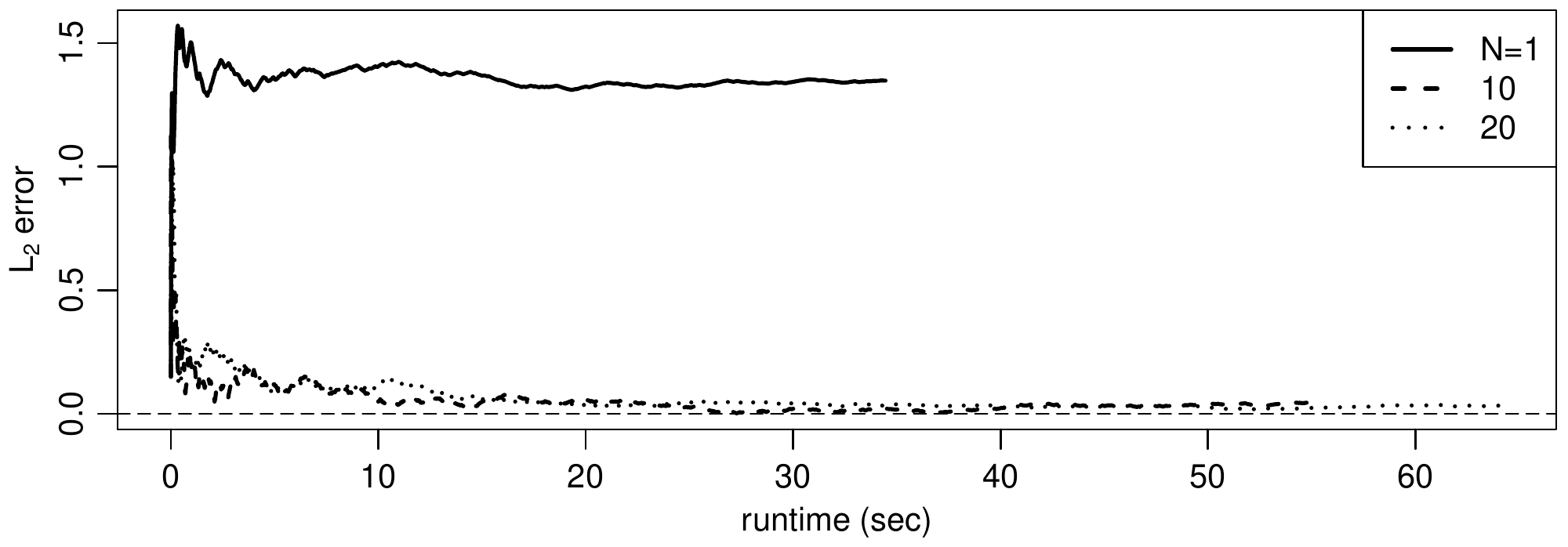}
    \caption{The distance between the sample mean and the center of the distribution as a function of runtime.}
    \label{fig:4C_N_convergence}
  \end{subfigure}
  \caption{Numerical results of the sequential-proposal discrete time BPS (Algorithm~\ref{alg:spdBPS}) for the model with four ``C''s.}
  \label{fig:4C_N}
\end{figure}
To graphically illustrate the performance of the sequential-proposal discrete time BPS (Algorithm~\ref{alg:spdBPS}), we created a target distribution defined on a unit square.
The regions of high likelihood density look like four open rings, or four rotated letters of ``C'', as shown in Figure~\ref{fig:4C_N}.
We applied the sequential-proposal discrete time BPS on this model with varying algorithmic parameters.
In every experiment, we ran the algorithm up to 120,000 iterations, where the number of acceptable proposals $L$ was fixed at one and the jump size $\tau$ at each iteration varied uniformly between $0.08$ and $0.12$.

Figure~\ref{fig:4C_N_diagram} shows the 120,000 sampled points as black dots.
The target density is represented by a color map on a green-white scale at the background.
Starting from the initial point, the trajectory of every fourth point is shown by red segments.
We varied the maximum number of proposals $N$ from one to ten and twenty.
We used the reflection operator $\R_x \sr= -\id$ and the velocity refreshment probability $p^\text{ref}\equiv0.1$ for this experiment.
In the case of $N\sr=1$, where no subsequent proposals were made if the first proposal was rejected, there was no jumps between ``C''s.
As we increased $N$ to ten and twenty, the jump between the ``C''s happened more frequently, and mixing happened faster.
Figure~\ref{fig:4C_N_convergence} shows the distances between the sample means of the constructed Markov chains and the center of the target distribution as a function of runtime in seconds.
The sample means clearly did not converge to the true mean when $N\sr=1$, but the sample means converged to the true mean with similar rates when $N\sr=10$ or $N\sr=20$.

\begin{figure}[t]
  \centering
  \begin{subfigure}[b]{\textwidth}
    \centering
    \includegraphics[width=.3\textwidth]{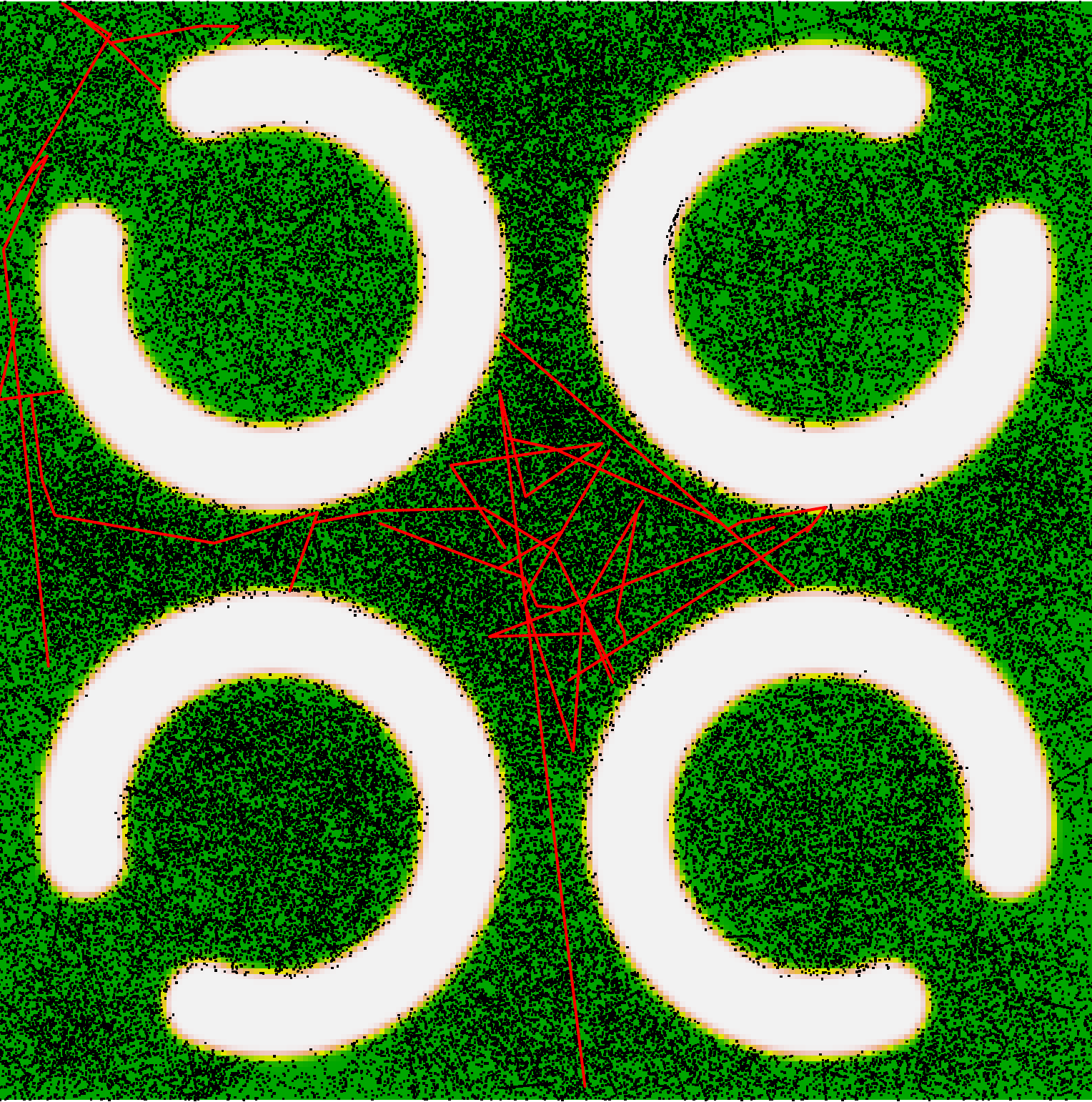} 
    \includegraphics[width=.3\textwidth]{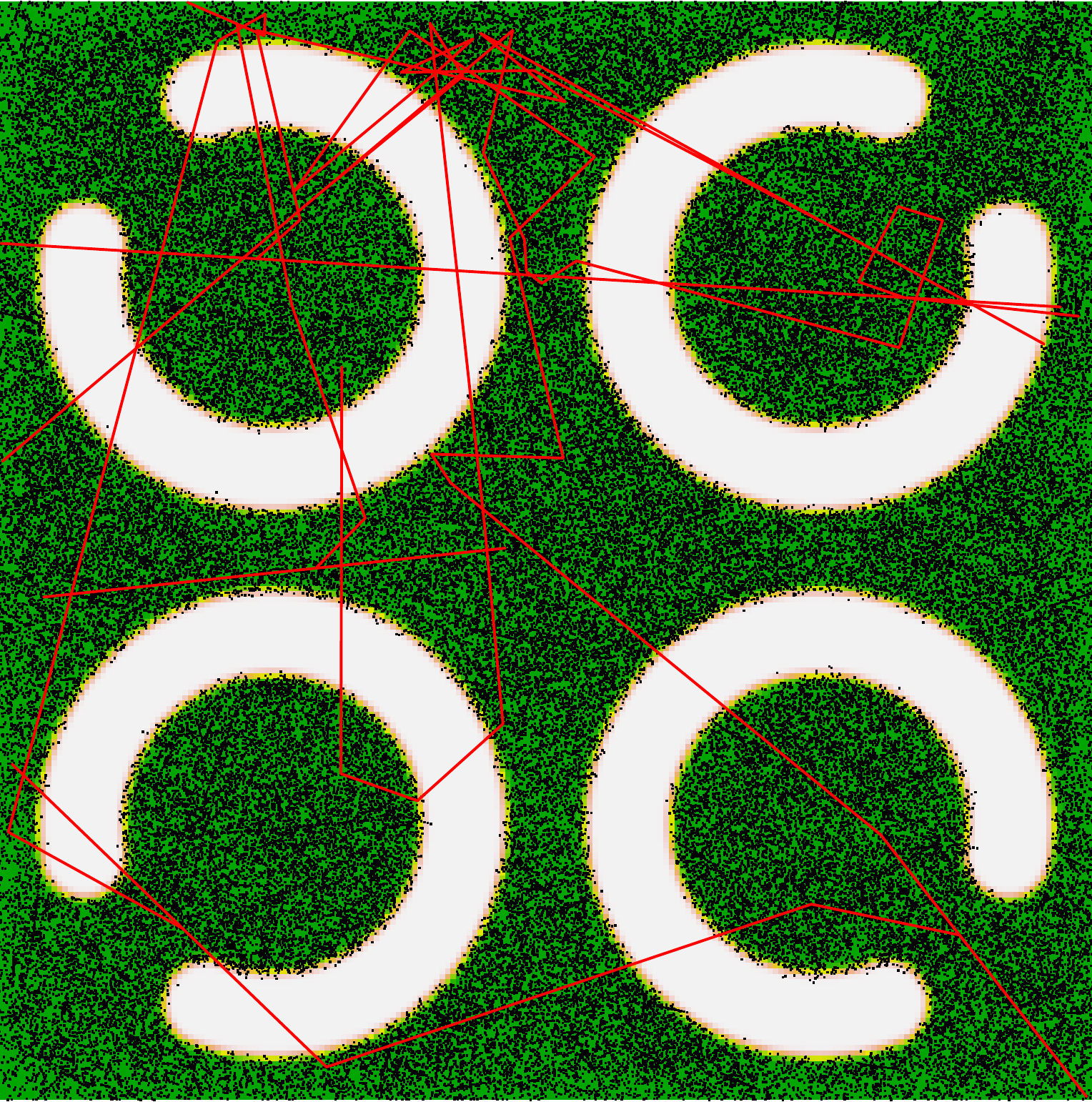} 
    \includegraphics[width=.3\textwidth]{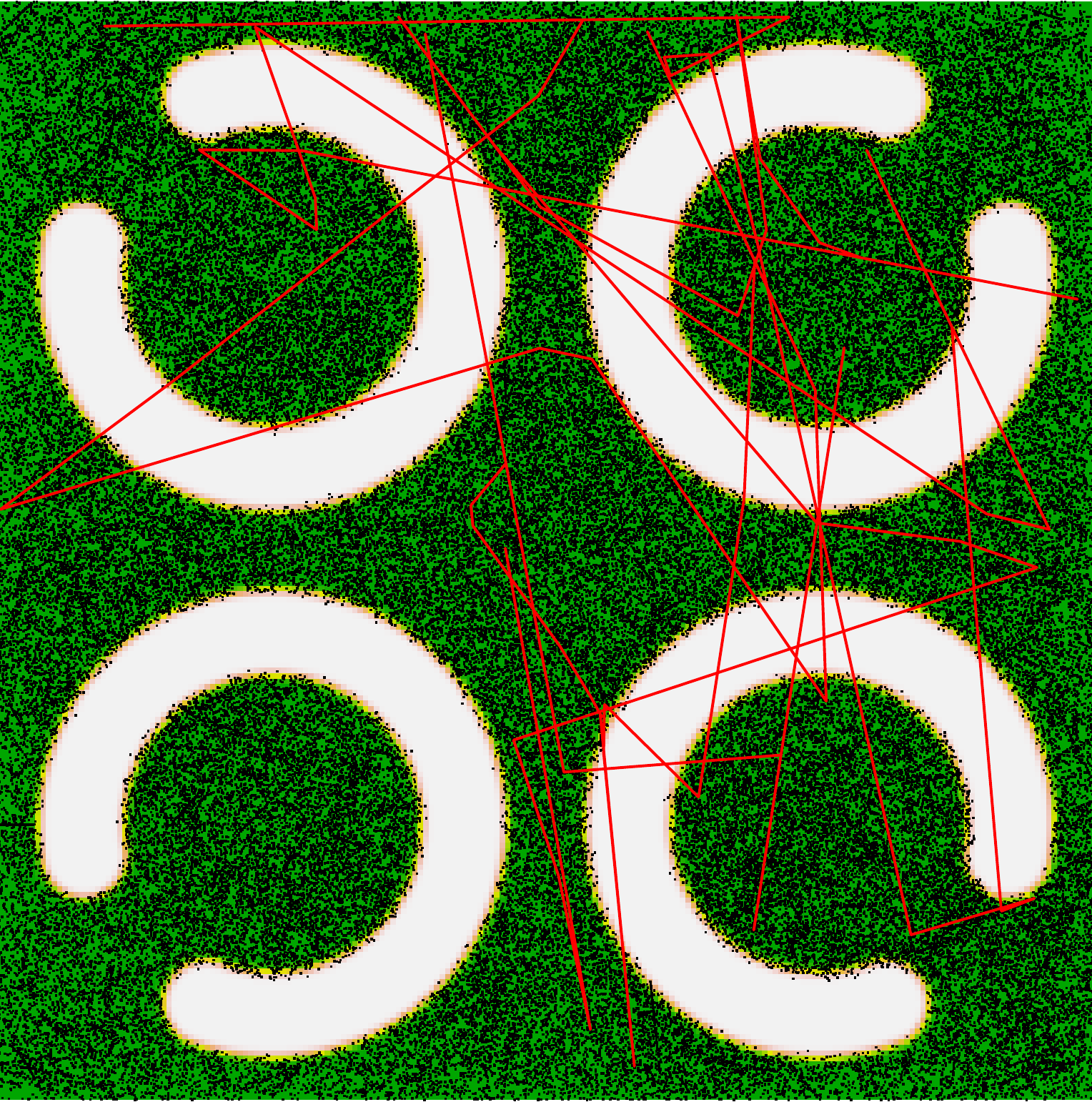} 
    \caption{The first 120,000 states in the constructed Markov chain are shown as black dots. The trajectory connecting every fourth point is shown by red segments up to one hundred points. Left, $N\sr=1$; middle, $N\sr=10$; right, $N\sr=20$.}
    \label{fig:inverted_4C_N_diagram}
  \end{subfigure}

  \vspace{1ex}
  \begin{subfigure}[b]{\textwidth}
    \centering
    \includegraphics[width=0.9\textwidth]{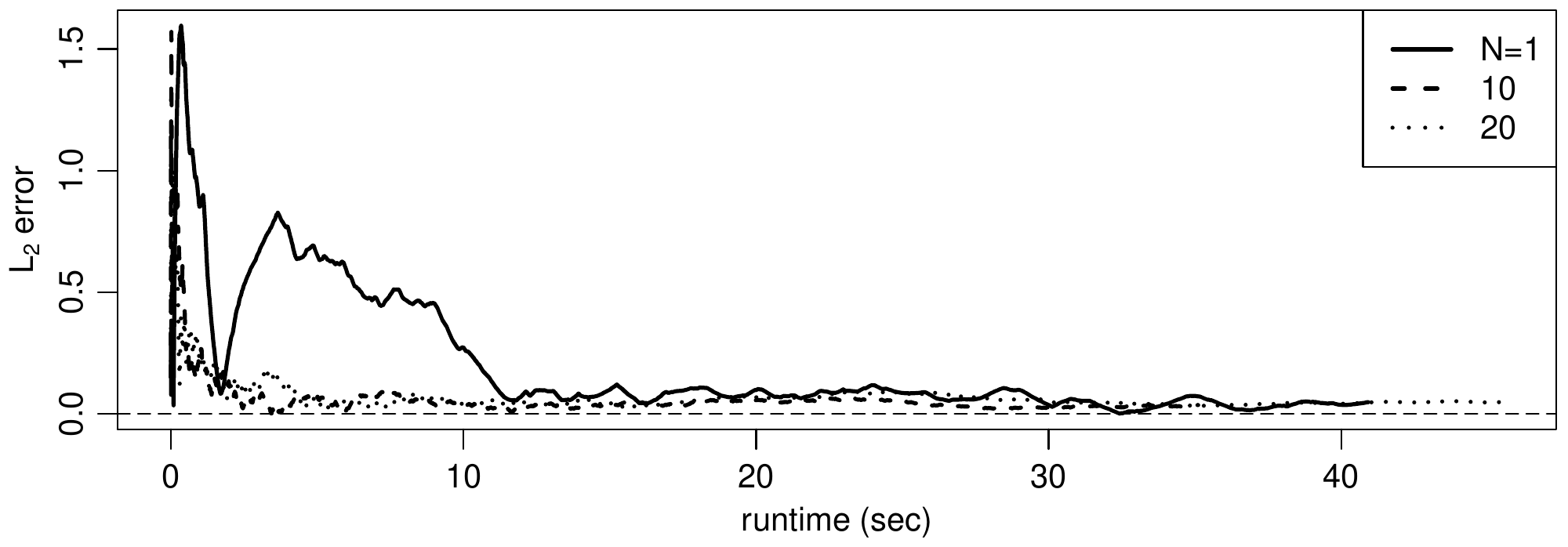} 
    \caption{The distance between the sample mean and the center of the distribution as a function of runtime.}
    \label{fig:inverted_4C_N_convergence}
  \end{subfigure}
  \caption{Numerical results of the sequential-proposal discrete time BPS (Algorithm~\ref{alg:spdBPS}) for the inverted model with four ``C''s.}
  \label{fig:inverted_4C_N}
\end{figure}
  
Figure~\ref{fig:inverted_4C_N} shows the same experiment, when the target density was inverted from the original model (i.e., the log target density was multiplied by $-1$).
The four ``C''s acted as barriers that were difficult for particles to pass through.
The velocity reflection operator $\R_x\sr= -\id$ was used, and $p^{\text{ref}}$ was 0.1.
For $N\sr=1$, there were no jumps across the barriers.
For $N\sr=10$ or $20$, however, the jumps across the barriers happened frequently.

\begin{figure}[t]
  \centering
  \begin{subfigure}[b]{\textwidth}
    \centering
    \begin{tabular}{lccc}
      & $\R_x = -\id$ &  $\R_x = \R_{\nabla U(x)}$ & mixture \\
      \parbox[b][16ex][t]{7ex}{$p^\text{ref}\sr=0$} &
      \includegraphics[width=.29\textwidth]{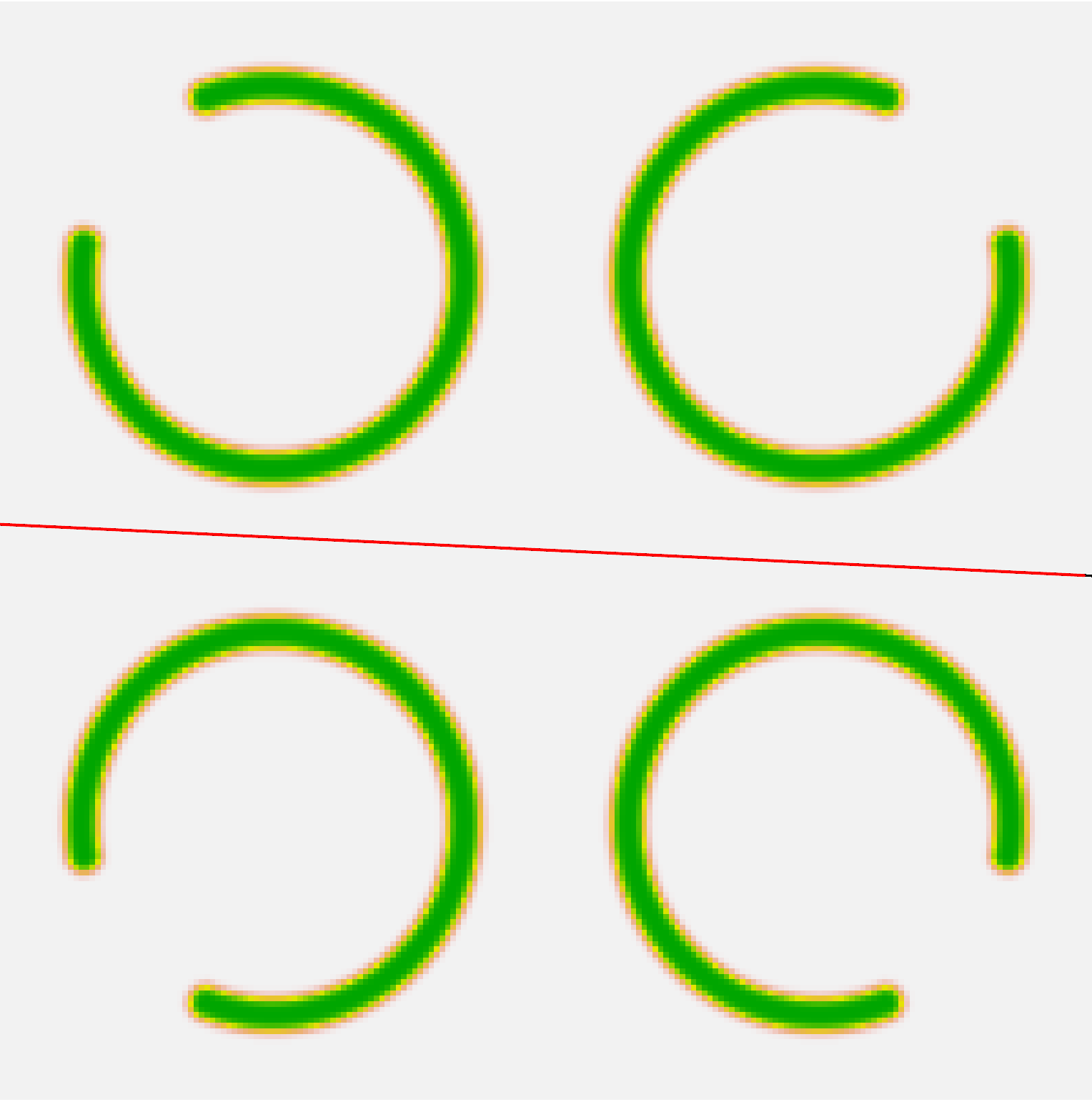}&
      \includegraphics[width=.29\textwidth]{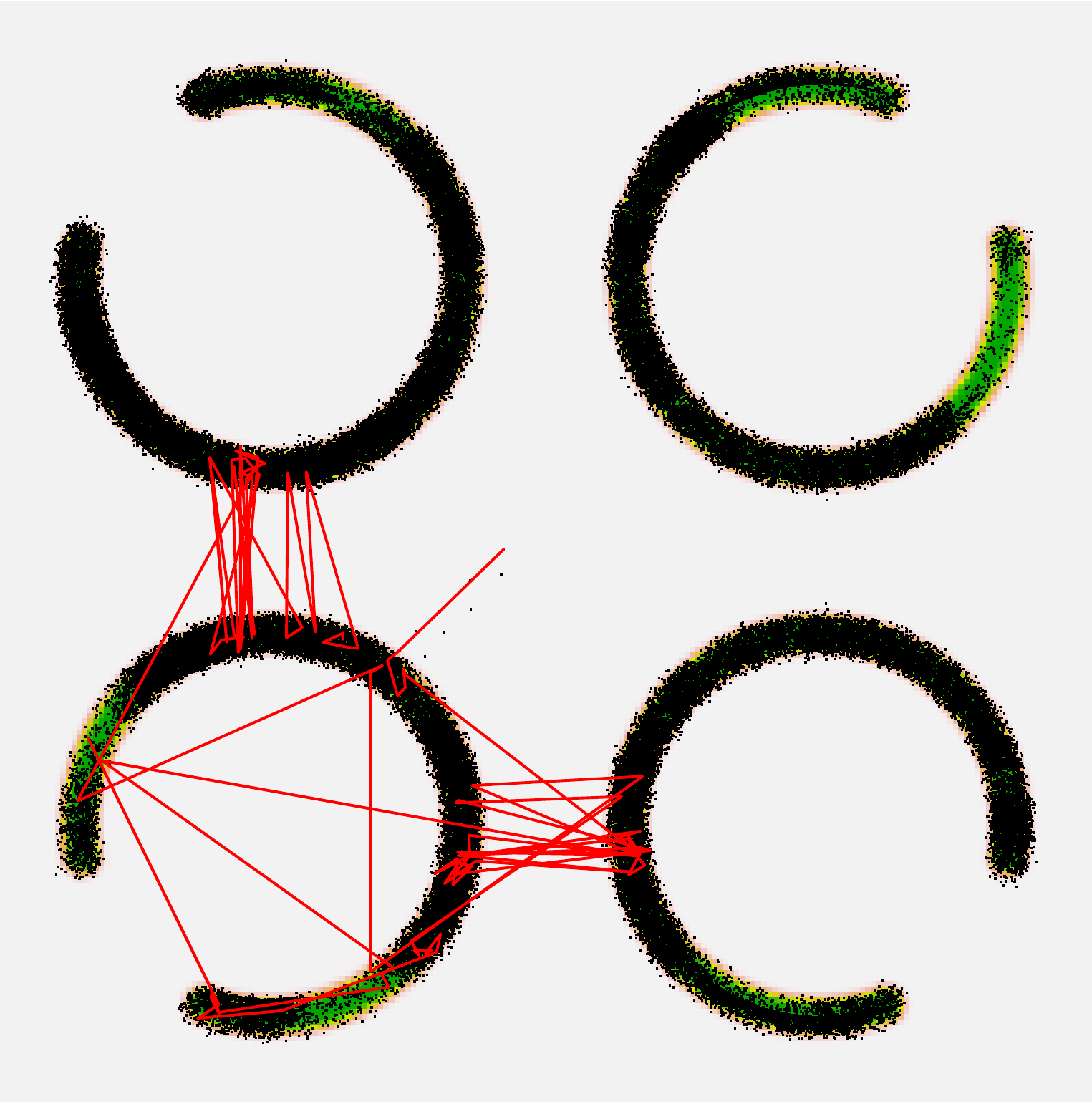}&
      \includegraphics[width=.29\textwidth]{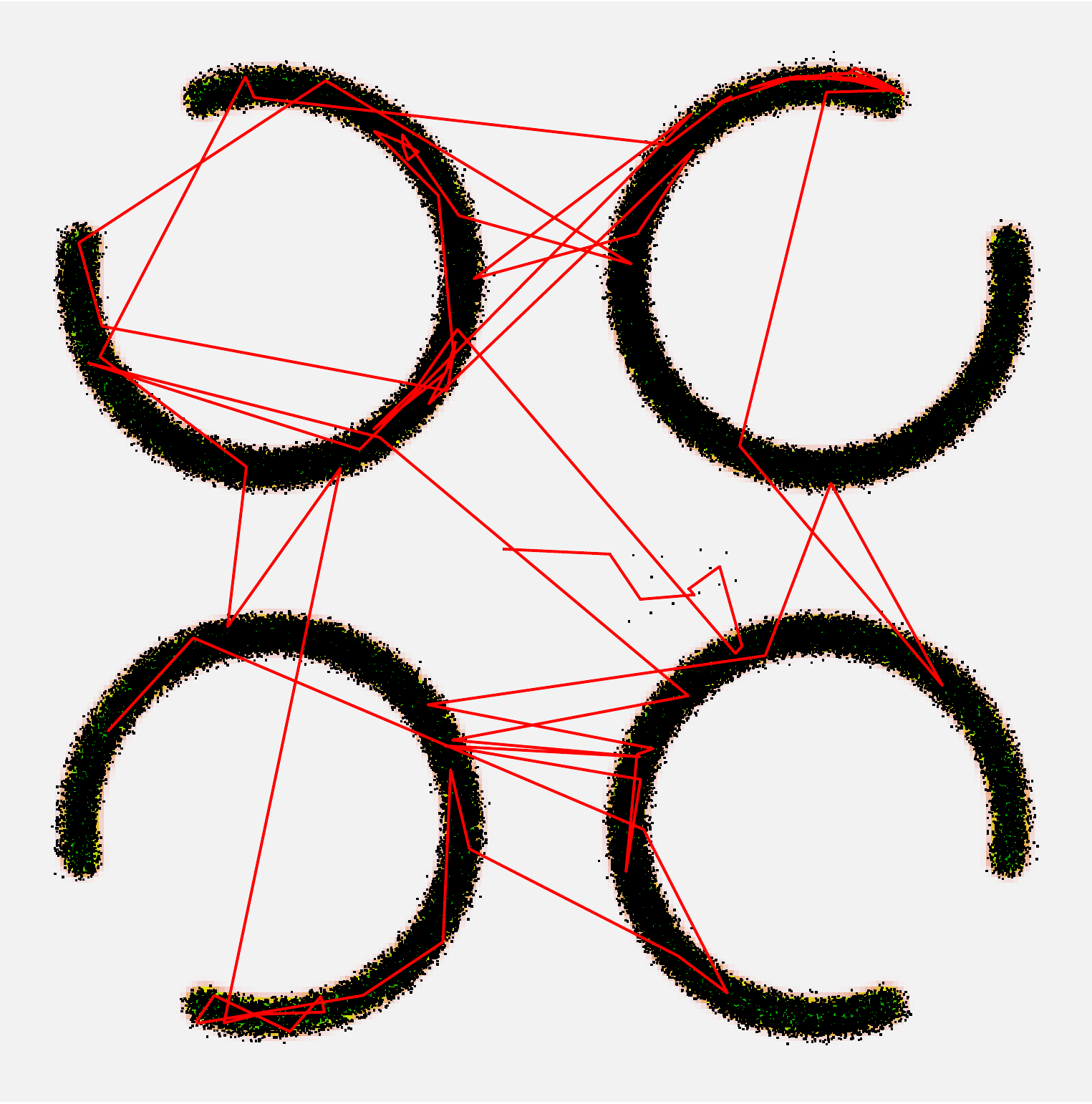}\\
      \parbox[b][16ex][t]{7ex}{$p^\text{ref}\sr=0.1$\\} &
      \includegraphics[width=.29\textwidth]{figures/C4model/{c4_in_BPS_N20L1pref0.1eps0.1Rtype_negId_M120000_0}.pdf}&
      \includegraphics[width=.29\textwidth]{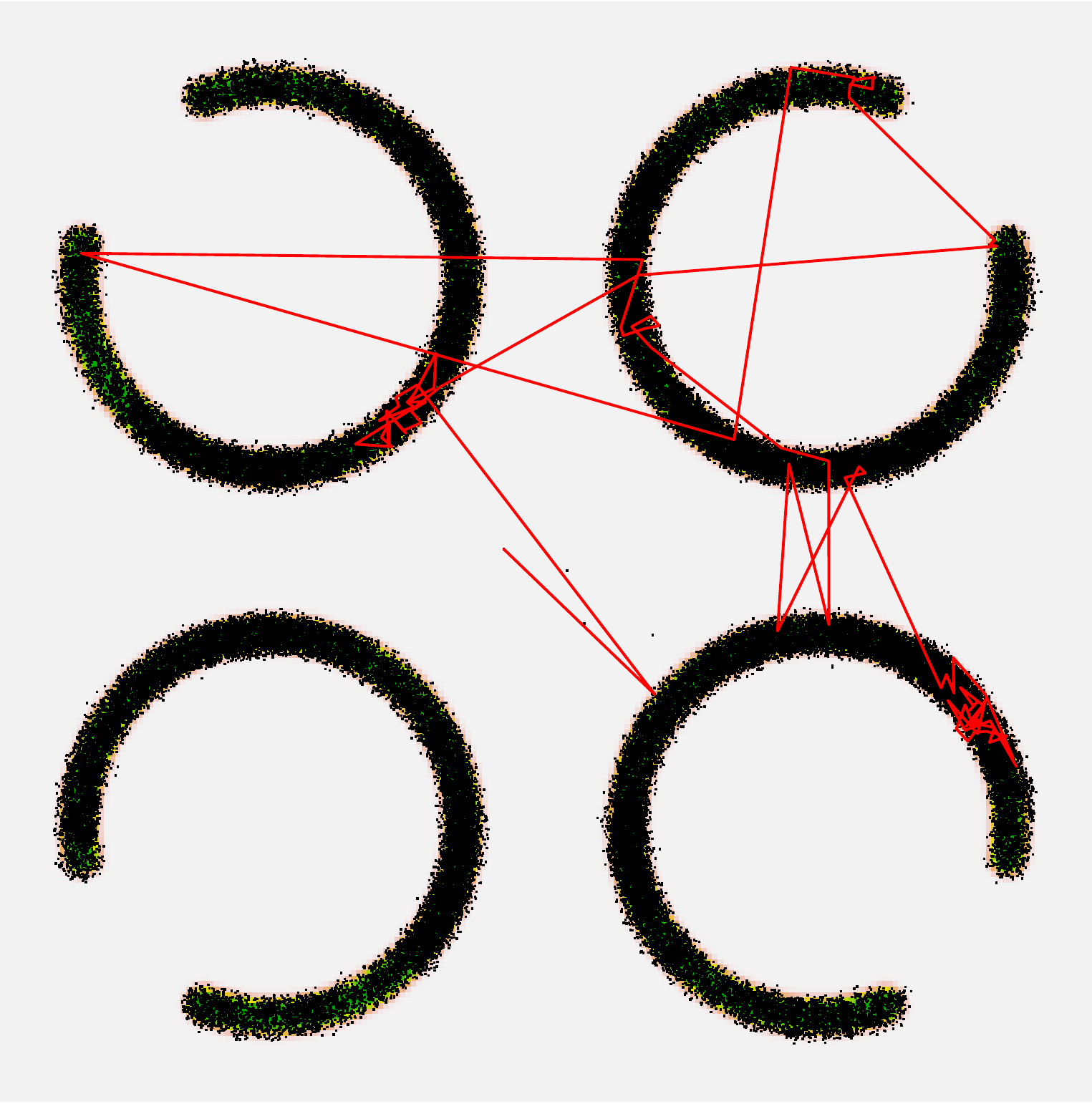}&
      \includegraphics[width=.29\textwidth]{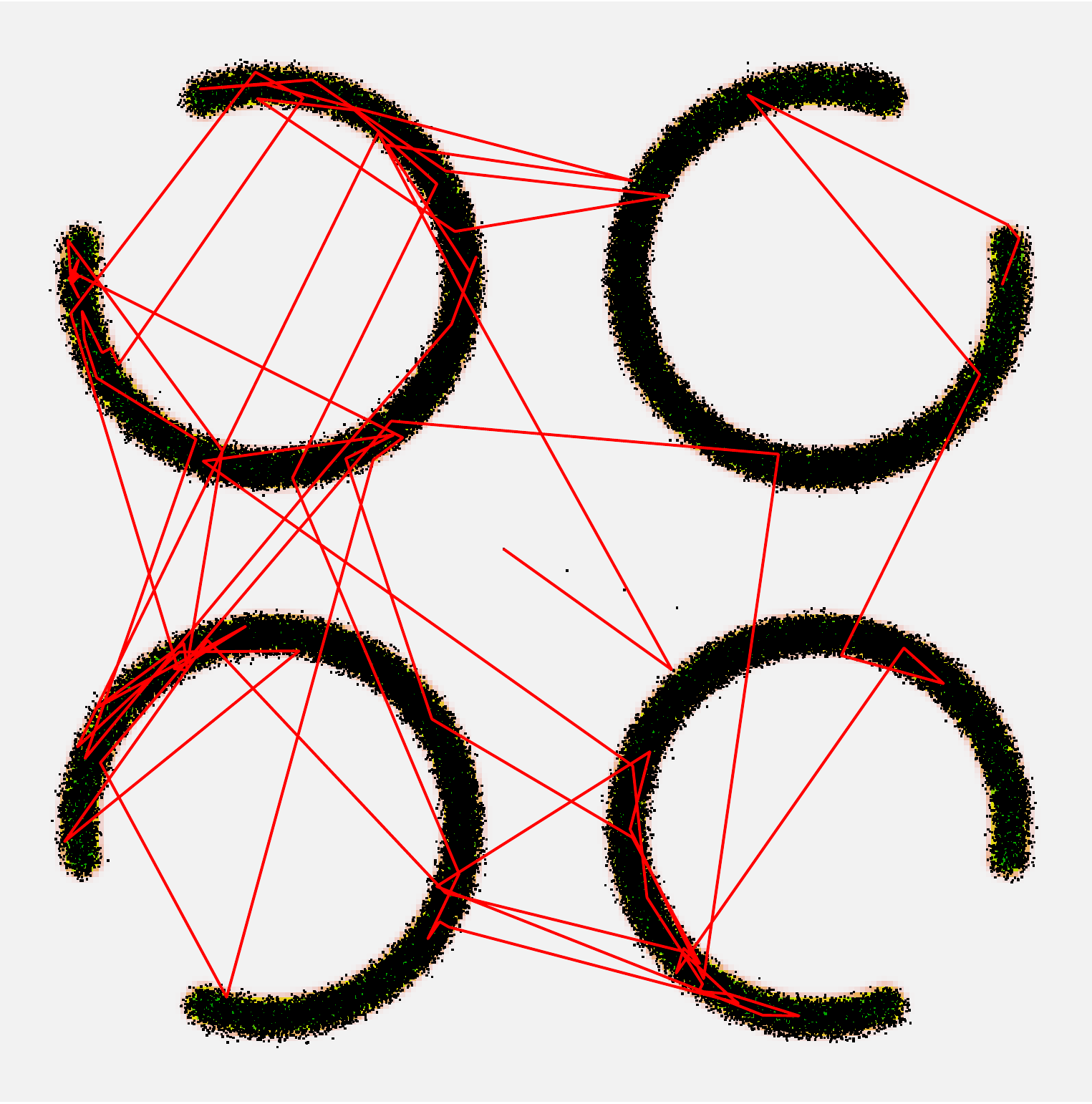}
    \end{tabular}
    \caption{The first 120,000 states in the constructed Markov chain are shown as black dots. The trajectory connecting every fourth point is shown by red segments up to one hundred points.}
    \label{fig:4C_refop_pref_diagram}
  \end{subfigure}

  \vspace{1ex}
  \begin{subfigure}[b]{\textwidth}
    \centering
    \includegraphics[width=0.9\textwidth]{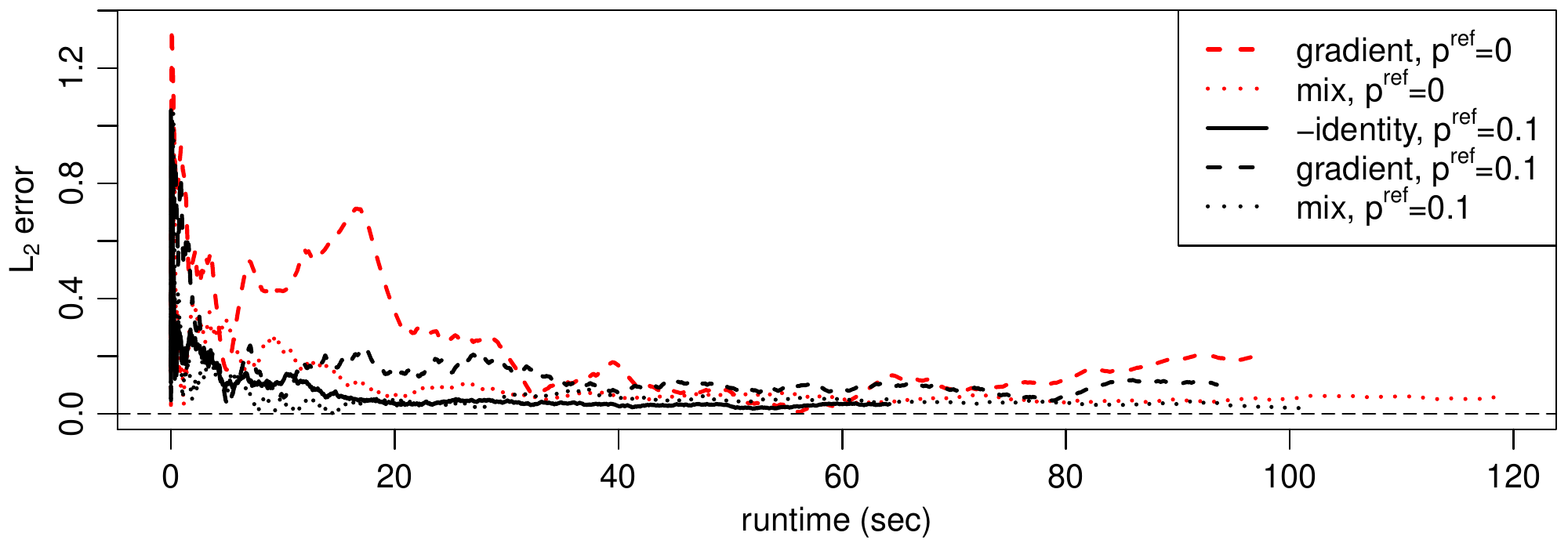}
    \caption{The distance between the sample mean and the center of the distribution as a function of runtime.
      In the legend, ``gradient'' means $\R_x \sr=\R_{\nabla U(x)}$, ``$-$identity'' means $\R_x \sr=-\id$, and ``mix'' indicates the case where these two operators are used with equal probability.}
    \label{fig:4C_refop_pref_convergence}
  \end{subfigure}
  \caption{Numerical results for the four ``C'' model with varying $p^\text{ref}$ and reflection operators.}
  \label{fig:4C_refop_pref}
\end{figure}
  
Figure~\ref{fig:4C_refop_pref} shows the numerical results for the original four ``C'' model for various choices of velocity reflection operator and velocity refreshment probability.
The maximum number of proposals $N$ was fixed at twenty.
The left column shows the results when the reflection operator $\R_x \sr= -\id$ was used.
In the middle column, the reflection operator $\R_{\nabla U(x)}$ defined in \eqref{eqn:R_BPS} was used.
In the right column, the reflection operator was randomly chosen between $-\id$ and $\R_{\nabla U(x)}$ with equal probability whenever the reflection operator was used by the algorithm.
The reflection operator $\R_x\sr= -\id$ resulted in a non-ergodic Markov chain when we did not refresh the velocity (top row, left).
When $p^\text{ref}=0.1$, the choice of $\R_{\nabla U(x)}$ resulted in a slower convergence to the target distribution compared to the case $\R_x \sr= -\id$.
The speed of convergence was improved if the algorithm used both $-\id$ and $\R_{\nabla U(x)}$ with equal probability.
From these results, we see that it is crucial to occasionally refresh the velocity for certain choices of $\R_x$ and that using a mixture of different velocity reflection operators can speed up the mixing of the Markov chain.

\clearpage


\begin{thebibliography}{49}
\expandafter\ifx\csname natexlab\endcsname\relax\def\natexlab#1{#1}\fi
\expandafter\ifx\csname url\endcsname\relax
  \def\url#1{\texttt{#1}}\fi
\expandafter\ifx\csname urlprefix\endcsname\relax\def\urlprefix{URL: }\fi

\bibitem[{Andrieu and Atchade(2007)}]{andrieu2007efficiency}
Andrieu, C. and Atchade, Y. (2007) On the efficiency of adaptive {MCMC}
  algorithms.
\newblock \textit{Electronic Communications in Probability}, \textbf{12},
  336--349.

\bibitem[{Andrieu and Livingstone(2019)}]{andrieu2019peskun}
Andrieu, C. and Livingstone, S. (2019) Peskun-{T}ierney ordering for {M}arkov
  chain and process {M}onte {C}arlo: beyond the reversible scenario.
\newblock \textit{arXiv preprint arXiv:1906.06197}.

\bibitem[{Andrieu and Moulines(2006)}]{andrieu2006ergodicity}
Andrieu, C. and Moulines, {\'E}. (2006) On the ergodicity properties of some
  adaptive {MCMC} algorithms.
\newblock \textit{The Annals of Applied Probability}, \textbf{16}, 1462--1505.

\bibitem[{Andrieu and Thoms(2008)}]{andrieu2008tutorial}
Andrieu, C. and Thoms, J. (2008) A tutorial on adaptive {MCMC}.
\newblock \textit{Statistics and Computing}, \textbf{18}, 343--373.

\bibitem[{Atchad{\'e} and Fort(2010)}]{atchade2010limit}
Atchad{\'e}, Y. and Fort, G. (2010) Limit theorems for some adaptive
  {M}{C}{M}{C} algorithms with subgeometric kernels.
\newblock \textit{Bernoulli}, \textbf{16}, 116--154.

\bibitem[{Atchade and Fort(2012)}]{atchade2012limit}
Atchade, Y.~F. and Fort, G. (2012) Limit theorems for some adaptive
  {M}{C}{M}{C} algorithms with subgeometric kernels: {P}art {II}.
\newblock \textit{Bernoulli}, \textbf{18}, 975--1001.

\bibitem[{Atchad{\'e} and Rosenthal(2005)}]{atchade2005adaptive}
Atchad{\'e}, Y.~F. and Rosenthal, J.~S. (2005) On adaptive {M}arkov chain
  {M}onte {C}arlo algorithms.
\newblock \textit{Bernoulli}, \textbf{11}, 815--828.

\bibitem[{Beskos et~al.(2013)Beskos, Pillai, Roberts, Sanz-Serna and
  Stuart}]{beskos2013optimal}
Beskos, A., Pillai, N., Roberts, G., Sanz-Serna, J.-M. and Stuart, A. (2013)
  Optimal tuning of the hybrid {M}onte {C}arlo algorithm.
\newblock \textit{Bernoulli}, \textbf{19}, 1501--1534.

\bibitem[{Bouchard-C{\^o}t{\'e} et~al.(2018)Bouchard-C{\^o}t{\'e}, Vollmer and
  Doucet}]{bouchard2018bouncy}
Bouchard-C{\^o}t{\'e}, A., Vollmer, S.~J. and Doucet, A. (2018) The bouncy
  particle sampler: A nonreversible rejection-free {M}arkov chain {M}onte
  {C}arlo method.
\newblock \textit{Journal of the American Statistical Association},
  \textbf{113}, 855--867.

\bibitem[{Calderhead(2014)}]{calderhead2014general}
Calderhead, B. (2014) A general construction for parallelizing
  {M}etropolis--{H}astings algorithms.
\newblock \textit{Proceedings of the National Academy of Sciences},
  \textbf{111}, 17408--17413.

\bibitem[{Campos and Sanz-Serna(2015)}]{campos2015extra}
Campos, C.~M. and Sanz-Serna, J. (2015) Extra chance generalized hybrid {M}onte
  {C}arlo.
\newblock \textit{Journal of Computational Physics}, \textbf{281}, 365--374.

\bibitem[{Dua and Graff(2017)}]{dua2019}
Dua, D. and Graff, C. (2017) {UCI} machine learning repository.
\newblock \urlprefix\url{http://archive.ics.uci.edu/ml}.

\bibitem[{Duane et~al.(1987)Duane, Kennedy, Pendleton and
  Roweth}]{duane1987hybrid}
Duane, S., Kennedy, A.~D., Pendleton, B.~J. and Roweth, D. (1987) Hybrid
  {M}onte {C}arlo.
\newblock \textit{Physics letters B}, \textbf{195}, 216--222.

\bibitem[{Fang et~al.(2014)Fang, Sanz-Serna and Skeel}]{fang2014compressible}
Fang, Y., Sanz-Serna, J.~M. and Skeel, R.~D. (2014) Compressible generalized
  hybrid {M}onte {C}arlo.
\newblock \textit{The Journal of chemical physics}, \textbf{140}, 174108.

\bibitem[{Geyer(1991)}]{geyer1991markov}
Geyer, C.~J. (1991) Markov chain {M}onte {C}arlo maximum likelihood.
\newblock \textit{Interface Foundation of North America}.
\newblock Retrieved from the University of Minnesota Digital Conservancy.

\bibitem[{Goodman and Weare(2010)}]{goodman2010ensemble}
Goodman, J. and Weare, J. (2010) Ensemble samplers with affine invariance.
\newblock \textit{Communications in applied mathematics and computational
  science}, \textbf{5}, 65--80.

\bibitem[{Green and Mira(2001)}]{green2001delayed}
Green, P.~J. and Mira, A. (2001) Delayed rejection in reversible jump
  {M}etropolis--{H}astings.
\newblock \textit{Biometrika}, \textbf{88}, 1035--1053.

\bibitem[{Gupta et~al.(1990)Gupta, Irb{\"a}c, Karsch and
  Petersson}]{gupta1990acceptance}
Gupta, S., Irb{\"a}c, A., Karsch, F. and Petersson, B. (1990) The acceptance
  probability in the hybrid {M}onte {C}arlo method.
\newblock \textit{Physics Letters B}, \textbf{242}, 437--443.

\bibitem[{Haario et~al.(2001)Haario, Saksman and Tamminen}]{haario2001adaptive}
Haario, H., Saksman, E. and Tamminen, J. (2001) An adaptive {M}etropolis
  algorithm.
\newblock \textit{Bernoulli}, \textbf{7}, 223--242.

\bibitem[{Haario et~al.(2005)Haario, Saksman and
  Tamminen}]{haario2005componentwise}
--- (2005) Componentwise adaptation for high dimensional {MCMC}.
\newblock \textit{Computational Statistics}, \textbf{20}, 265--273.

\bibitem[{Hastings(1970)}]{hastings1970monte}
Hastings, W.~K. (1970) {M}onte {C}arlo sampling methods using {M}arkov chains
  and their applications.
\newblock \textit{Biometrika}, \textbf{57}, 97--109.

\bibitem[{Hoffman and Gelman(2014)}]{hoffman2014no}
Hoffman, M.~D. and Gelman, A. (2014) The {N}o-{U}-turn sampler: adaptively
  setting path lengths in {H}amiltonian {M}onte {C}arlo.
\newblock \textit{Journal of Machine Learning Research}, \textbf{15},
  1593--1623.

\bibitem[{Horowitz(1991)}]{horowitz1991generalized}
Horowitz, A.~M. (1991) A generalized guided {M}onte {C}arlo algorithm.
\newblock \textit{Physics Letters B}, \textbf{268}, 247--252.

\bibitem[{Hukushima and Nemoto(1996)}]{hukushima1996exchange}
Hukushima, K. and Nemoto, K. (1996) Exchange {M}onte {C}arlo method and
  application to spin glass simulations.
\newblock \textit{Journal of the Physical Society of Japan}, \textbf{65},
  1604--1608.

\bibitem[{Kou et~al.(2006)Kou, Zhou, Wong et~al.}]{kou2006equi}
Kou, S., Zhou, Q., Wong, W.~H. et~al. (2006) Equi-energy sampler with
  applications in statistical inference and statistical mechanics.
\newblock \textit{The Annals of Statistics}, \textbf{34}, 1581--1619.

\bibitem[{Leimkuhler and Reich(2004)}]{leimkuhler2004simulating}
Leimkuhler, B. and Reich, S. (2004) \textit{Simulating {H}amiltonian dynamics},
  vol.~14.
\newblock Cambridge university press.

\bibitem[{Liouville(1838)}]{liouville1838note}
Liouville, J. (1838) Note on the theory of the variation of arbitrary
  constants.
\newblock \textit{Journal de Math\'{e}matiques Pures et Appliqu\'{e}es},
  \textbf{3}, 342--349.

\bibitem[{Liu et~al.(2000)Liu, Liang and Wong}]{liu2000multiple}
Liu, J.~S., Liang, F. and Wong, W.~H. (2000) The multiple-try method and local
  optimization in {M}etropolis sampling.
\newblock \textit{Journal of the American Statistical Association},
  \textbf{95}, 121--134.

\bibitem[{Marinari and Parisi(1992)}]{marinari1992simulated}
Marinari, E. and Parisi, G. (1992) Simulated tempering: a new {M}onte {C}arlo
  scheme.
\newblock \textit{EPL (Europhysics Letters)}, \textbf{19}, 451.

\bibitem[{Metropolis et~al.(1953)Metropolis, Rosenbluth, Rosenbluth, Teller and
  Teller}]{metropolis1953equation}
Metropolis, N., Rosenbluth, A.~W., Rosenbluth, M.~N., Teller, A.~H. and Teller,
  E. (1953) Equation of state calculations by fast computing machines.
\newblock \textit{Journal of Chemical Physics}, \textbf{21}, 1087--1092.

\bibitem[{Mira et~al.(2001{\natexlab{a}})Mira, M{\o}ller and
  Roberts}]{mira2001perfect}
Mira, A., M{\o}ller, J. and Roberts, G.~O. (2001{\natexlab{a}}) Perfect slice
  samplers.
\newblock \textit{Journal of the Royal Statistical Society: Series B
  (Statistical Methodology)}, \textbf{63}, 593--606.

\bibitem[{Mira et~al.(2001{\natexlab{b}})}]{mira2001metropolis}
Mira, A. et~al. (2001{\natexlab{b}}) On metropolis-hastings algorithms with
  delayed rejection.
\newblock \textit{Metron}, \textbf{59}, 231--241.

\bibitem[{Neal(2011)}]{neal2011mcmc}
Neal, R. (2011) {M}{C}{M}{C} using {H}amiltonian dynamics.
\newblock \textit{Handbook of {M}arkov Chain {M}onte {C}arlo}, \textbf{2}.

\bibitem[{Neal(1994)}]{neal1994improved}
Neal, R.~M. (1994) An improved acceptance procedure for the hybrid {M}onte
  {C}arlo algorithm.
\newblock \textit{Journal of Computational Physics}, \textbf{111}, 194--203.

\bibitem[{Neal et~al.(2003)}]{neal2003slice}
Neal, R.~M. et~al. (2003) Slice sampling.
\newblock \textit{The Annals of Statistics}, \textbf{31}, 705--767.

\bibitem[{Peskun(1973)}]{peskun1973optimum}
Peskun, P.~H. (1973) Optimum {M}onte-{C}arlo sampling using markov chains.
\newblock \textit{Biometrika}, \textbf{60}, 607--612.

\bibitem[{Peters et~al.(2012)}]{peters2012rejection}
Peters, E.~A. et~al. (2012) Rejection-free {M}onte {C}arlo sampling for general
  potentials.
\newblock \textit{Physical Review E}, \textbf{85}, 026703.

\bibitem[{Plummer et~al.(2006)Plummer, Best, Cowles and
  Vines}]{plummer2006coda}
Plummer, M., Best, N., Cowles, K. and Vines, K. (2006) {CODA}: convergence
  diagnosis and output analysis for {MCMC}.
\newblock \textit{R news}, \textbf{6}, 7--11.

\bibitem[{{R Core Team}(2018)}]{R}
{R Core Team} (2018) \textit{R: A Language and Environment for Statistical
  Computing}.
\newblock R Foundation for Statistical Computing, Vienna, Austria.
\newblock \urlprefix\url{https://www.R-project.org/}.

\bibitem[{Roberts et~al.(1997)Roberts, Gelman and Gilks}]{roberts1997weak}
Roberts, G.~O., Gelman, A. and Gilks, W.~R. (1997) Weak convergence and optimal
  scaling of random walk {M}etropolis algorithms.
\newblock \textit{The Annals of Applied Probability}, \textbf{7}, 110--120.

\bibitem[{Roberts and Rosenthal(1998)}]{roberts1998optimal}
Roberts, G.~O. and Rosenthal, J.~S. (1998) Optimal scaling of discrete
  approximations to {L}angevin diffusions.
\newblock \textit{Journal of the Royal Statistical Society: Series B
  (Statistical Methodology)}, \textbf{60}, 255--268.

\bibitem[{Roberts and Rosenthal(1999)}]{roberts1999convergence}
--- (1999) Convergence of slice sampler {M}arkov chains.
\newblock \textit{Journal of the Royal Statistical Society: Series B
  (Statistical Methodology)}, \textbf{61}, 643--660.

\bibitem[{Roberts and Rosenthal(2007)}]{roberts2007coupling}
--- (2007) Coupling and ergodicity of adaptive {M}arkov chain {M}onte {C}arlo
  algorithms.
\newblock \textit{Journal of Applied Probability}, \textbf{44}, 458--475.

\bibitem[{Sexton and Weingarten(1992)}]{sexton1992hamiltonian}
Sexton, J. and Weingarten, D. (1992) Hamiltonian evolution for the hybrid
  {M}onte {C}arlo algorithm.
\newblock \textit{Nuclear Physics B}, \textbf{380}, 665--677.

\bibitem[{Sherlock et~al.(2010)Sherlock, Fearnhead and
  Roberts}]{sherlock2010random}
Sherlock, C., Fearnhead, P. and Roberts, G.~O. (2010) The random walk
  {M}etropolis: Linking theory and practice through a case study.
\newblock \textit{Statistical Science}, \textbf{25}, 172--190.

\bibitem[{Sohl-Dickstein et~al.(2014)Sohl-Dickstein, Mudigonda and
  DeWeese}]{sohl2014hamiltonian}
Sohl-Dickstein, J., Mudigonda, M. and DeWeese, M.~R. (2014) {H}amiltonian
  {M}onte {C}arlo without detailed balance.
\newblock \textit{arXiv preprint arXiv:1409.5191}.

\bibitem[{Tierney(1998)}]{tierney1998note}
Tierney, L. (1998) A note on {M}etropolis-{H}astings kernels for general state
  spaces.
\newblock \textit{The Annals of Applied Probability}, \textbf{8}, 1--9.

\bibitem[{Tierney and Mira(1999)}]{tierney1999some}
Tierney, L. and Mira, A. (1999) Some adaptive {M}onte {C}arlo methods for
  {B}ayesian inference.
\newblock \textit{Statistics in Medicine}, \textbf{18}, 2507--2515.

\bibitem[{Vanetti et~al.(2017)Vanetti, Bouchard-C{\^o}t{\'e}, Deligiannidis and
  Doucet}]{vanetti2017piecewise}
Vanetti, P., Bouchard-C{\^o}t{\'e}, A., Deligiannidis, G. and Doucet, A. (2017)
  Piecewise deterministic {M}arkov chain {M}onte {C}arlo.
\newblock \textit{arXiv preprint arXiv:1707.05296}.

\end{thebibliography}
\end{document}